\numberwithin{equation}{section}
\newtheorem{theorem}{Theorem}[section]
\newtheorem{corollary}[theorem]{Corollary}
\newtheorem{lemma}[theorem]{Lemma}
\newtheorem{proposition}[theorem]{Proposition}
\theoremstyle{remark}
\newtheorem{remark}[theorem]{Remark}
\newtheorem{example}[theorem]{Example}
\newcommand\R{{\mathbb R}}
\newcommand\X{{\R^d}}
\newcommand\N{{\mathbb N}}
\newcommand\F{{\mathcal F}}
\newcommand\B{{\mathcal B}}
\newcommand\Bc{\B_{\mathrm{b}}}
\newcommand\Bbs{B_{\mathrm{bs}}}
\newcommand\Fc{{\F_{\mathrm{cyl}}}}
\renewcommand\L{{\mathcal L}}
\newcommand\K{{\mathcal K}}
\renewcommand\a{{\alpha}}
\newcommand\aC{{\a C}}
\newcommand\La{\Lambda}
\newcommand\la{\lambda}
\newcommand\Ga{\Gamma}
\newcommand\ga{\gamma}
\newcommand\eps{\varepsilon}
\newcommand{\1}{1\!\!1}
\newcommand\n{{|\eta|}}
\newcommand\ren{{\eps, \, \mathrm{ren}}}
\newcommand\lv{\left\vert}
\newcommand\rv{\right\vert}
\newcommand\lV{\left\Vert}
\newcommand\rV{\right\Vert}
\newcommand\lu{\left\langle}
\newcommand\ru{\right\rangle}
\newcommand\llu{\lu\!\lu}
\newcommand\rru{\ru\!\ru}
\newcommand\lluu{\lu\!\!\lu}
\newcommand\rruu{\ru\!\!\ru}
\newcommand\KK{\overline{\K_\aC}}
\newcommand\hT{{\hat{T}}}
\newcommand\hQ{{\hat{Q}}}
\newcommand\hP{{\hat{P}}}
\newcommand\hL{{\hat{L}}}
\newcommand\goto{\rightarrow}
\newcommand\hLrenadj{\hL_\ren^\ast}
\newcommand\adot{{\odot\a}}
\newcommand\esssup{\mathop{\mathrm{ess\,sup}}}
\newcommand\e{{(\eps)}}
\begin{document}

\title{Vlasov scaling for the Glauber dynamics in~continuum}

\author{Dmitri Finkelshtein\thanks{Institute of Mathematics,
National Academy of Sciences of Ukraine, Kyiv, Ukraine ({\tt
fdl@imath.kiev.ua}).} \and Yuri Kondratiev\thanks{Fakult\"{a}t
f\"{u}r Mathematik, Universit\"{a}t Bielefeld, 33615 Bielefeld,
Germany ({\tt kondrat@math.uni-bielefeld.de})} \and Oleksandr
Kutoviy\thanks{Fakult\"{a}t f\"{u}r Mathematik, Universit\"{a}t
Bielefeld, 33615 Bielefeld, Germany ({\tt
kutoviy@math.uni-bielefeld.de}).}}

\date{}

\maketitle

\begin{abstract}
We consider Vlasov-type scaling for the Glauber dynamics in
continuum with a positive integrable potential, and construct
rescaled and limiting evolutions of correlation functions.
Convergence to the limiting evolution for the positive density
system in infinite volume is shown. Chaos preservation property of
this evolution gives a possibility to derive a non-linear
Vlasov-type equation for the particle density of the limiting
system.
\end{abstract}

\section{Introduction}

Kinetic equations are a useful approximation for the description of
dynamical processes in multi-body systems, see, e.g., the reviews by
H.Spohn \cite{Spo1980}, \cite{Spo1991}. Among them, the Vlasov
equation has important role in physics (in particular, physics of
plasma). It describes the Hamiltonian motion of an infinite particle
system in the mean field scaling limit when the influence of weak
long-range forces is taken into account. The convergence of the
Vlasov scaling limit was shown rigorously by W.Braun and K.Hepp
\cite{BH1977} (for the Hamiltonian dynamics) and by R.L.Dobrushin
\cite{Dob1979} (for more general deterministic dynamical systems).
However, the resulting Vlasov-type equations for particle densities
are considered in classes of integrable functions (or, in the weak
form, of finite measures). This, in fact, restricts us to the case
of finite volume systems or systems with zero mean density in an
infinite volume. Detailed analysis of Vlasov-type equations for
integrable functions is presented in the recent paper by V.V.Kozlov
\cite{Koz2008}.

In \cite{FKK2010a}, we proposed a general approach to study the
Vlasov-type scaling for some classes of stochastic evolutions in the
continuum, in particular, for spatial birth-and-death Markov
processes. The approaches mentioned above are not applicable to
these dynamics (even in a finite volume) due to essential reasons
(see \cite{FKK2010a} for details). One of them is a possible
variation of the particle number during the evolution. More
essentially is that for these processes the possibility of their
descriptions in terms of proper stochastic evolutional equations for
particle motion is, generally speaking, absent. There are only few
works concerning general spatial birth-and-death evolutions, see
\cite{Pre1975}, \cite{HS1978}, \cite{GK2006}, \cite{GK2008},
\cite{Pen2008}, \cite{Qi2008}. However, the conditions for the
existence (in different senses) of the evolutions considered therein
are quite far from the general form.

Therefore, we looked for an alternative approach to the derivation
of kinetic Vlasov-type equations from stochastic dynamics. The
correct Vlasov limit can be easily guessed from the BBGKY hierarchy
for the Hamiltonian system, see, e.g., \cite{Spo1980}. Such a
heuristic derivation does not assume the integrability condition for
the density, but until now, it could not be made rigorously due to
the lack of detailed information about the properties of solutions
to the BBGKY hierarchy. Our approach is based on this observation
applied in a new dynamical framework. Note that we already know that
many stochastic evolutions in continuum admit effective descriptions
in terms of hierarchical equations for correlation functions which
generalize the BBGKY hierarchy from Hamiltonian to Markov setting,
see, e.g., \cite{FKO2009} and the references therein. Even more,
these hierarchical equations are often the only available technical
tools for a construction of considered dynamics \cite{KKM2008},
\cite{KKZ2006}, \cite{FKK2009}.

Developing this point of view, our scheme for the Vlasov scaling of
stochastic dynamics is based on the proper scaling of the
hierarchical equations. This scheme has also a clear interpretation
in the terms of scaled Markov generators. An application of the
considered scaling leads to the limiting hierarchy which posses a
chaos preservation property. Namely, if we start from a Poissonian
(non-homogeneous) initial state of the system, then during the time
evolution this property will be preserved. Moreover, a special
structure of the interaction in the resulting virtual Vlasov system
gives a non-linear evolutional equation for the density of the
evolving Poisson state.

The control of the convergence of Vlasov scalings for the considered
hierarchies is a quite difficult technical problem which should be
analyzed for any particular model separately.
 In the present paper,
we solve this problem for the Glauber dynamics in continuum. These
dynamics have given reversible states which are grand canonical
Gibbs measures. The corresponding equilibrium dynamics which
preserve the initial Gibbs state in the time evolution were
considered in, e.g., \cite{KL2005}, \cite{KLR2007}, \cite{KMZ2004},
\cite{FKL2007}. Note that, in applications, the time evolution of
initial state is the subject of the primary interest. Therefore, we
understand the considered stochastic (non-equilibrium) dynamics as
the evolution of initial distributions for the system. Actually, the
corresponding Markov process (provided it exists) itself gives a
general technical equipment to study this problem. Moreover, using
the techniques developed in \cite{GK2006}, it is possible to
construct this Markov process as a solution of a stochastic
differential equation. Unfortunately, this approach does not give
any information about the properties of the corresponding
correlation functions which we need for the study of Vlasov scaling
as was mentioned above.

However, we note that the transition from the micro-state evolution corresponding to
the given initial configuration  to the macro-state dynamics is the
well developed concept in the theory of infinite particle systems.
This point of view appeared initially in the framework of the
Hamiltonian dynamics of classical gases, see, e.g., \cite{DSS1989}.
Again, the lack of the general Markov processes techniques for the
considered systems makes it necessary to develop alternative
approaches to study the state evolutions in the Glauber dynamics.
Such approaches we realized in \cite{KKM2008}, \cite{KKZ2006},
\cite{FKKZ2010}, \cite{FKK2010}. The description of the time
evolutions for measures on configuration spaces in terms of an
infinite system of evolutional equations for the corresponding
correlation functions was used there. The latter system is a Glauber
evolution's analog of the famous BBGKY-hierarchy for the Hamiltonian
dynamics.

Here we extend the approximation approach proposed in
\cite{FKKZ2010}, \cite{FKK2010} to the Vlasov scaling for the
Glauber dynamics in continuum. We construct and study semigroups
corresponding to properly rescaled Markov generator of the Glauber
dynamics (Propositions~\ref{descsemigroupexist} and \ref{sun-inv}).
We prove for the integrable and bounded potential the convergence of
these semigroups to the limiting semigroup which describe Vlasov
evolution (Theorem~\ref{maintheorem}). We derive the corresponding
Vlasov-type equation from this evolution
(Theorem~\ref{Vlasovscheme}). Note that the stationary solution of
this equation will satisfied the well-known Kirkwood--Monroe
equation in the freezing theory (Remark~\ref{RemarkKirkwoodMonroe}).

\section{Glauber dynamics in continuum}

\subsection{Basic facts and notation}

Let ${\mathcal{B}}({{\mathbb{R}}^d})$ be the family of all Borel sets in ${{%
\mathbb{R}}^d}$, $d\geq 1$; ${\mathcal{B}}_{\mathrm{b}}
({{\mathbb{R}}^d})$ denotes the system of all bounded sets in
${\mathcal{B}}({{\mathbb{R}}^d})$.

The configuration space over space ${{\mathbb{R}}^d}$ consists of
all locally finite subsets (configurations) of ${{\mathbb{R}}^d}$,
namely,
\begin{equation}  \label{confspace}
\Gamma =\Gamma_{{\mathbb{R}}^d} :=\Bigl\{ \gamma \subset
{{\mathbb{R}}^d} \Bigm| |\gamma _\Lambda |<\infty, \ \mathrm{for \ all } \ \Lambda \in {\mathcal{%
B}}_{\mathrm{b}} ({{\mathbb{R}}^d})\Bigr\}.
\end{equation}
Here $\gamma_\Lambda:=\gamma%
\cap\Lambda$, and $|\cdot|$ means the cardinality of a finite set.
The space $\Gamma$ is equipped with the vague topology, i.e., the
minimal topology for which all mappings $\Gamma\ni\gamma\mapsto
\sum_{x\in\gamma}
f(x)\in{\mathbb{R}}$ are continuous for any continuous function $f$ on ${{%
\mathbb{R}}^d}$ with compact support; note that the summation in
$\sum_{x\in\gamma} f(x)$ is taken over
finitely many points of $\gamma$ which belong to the support of $f$. In \cite%
{KK2006}, it was shown that $\Gamma$ with the vague topology may be
metrizable and it becomes a Polish space (i.e., complete separable
metric space). Corresponding to this topology, the Borel $\sigma
$-algebra ${\mathcal{B}}(\Gamma )$ is the smallest $\sigma $-algebra
for which all mappings $\Gamma \ni \gamma \mapsto |\gamma_ \Lambda |\in{%
\mathbb{N}}_0:={\mathbb{N}}\cup\{0\}$ are measurable for any $\Lambda\in{%
\mathcal{B}}_{\mathrm{b}}({{\mathbb{R}}^d})$.

The space of $n$-point configurations in an arbitrary
$Y\in{\mathcal{B}}({{\mathbb{R}}^d})$ is defined by
\[
\Gamma^{(n)}_Y:=\Bigl\{  \eta \subset Y \Bigm| |\eta |=n\Bigr\} ,\quad n\in {%
\mathbb{N}}.
\]
We set also $\Gamma^{(0)}_Y:=\{\emptyset\}$. As a set,
$\Gamma^{(n)}_Y$ may be identified with the symmetrization of
\[
\widetilde{Y^n} = \Bigl\{ (x_1,\ldots ,x_n)\in Y^n \Bigm| x_k\neq
x_l \ \mathrm{if} \ k\neq l\Bigr\} .
\]

Hence one can introduce the corresponding Borel $\sigma $-algebra,
which we denote by ${\mathcal{B}}(\Gamma^{(n)}_Y)$. The space of
finite configurations in an arbitrary
$Y\in{\mathcal{B}}({{\mathbb{R}}^d})$ is defined by
\[
\Gamma_{0,Y}:=\bigsqcup_{n\in {\mathbb{N}}_0}\Gamma^{(n)}_Y.
\]
This space is equipped with the topology of disjoint unions.
Therefore, one can introduce the corresponding Borel $\sigma
$-algebra ${\mathcal{B}} (\Gamma _{0,Y})$. In the case of
$Y={{\mathbb{R}}^d}$ we will omit the index $Y$ in
the notation, namely, $\Gamma_0:=\Gamma_{0,{{\mathbb{R}}^d}}$, $%
\Gamma^{(n)}:=\Gamma^{(n)}_{{{\mathbb{R}}^d}}$.

The restriction of the Lebesgue product measure $(dx)^n$ to $\bigl(%
\Gamma^{(n)}, {\mathcal{B}}(\Gamma^{(n)})\bigr)$ we denote by
$m^{(n)}$. We
set $m^{(0)}:=\delta_{\{\emptyset\}}$. The Lebesgue--Poisson measure $%
\lambda $ on $\Gamma_0$ is defined by
\begin{equation}  \label{LP-meas-def}
\lambda :=\sum_{n=0}^\infty \frac {1}{n!}m^{(n)}.
\end{equation}
For any $\Lambda\in{\mathcal{B}}_{\mathrm{b}}({{\mathbb{R}}^d})$ the
restriction of $\lambda$ to $\Gamma _\Lambda:=\Gamma_{0,\Lambda}$
will be also denoted by $\lambda $. The space $\bigl( \Gamma,
{\mathcal{B}}(\Gamma)\bigr)$ is the projective limit of the family
of spaces $\bigl\{( \Gamma_\Lambda, {\mathcal{B}}(\Gamma_\Lambda))\bigr\}%
_{\Lambda \in {\mathcal{B}}_{\mathrm{b}} ({{\mathbb{R}}^d})}$. The
Poisson measure $\pi$ on $\bigl(\Gamma ,{\mathcal{B}}(\Gamma
)\bigr)$ is given as the projective limit of the family of measures
$\{\pi^\Lambda \}_{\Lambda
\in {\mathcal{B}}_{\mathrm{b}} ({{\mathbb{R}}^d})}$, where $%
\pi^\Lambda:=e^{-m(\Lambda)}\lambda $ is the probability measure on $\bigl( %
\Gamma_\Lambda, {\mathcal{B}}(\Gamma_\Lambda)\bigr)$. Here
$m(\Lambda)$ is
the Lebesgue measure of $\Lambda\in {\mathcal{B}}_{\mathrm{b}} ({{\mathbb{R}}%
^d})$.

For any measurable function $f:{{\mathbb{R}}^d}\rightarrow
{\mathbb{R}}$ we define a \emph{Lebesgue--Poisson exponent}
\begin{equation}  \label{LP-exp-def}
e_\lambda(f,\eta):=\prod_{x\in\eta} f(x),\quad \eta\in\Gamma_0;
\qquad e_\lambda(f,\emptyset):=1.
\end{equation}
Then, by \eqref{LP-meas-def}, for $f\in L^1({{\mathbb{R}}^d},dx)$ we
obtain $e_\lambda(f)\in L^1(\Gamma_0,d\lambda)$ and
\begin{equation}  \label{LP-exp-mean}
\int_{\Gamma_0}
e_\lambda(f,\eta)d\lambda(\eta)=\exp\Biggl\{\int_{{\mathbb{R}}^d}
f(x)dx\Biggr\}.
\end{equation}

A set $M\in {\mathcal{B}} (\Gamma_0)$ is called bounded if there exists $%
\Lambda \in {\mathcal{B}}_{\mathrm{b}} ({{\mathbb{R}}^d})$ and $N\in {%
\mathbb{N}}$ such that $M\subset \bigsqcup_{n=0}^N\Gamma
_\Lambda^{(n)}$.
The set of bounded measurable functions with bounded support we denote by $%
B_{\mathrm{bs}}(\Gamma_0)$, i.e., $G\in B_{\mathrm{bs}}(\Gamma_0)$ if $%
G\upharpoonright_{\Gamma_0\setminus M}=0$ for some bounded $M\in {\mathcal{B}%
}(\Gamma_0)$. Any ${\mathcal{B}}(\Gamma_0)$-measurable function $G$ on $%
\Gamma_0$, in fact, is a sequence of functions $\bigl\{G^{(n)}\bigr\}_{n\in{%
\mathbb{N}}_0}$ where $G^{(n)}$ is a ${\mathcal{B}}(\Gamma^{(n)})$%
-measurable function on $\Gamma^{(n)}$. We consider also the set ${{\mathcal{%
F}}_{\mathrm{cyl}}}(\Gamma )$ of \textit{cylinder functions} on
$\Gamma$. Each $F\in {{\mathcal{F}}_{\mathrm{cyl}}}(\Gamma )$ is
characterized by the following relation: $F(\gamma
)=F\upharpoonright_{\Gamma_\Lambda
}(\gamma_\Lambda )$ for some $\Lambda\in {\mathcal{B}}_{\mathrm{b}}({{%
\mathbb{R}}^d})$.

There is the following mapping from $B_{\mathrm{bs}} (\Gamma_0)$ into ${{%
\mathcal{F}}_{\mathrm{cyl}}}(\Gamma )$, which plays the key role in
our further considerations:
\begin{equation}
KG(\gamma ):=\sum_{\eta \Subset \gamma }G(\eta ), \quad \gamma \in
\Gamma, \label{KT3.15}
\end{equation}
where $G\in B_{\mathrm{bs}}(\Gamma_0)$, see, e.g.,
\cite{KK2002,Len1975,Len1975a}. The summation in \eqref{KT3.15} is
taken over all finite subconfigurations $\eta\in\Ga_0$ of the
(infinite) configuration $\gamma\in\Ga$; we denote this by the
symbol, $\eta\Subset\gamma $. The mapping $K$ is linear, positivity
preserving, and invertible, with
\begin{equation}
K^{-1}F(\eta ):=\sum_{\xi \subset \eta }(-1)^{|\eta \setminus \xi
|}F(\xi ),\quad \eta \in \Gamma_0.  \label{k-1trans}
\end{equation}
We denote the restriction of $K$ onto functions on $\Gamma_0$ by
$K_0$.

A measure $\mu \in {\mathcal{M}}_{\mathrm{fm} }^1(\Gamma )$ is
called locally absolutely continuous with respect to (w.r.t. for
short) the Poisson measure $\pi$ if for any $\Lambda \in
{\mathcal{B}}_{\mathrm{b}} ({{\mathbb{R}}^d})$ the projection of
$\mu$ onto $\Gamma_\Lambda$ is absolutely continuous w.r.t. the projection of $%
\pi$ onto $\Gamma_\Lambda$. By \cite{KK2002}, in this case, there
exists a \emph{correlation functional} $k_{\mu}:\Gamma_0 \rightarrow
{\mathbb{R}}_+$ such that for any $G\in B_{\mathrm{bs}} (\Gamma_0)$
the following equality holds
\begin{equation}  \label{eqmeans}
\int_\Gamma (KG)(\gamma) d\mu(\gamma)=\int_{\Gamma_0}G(\eta)
k_\mu(\eta)d\lambda(\eta).
\end{equation}
The restrictions $k_\mu^{(n)}$ of this functional on $\Gamma_0^{(n)}$, $n\in{%
\mathbb{N}}_0$ are called \emph{correlation functions} of the
measure $\mu$. Note that $k_\mu^{(0)}=1$.

We recall now without a proof the partial case of the well-known
technical lemma (cf., \cite{KMZ2004}) which plays very important role
in our calculations.

\begin{lemma}
\label{Minlos} For any measurable function $H:\Gamma_0\times\Gamma_0\times%
\Gamma_0\rightarrow{\mathbb{R}}$
\begin{equation}  \label{minlosid}
\int_{\Gamma _{0}}\sum_{\xi \subset \eta }H\left( \xi ,\eta
\setminus \xi ,\eta \right) d\lambda \left( \eta \right)
=\int_{\Gamma _{0}}\int_{\Gamma _{0}}H\left( \xi ,\eta ,\eta \cup
\xi \right) d\lambda \left( \xi \right) d\lambda \left( \eta \right)
\end{equation}
if only both sides of the equality make sense.
\end{lemma}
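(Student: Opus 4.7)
The plan is to reduce the claimed identity to an elementary combinatorial rearrangement by expanding both sides as series of ordinary Lebesgue integrals on $(\R^d)^n$ via the definition \eqref{LP-meas-def} of $\la$, and matching them term by term.

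First I would unpack the left-hand side. By \eqref{LP-meas-def},
\[
\text{LHS} = \sum_{n=0}^\infty \frac{1}{n!}\int_{(\R^d)^n} \sum_{\xi\subset\{x_1,\ldots,x_n\}} H\bigl(\xi,\{x_1,\ldots,x_n\}\setminus\xi,\{x_1,\ldots,x_n\}\bigr)\,dx_1\cdots dx_n.
\]
For fixed $n$ the inner sum runs over $2^n$ subsets. I would group them by the cardinality $k=|\xi|$: among the $\binom{n}{k}$ subsets of size $k$, the symmetry of the product Lebesgue measure under coordinate permutations makes all of the corresponding integrals equal, so the inner sum contributes
\[
\sum_{k=0}^n \binom{n}{k}\int H\bigl(\{x_1,\ldots,x_k\},\{x_{k+1},\ldots,x_n\},\{x_1,\ldots,x_n\}\bigr)\,dx_1\cdots dx_n.
\]

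Next, absorbing $\frac{1}{n!}\binom{n}{k}=\frac{1}{k!(n-k)!}$ and reindexing $(n,k)\mapsto(k,m)$ with $m=n-k$, I would rename the last $m$ coordinates $x_{k+1},\ldots,x_n$ as $y_1,\ldots,y_m$ to obtain
\[
\text{LHS} = \sum_{k,m=0}^\infty \frac{1}{k!\,m!}\int H\bigl(\{x_1,\ldots,x_k\},\{y_1,\ldots,y_m\},\{x_1,\ldots,x_k\}\cup\{y_1,\ldots,y_m\}\bigr)\,dx\,dy.
\]
Applying \eqref{LP-meas-def} twice recognizes this sum as the right-hand side, modulo the set $\{(\xi,\eta)\in\Ga_0\times\Ga_0:\xi\cap\eta\neq\emptyset\}$, which is $\la\otimes\la$-null because the product Lebesgue measure vanishes on the diagonals.

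The only technical care is in exchanging summation with integration, which is granted by Tonelli's theorem when $H\geq 0$ and by the hypothesis that both sides make sense (hence $H$ is absolutely integrable on each piece) in the general case; a standard decomposition $H=H^+-H^-$ reduces the signed case to the nonnegative one. I do not foresee a genuine obstacle here: the entire content of the lemma is the bijection, for fixed $\eta$, between subsets $\xi\subset\eta$ of size $k$ and ordered pairs $(\xi,\eta\setminus\xi)$ with $|\xi|=k$, $|\eta\setminus\xi|=n-k$, which is exactly what the binomial identity $\frac{1}{n!}\binom{n}{k}=\frac{1}{k!(n-k)!}$ encodes at the measure-theoretic level.
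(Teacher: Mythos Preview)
Your argument is correct and is the standard proof of this identity: expand both sides via \eqref{LP-meas-def}, use the permutation symmetry of the Lebesgue product measure to reduce the sum over subsets of size $k$ to a single representative weighted by $\binom{n}{k}$, and match terms through $\frac{1}{n!}\binom{n}{k}=\frac{1}{k!(n-k)!}$. The remark that the set $\{\xi\cap\eta\neq\emptyset\}$ is $\la\otimes\la$-null is the only subtle point, and you handle it correctly.

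There is nothing to compare against, however: the paper does not prove this lemma. It is stated explicitly as a recalled fact (``We recall now without a proof the partial case of the well-known technical lemma (cf., \cite{KMZ2004})''). So your write-up supplies what the paper deliberately omits; the approach you take is exactly the one found in the cited literature.
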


\subsection{Non-equilibrium Glauber dynamics in continuum}\label{dualconstraction}

Let $\phi:\X\goto\R_+:=[0;+\infty)$ be an even non-negative function
which satisfies the following integrability condition
\begin{equation}\label{weak_integrability}
C_\phi := \int_\X \bigl(1-e^{-\phi(x)}\bigr) dx < +\infty.
\end{equation}
For any $\ga\in\Ga$, $x\in\X\setminus\ga$ we set
\begin{equation}\label{relativeenergy}
E^\phi(x,\ga) :=\sum_{y\in\ga} \phi(x-y) \in [0;\infty].
\end{equation}

Let us define the (pre-)generator of the Glauber dynamics: for any
$F\in\Fc(\Ga )$ we set
\begin{align}
(LF)(\ga):=&\sum_{x\in\ga} \bigl[F(\ga\setminus x) -F(\ga)\bigr]
\label{genGa}
\\&+ z \int_{\X} \bigl[F(\ga\cup x)
-F(\ga)\bigr]\exp\bigl\{-E^\phi(x,\ga)\bigr\} dx, \qquad
\ga\in\Ga.\nonumber
\end{align}
Here $z>0$ is the {\it activity} parameter. Note that for any
$F\in\Fc(\Ga)$ there exists a $\La\in\Bc(\X)$ such that
$F(\ga\setminus x)=F(\ga)$ for all $x\in\ga_{\La^c}$ and $F(\ga\cup
x)=F(\ga)$ for all $x\in\La^c$; note also that
$\exp\bigl\{-E^\phi(x,\ga)\bigr\}\leq 1$, therefore, sum and
integral in \eqref{genGa} are finite.

For any fixed $C>1$ we consider the following Banach space of
${\mathcal{B}} (\Gamma_0)$-measurable functions
\[
\L _C:=\biggl\{ G:\Gamma_0\rightarrow{\mathbb{R}} \biggm| \|G\|_C:=
\int_{\Gamma_0} |G(\eta)| C^{|\eta|} d\lambda(\eta) <\infty\biggr\}.
\]

In \cite[Proposition~3.1]{FKKZ2010}, it was shown that the mapping $\hL :=K^{-1}LK$
given on $\Bbs(\Ga_0)$ by
\begin{align}\label{Lhat}
(\hL G)(\eta) =& - |\eta| G(\eta) \\& + z
\sum_{\xi\subset\eta}\int_\X e^{-E^\phi(x,\xi)} G(\xi\cup x)e_\la
(e^{-\phi (x - \cdot)}-1,\eta\setminus\xi) dx \nonumber
\end{align}
is a linear operator on $\L_C$ with the dense domain
$\L_{2C}\subset\L_C$. If additionally,
\begin{equation}\label{verysmallparam}
z\leq \min\bigl\{Ce^{-CC_{\phi }} ; 2Ce^{-2CC_{\phi }}\bigr\},
\end{equation}
then $\bigl(\hL , \L_{2C}\bigr)$ is closable linear operator in
$\L_C$ and its closure $\bigl(\hL , D(\hL )\bigr)$ generates a strongly continuous contraction
semigroup $\hT (t)$ on $\L_C$ (see \cite[Theorem~3.8]{FKKZ2010} for details).

Let us set $d\la_C:= C^{|\cdot|} d\la$; then the dual space
$(\L_C)'=\bigl(L^1(\Ga_0, d\la_C)\bigr)'=L^\infty(\Ga_0, d\la_C)$.
The space $(\L_C)'$ is isometrically isomorphic to the Banach space
\[
\K_{C}:=\left\{k:\Ga_{0}\goto\R\,\Bigm| k\cdot C^{-|\cdot|}\in
L^{\infty}(\Ga_{0},\la)\right\}
\]
with the norm
$\|k\|_{\K_C}:=\|C^{-|\cdot|}k(\cdot)\|_{L^{\infty}(\Ga_{0},\la)}$
where the isomorphism is provided by the isometry $R_C$
\begin{equation}\label{isometry}
(\L_C)'\ni k  \longmapsto R_Ck:=k\cdot C^{|\cdot|}\in \K_C.
\end{equation}

In fact, one may consider the duality between the Banach spaces
$\L_C$ and $\K_C$ given by the following expression
\begin{equation}
\llu  G,\,k \rru  := \int_{\Ga_{0}}G\cdot k\, d\la,\quad G\in\L_C, \
k\in \K_C \label{duality}
\end{equation}
with $\lv  \llu   G,k \rru  \rv \leq \|G\|_C \cdot\|k\|_{\K_C}$. It
is clear that $k\in \K_C$ implies $|k(\eta)|\leq \|k\|_{\K_C} \,
C^{|\eta|}$ for $\la$-a.a. $\eta\in\Ga_0$.

Let $\bigl( {\hat{L}} ^{\prime }, D({\hat{L}} ^{\prime })\bigr)$ be
an operator in $(\L_C)^{\prime }$ which is dual to the closed
operator $\bigl( {\hat{L}} , D({\hat{L}} )\bigr)$. We consider also
its image on ${\mathcal{K}}_C$ under the isometry $R_C$, namely, let ${\hat{L}}^{*}=R_C{%
\hat{L}} ^{\prime }R_{C^{-1}}$ with the domain $D({\hat{L}} ^{*})=R_C  D({\hat{L}%
} ^{\prime })$. It was noted in~\cite{FKK2010} that $\hL^\ast$ is
the dual operator to $\hL $ w.r.t. the duality \eqref{duality} and
that for any $k\in D({\hat{L}}^\ast)$
\begin{align}\label{dual-descent}
(\hL^* k)(\eta)=&-\vert \eta \vert k(\eta)\\&+z \sum_{x\in
\eta}e^{-E^\phi (x,\eta\setminus x)} \int_{\Ga_0}e_\la (e^{-\phi (x
- \cdot)}-1,\xi) k((\eta\setminus x)\cup\xi)\,d\la (\xi).\nonumber
\end{align}

Under condition \eqref{verysmallparam}, we consider the adjoint
semigroup $\hT '(t)$ in $(\L_C)'$ and its image $\hT^\ast(t)$ in
$\K_C$. By the general results
from~\cite[Sections~1.2,~1.3]{vNee1992}, the restriction $\hT
^\odot(t)$ of the semigroup $\hT^\ast(t)$ onto its invariant Banach
subspace $\overline{D(\hL^\ast)}$ is a contraction strongly
continuous semigroup. By~\cite[Proposition~3.1]{FKK2010}, for any
$\a\in(0;1)$ we have $\K_{\a C}\subset D(\hL^\ast)$ and, moreover,
by~\cite[Proposition~3.3]{FKK2010}, there exists
$\a_0=\a_0(z,\phi,C)\in (0;1)$ such that for any $\a\in (\a_0;1)$
the set $\K_{\a C}$ will be also a $\hT^\ast(t)$-invariant linear
subspace. As a result, for any $\overline{D(\hL^\ast)}$ the Cauchy
problem in $\K_C$
\begin{equation}
\begin{cases}
\dfrac{\partial}{\partial t} k_t = \hL^\ast k_t \\[2mm] k_t \bigr|_{t=0}
= k_0
\end{cases}
\end{equation}
is well-defined and solvable: $k_t= \hT^\ast (t)k_0= \hT^\odot
(t)k_0\in\overline{D(\hL^\ast)}$; moreover, $k_0\in\K_{\a C}$ implies $k_t\in\K_{\a
C}$.

\section{Vlasov-type scaling}

\subsection{Description of scaling}\label{scalingdescr}

We start from the explanation of the idea of the Vlasov-type
scaling. We want to construct some scaling of the generator $L$,
say, $L_\eps$, $\eps>0$, such that the following scheme holds.
Suppose that we have a semigroup $\hT_\eps(t)$ with generator $\hL
_\eps$ in some $\L_{C_\eps}$. Consider the dual semigroup $\hT
_\eps^\ast(t)$. Let us choose an initial function of the
corresponding Cauchy problem with a big singularity by $\eps$,
namely, $k_0^\e(\eta) \sim \eps^{-|\eta|} r_0(\eta)$, $\eps\goto 0$,
$\eta\in\Ga_0$ with some function $r_0$, independent of $\eps$. Our
first demand to the scaling $L\mapsto L_\eps$ is that the semigroup
$\hT_\eps^\ast(t)$ preserves the order of the singularity:
\begin{equation}\label{ordersing}
(\hT_\eps^\ast(t)k_0^\e)(\eta) \sim \eps^{-|\eta|} r_t(\eta), \quad
\eps\goto 0, \ \ \eta\in\Ga_0.
\end{equation}
And the second one is that the dynamics $r_0 \mapsto r_t$ should
preserve Lebesgue--Poisson exponents, namely, if
$r_0(\eta)=e_\la(\rho_0,\eta)$ then $r_t(\eta)=e_\la(\rho_t,\eta)$
and there exists explicit (nonlinear, in general) differential
equation for $\rho_t$:
\begin{equation}\label{V-eqn-gen}
\dfrac{\partial}{\partial t}\rho_t(x) = \upsilon(\rho_t)(x)
\end{equation}
which we will call the Vlasov-type equation.

Now let us explain an informal way for the realization of this
scheme. Let us consider for any $\eps>0$ the following mapping (cf.
\eqref{isometry}) on functions on $\Ga_0$
\begin{equation}
(R_\eps r)(\eta):=\eps^\n r(\eta).
\end{equation}
This mapping is ``self-dual'' w.r.t. the duality \eqref{duality},
moreover, $R_\eps^{-1}=R_{\eps^{-1}}$. Then we have $k^\e_0\sim
R_{\eps^{-1}} r_0$, and we need $r_t \sim R_\eps
\hT_\eps^\ast(t)k_0^\e \sim  R_\eps \hT _\eps^\ast(t)R_{\eps^{-1}}
r_0$. Therefore, we have to show that for any $t\geq 0$ the operator
family $R_\eps \hT _\eps^\ast(t)R_{\eps^{-1}}$, $\eps>0$ has
limiting (in a proper sense) operator $U(t)$ and
\begin{equation}\label{chaospreserving}
U(t)e_\la(\rho_0)=e_\la(\rho_t).
\end{equation}
But, informally, $\hT^\ast_\eps(t)=\exp{\{t\hL^\ast_\eps\}}$ and
$R_\eps \hT_\eps^\ast(t)R_{\eps^{-1}}=\exp{\{t R_\eps \hL_\eps^\ast
R_{\eps^{-1}} \}}$. Let us consider the ``renormalized'' operator
\begin{equation}\label{renorm_def}
\hLrenadj :=  R_\eps \hL_\eps^\ast R_{\eps^{-1}}.
\end{equation}
In fact, we need that there exists an operator $\hL_V^\ast$ such
that $\exp{\{t R_\eps \hL_\eps^\ast R_{\eps^{-1}} \}}\goto
\exp{\{t\hL_V^\ast\}=:U(t)}$ for which \eqref{chaospreserving}
holds. Therefore, a heuristic way to produce such a scaling
$L\mapsto L_\eps$ is to demand that
\[
\lim_{\eps\goto 0}\left(\dfrac{\partial}{\partial
t}e_\la(\rho_t,\eta)-\hLrenadj e_\la(\rho_t,\eta)\right)=0, \qquad
\eta\in\Ga_0
\]
if only $\rho_t$ is satisfied \eqref{V-eqn-gen}. The point-wise
limit of $\hLrenadj$ will be natural candidate for $\hL_V^\ast$.

Note that \eqref{renorm_def} implies $\hL_\ren=R_{\eps^{-1}}\hL
_\eps R_\eps$. Hence, we will use the following scheme to give
rigorous meaning to all considerations above. We consider, for
a~proper scaling $L_\eps$, the ``renormalized'' operator $\hL_\ren$
and prove that it is a generator of a strongly continuous
contraction semigroup $\hT_\ren(t)$ in $\L_C$. Next, we show that
the formal limit $\hL_V$ of $\hL_\ren$ is also a generator of a
strongly continuous contraction semigroup $\hT_V(t)$ in $\L_C$ also.
Then, we consider the dual semigroups $\hT_\ren^\ast(t)$ and
$\hT_V^\ast(t)$ in the proper Banach subspace of the space $\K_C$.
Finally, we prove that $\hT_\ren^\ast(t)\rightarrow\hT_V^\ast(t)$
strongly on this subspace and explain in which sense $\hT_V^\ast(t)$
satisfies the properties above. Below we try to realize this scheme.

\subsection{Construction and convergence of the evolutions in $\L_C$}\label{subsectiondescevolution}

Let us consider for any $F\in\Fc (\Ga)$, $\eps>0$
\begin{align}
(L_\eps F)(\ga):=&\sum_{x\in\ga} \bigl[F(\ga\setminus x)
-F(\ga)\bigr] \label{genGa-eps}
\\& + \eps^{-1}z \int_{\X} \bigl[F(\ga\cup x)
-F(\ga)\bigr]\exp\bigl\{-\eps E^{\phi}(x,\ga)\bigr\} dx, \quad
\ga\in\Ga.\nonumber
\end{align}
We define also for any $G\in\Bbs(\Ga_0)$, $\eps>0$
\[
\hL_\eps G:=K^{-1}L_\eps K G; \qquad \hL_\ren G:=R_{\eps^{-1}}\hL
_\eps R_\eps G.
\]

Let $\phi$ be integrable function on the whole $\X$, namely,
\begin{equation}\label{integrability}
\beta :=\int_{\X}\phi(x)dx<+\infty.
\end{equation}
We fix this notation for our considerations below.

Then, by the elementary inequality
\begin{equation}
1-e^{- t}\leq  t, \quad t\geq 0  \label{ineq_exp}
\end{equation}%
(which we will use often), $\phi$ will satisfy
\eqref{weak_integrability} and $C_\phi\leq\beta$.

\begin{proposition}
For any $G\in B_{bs}\left( \Ga _{0}\right)$
\begin{equation}
( \hL_{\eps ,\mathrm{ren}}G) \left( \eta \right) =\left(
L_{1}G\right) \left( \eta \right) +\left( L_{2,\eps }G\right) \left(
\eta \right),  \label{expleps}
\end{equation}
where
\begin{align*}
\left( L_{1}G\right) \left( \eta \right) &=-\lv  \eta \rv
G\left( \eta \right) , \\
\left( L_{2,\eps }G\right) \left( \eta \right) &=z\sum_{\xi \subset
\eta }\int_\X e_{\la  }\left( e^{-\eps \phi \left( x-\cdot \right)
},\xi \right) \\&\qquad\times e_{\la }\left( \frac{e^{-\eps \phi
\left( x-\cdot \right) }-1}{\eps },\eta \setminus \xi \right)
G\left( \xi \cup x\right) dx.\nonumber
\end{align*}
Moreover, the expression \eqref{expleps} defines a linear operator
in $\L_C$ with dense domain $\L_{2C}$.
\end{proposition}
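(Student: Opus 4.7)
The plan is to reduce the claim to the already-known formula \eqref{Lhat} for $\hat L = K^{-1}LK$ by tracking how the two modifications in $L_\eps$ (the replacement $\phi \mapsto \eps\phi$ inside the exponential and the prefactor $\eps^{-1}$ in front of the birth term) propagate through the $K$-transform and the renormalization $R_{\eps^{-1}}(\cdot)R_\eps$. Concretely, the formula in \cite[Proposition~3.1]{FKKZ2010} used only that the death term is linear and the birth intensity has the product form $e^{-E^\phi(x,\ga)} = e_\la(e^{-\phi(x-\cdot)},\ga)$; both structural features persist with $\phi$ replaced by $\eps\phi$ and a scalar $\eps^{-1}$ pulled out, so we get immediately
\begin{align*}
(\hL_\eps G)(\eta) = -|\eta|G(\eta) + \eps^{-1}z \sum_{\xi\subset\eta}\int_\X e_\la(e^{-\eps\phi(x-\cdot)},\xi)\,G(\xi\cup x)\,e_\la(e^{-\eps\phi(x-\cdot)}-1,\eta\setminus\xi)\,dx.
\end{align*}

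Next, I would apply $\hL_\ren = R_{\eps^{-1}}\hL_\eps R_\eps$ termwise. The death term $-|\eta|G$ is $R_\eps$-invariant (it commutes with multiplication by $\eps^{|\eta|}$). In the birth term, $R_\eps G$ contributes a factor $\eps^{|\xi|+1}$ at $\xi\cup x$, the outer $R_{\eps^{-1}}$ contributes $\eps^{-|\eta|}$, and the prefactor $\eps^{-1}$ is already there, leaving a total power $\eps^{|\xi|-|\eta|} = \eps^{-|\eta\setminus\xi|}$. Using the homogeneity $e_\la(cf,\zeta)=c^{|\zeta|}e_\la(f,\zeta)$, this power is absorbed into the last exponent, producing exactly $e_\la\bigl((e^{-\eps\phi(x-\cdot)}-1)/\eps,\,\eta\setminus\xi\bigr)$, which gives \eqref{expleps}.

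For the boundedness $L_1+L_{2,\eps}:\L_{2C}\to\L_C$, I would estimate the two summands separately. The bound $\|L_1 G\|_C \leq c\,\|G\|_{2C}$ is immediate from $|\eta|C^{|\eta|}\leq c\,(2C)^{|\eta|}$ (the sequence $n/2^n$ is bounded). For $L_{2,\eps}$, move absolute values inside, use that $\phi\geq 0$ gives $|(e^{-\eps\phi}-1)/\eps| = (1-e^{-\eps\phi})/\eps \leq \phi$ by \eqref{ineq_exp}, and $0\leq e^{-\eps\phi}\leq 1$. Then apply Lemma \ref{Minlos} to convert $\int_{\Ga_0}\sum_{\xi\subset\eta}(\cdot)\,C^{|\eta|}d\la(\eta)$ into a double integral $\int_{\Ga_0}\int_{\Ga_0}(\cdot)\,C^{|\xi|}C^{|\eta|}d\la(\xi)d\la(\eta)$. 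The $\eta$-integral factorizes via \eqref{LP-exp-mean} into
\[
\exp\!\left\{C\int_\X \frac{1-e^{-\eps\phi(x-y)}}{\eps}\,dy\right\}\leq e^{C\beta}
\]
uniformly in $x$ and $\eps$, using integrability \eqref{integrability}. The remaining $\xi$-integral is bounded by $\int_{\Ga_0}\int_\X |G(\xi\cup x)|\,C^{|\xi|}\,dx\,d\la(\xi)$, which, by applying Lemma \ref{Minlos} in reverse with $|\xi|=1$, equals $C^{-1}\int_{\Ga_0}|\eta|\,|G(\eta)|\,C^{|\eta|}d\la(\eta)\leq (2C)^{-1}\|G\|_{2C}$. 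Combining yields $\|L_{2,\eps} G\|_C \leq (z e^{C\beta}/2C)\,\|G\|_{2C}$, uniformly in $\eps>0$.

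Finally, density of $\L_{2C}\subset\L_C$ is standard: $\Bbs(\Ga_0)\subset\L_{2C}$ and is dense in $\L_C=L^1(\Ga_0,d\la_C)$ by the usual monotone class / truncation argument. The only step requiring care is the bookkeeping of the $\eps$-powers in the renormalization and the correct application of Lemma \ref{Minlos} in both directions; the uniform-in-$\eps$ character of the estimate is guaranteed by the pointwise domination $(1-e^{-\eps\phi})/\eps\leq\phi$, which is the whole reason the scaling $L\mapsto L_\eps$ is chosen this way.
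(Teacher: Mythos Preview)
Your proposal is correct and follows essentially the same route as the paper: derive $\hL_\eps$ from \eqref{Lhat} by the substitution $\phi\mapsto\eps\phi$ and the prefactor $\eps^{-1}$, conjugate termwise by $R_\eps$ to get \eqref{expleps}, then bound $L_1$ via $n\le 2^n$ and $L_{2,\eps}$ via the pointwise domination $(1-e^{-\eps\phi})/\eps\le\phi$, Lemma~\ref{Minlos}, and \eqref{LP-exp-mean}. The only cosmetic differences are that the paper applies the bound $(1-e^{-\eps\phi})/\eps\le\phi$ before invoking \eqref{LP-exp-mean} (rather than after), and obtains the constant $ze^{C\beta}C^{-1}$ instead of your slightly sharper $ze^{C\beta}(2C)^{-1}$.
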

\begin{proof}
By \eqref{Lhat}, for any $G\in\Bbs(\Ga_0)$ we have
\begin{align}\label{Lhateps}
(\hL_\eps G)(\eta) =&- |\eta| G(\eta) \\&+ \eps^{-1}z
\sum_{\xi\subset\eta}\int_\X e^{-\eps E^\phi(x,\xi)} G(\xi\cup
x)e_\la (e^{-\eps \phi (x - \cdot)}-1,\eta\setminus\xi) dx
\nonumber.
\end{align}
Then
\begin{align*}
 (\hL_\ren G)(\eta) &= (R_{\eps^{-1}}\hL_\eps R_\eps
G)(\eta)\\ & =-\eps^{-\n} |\eta|\eps^\n  G(\eta) \\&\qquad +
\eps^{-\n}\eps^{-1}z \sum_{\xi\subset\eta}\int_\X e^{-\eps
E^\phi(x,\xi)} \eps^{|\xi\cup x|}G(\xi\cup
x)e_\la (e^{-\eps \phi (x - \cdot)}-1,\eta\setminus\xi) dx\\
& =\left( L_{1}G\right) \left( \eta \right) +\left( L_{2,\eps
}G\right) \left( \eta \right).
\end{align*}
Next, for any $G\in \L_{2C}$ we obtain
\begin{align}
\lV  L_{1}G\rV _{C}&=\int_{\Ga _{0}}\lv  \eta \rv  \lv G\left( \eta
\right) \rv  C^{\lv  \eta \rv  }d\la \left( \eta \right)
\nonumber\\&\leq \int_{\Ga_{0}}2^{\lv \eta \rv  }\lv G\left( \eta
\right) \rv C^{\lv \eta \rv  }d\la \left( \eta \right) =\lV G\rV
_{2C}. \label{bound1}
\end{align}
From \eqref{ineq_exp} and the estimate $e^{-\phi}\leq 1$ we get
\begin{align*}
&\lV  L_{2,\eps }G\rV _{C} \nonumber \\  \leq &z\int_{\Ga
_{0}}\sum_{\xi \subset \eta }\int_\X \lv  G\left( \xi \cup x\right)
\rv  e_{\la }\left( \frac{1-e^{-\eps \phi \left( x-\cdot \right)
}}{\eps },\eta \setminus \xi \right) dxC^{\lv  \eta \rv }d\la
\left( \eta \right) \nonumber\\
\leq  &z\int_{\Ga _{0}}\sum_{\xi \subset \eta }\int_\X \lv  G\left(
\xi \cup x\right) \rv  e_{\la }\left( \phi \left( x-\cdot \right)
,\eta \setminus \xi \right) dxC^{\lv \eta \rv  }d\la  \left( \eta
\right),
\end{align*}
then, by Lemma~\ref{Minlos}, one may continue,
\begin{align*}
%&\lV  L_{2,\eps }G\rV _{C}\nonumber\\
\leq &z\int_{\Ga _{0}}\int_{\Ga _{0}}\int_\X \lv  G\left( \xi \cup
x\right) \rv  e_{\la  }\left( \phi \left( x-\cdot \right) ,\eta
\right) dxC^{\lv  \eta \rv  }d\la \left( \eta \right) C^{\lv  \xi
\rv  }d\la  \left( \xi \right)
\end{align*}
and \eqref{LP-exp-mean} yields
\begin{align*}
=&z\exp \left\{ C\beta
\right\} \int_{\Ga _{0}}\int_\X \lv  G\left( \xi \cup x\right) \rv
dxC^{\lv \xi \rv }d\la  \left( \xi \right),
\end{align*}
then, using Lemma~\ref{Minlos} again,
\begin{align}
=&z\exp
\left\{ C\beta \right\} C^{-1}\int_{\Ga _{0}}\lv G\left( \xi \right)
\rv  \cdot\lv  \xi \rv C^{\lv  \xi \rv  }d\la \left( \xi \right)
\nonumber\\ \leq & z\exp \left\{ C\beta \right\} C^{-1}\lV
G\rV_{2C}. \label{bound2}
\end{align}
The estimates \eqref{bound1} and \eqref{bound2} provide the
statement. \end{proof}

\begin{proposition}
Let for any $G\in\Bbs(\Ga_0)$
\begin{equation}
( \hL_V G) \left( \eta \right) :=\lim_{\eps\goto 0} ( \hL_{\eps
,\mathrm{ren}}G) \left( \eta \right)=\left( L_{1}G\right) \left(
\eta \right) +( L_{2}^{V}G) \left( \eta \right) ,\quad \eta\in\Ga_0,
\label{explV}
\end{equation}%
where%
\begin{align*}
( L_{2}^{V}G) \left( \eta \right) =&z\sum_{\xi \subset
\eta }\int_\X G\left( \xi \cup x\right) e_{\la  }\left( -\phi \left(
x-\cdot \right) ,\eta \setminus \xi \right) dx.
\end{align*}%
Then, the expression \eqref{explV} defines a linear operator in
$\L_C$ with dense domain $\L_{2C}$.
\end{proposition}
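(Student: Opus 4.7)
The proof will run in strict parallel to that of the preceding proposition, and is in fact a simplification of it. First, to justify the equality in~\eqref{explV}, I would fix $G\in\Bbs(\Ga_0)$ and $\eta\in\Ga_0$, and pass to the $\eps\goto 0$ limit inside the finite sum $\sum_{\xi\subset\eta}$ in~\eqref{expleps}. Pointwise in $x$ one has $e_\la(e^{-\eps\phi(x-\cdot)},\xi)\goto 1$ and $e_\la(\tfrac{e^{-\eps\phi(x-\cdot)}-1}{\eps},\eta\setminus\xi)\goto e_\la(-\phi(x-\cdot),\eta\setminus\xi)$, while the inequality~\eqref{ineq_exp} gives the uniform-in-$\eps$ majorant $|G(\xi\cup x)|\,e_\la(\phi(x-\cdot),\eta\setminus\xi)$, which is $dx$-integrable over the compact support of $G(\xi\cup\cdot)$. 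Dominated convergence then delivers $(\hL_\ren G)(\eta)\goto (L_1G)(\eta)+(L_2^V G)(\eta)$ and identifies the limit with the right-hand side of~\eqref{explV}.

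Next come the two norm estimates on $\L_{2C}$. For $L_1$, the bound $\lV L_1G\rV_C\leq\lV G\rV_{2C}$ is literally the inequality~\eqref{bound1} already established in the previous proposition, via $|\eta|\leq 2^{|\eta|}$. For $L_2^V$, observe that $|e_\la(-\phi(x-\cdot),\eta\setminus\xi)|=e_\la(\phi(x-\cdot),\eta\setminus\xi)$, which coincides with the upper bound reached in the previous proof immediately after the step $1-e^{-\eps\phi}\leq\eps\phi$. From that point onwards the same chain of manipulations transfers verbatim: one application of Lemma~\ref{Minlos} to rewrite $\int_{\Ga_0}\sum_{\xi\subset\eta}(\cdot)\,d\la(\eta)$ as a double integral, then~\eqref{LP-exp-mean} to evaluate $\int_{\Ga_0}e_\la(C\phi(x-\cdot),\eta)\,d\la(\eta)=e^{C\beta}$, and a second use of Lemma~\ref{Minlos} (in its Mecke-type one-point specialization) to turn $\int_{\Ga_0}\int_\X|G(\xi\cup x)|\,dx\,C^{|\xi|}d\la(\xi)$ into $C^{-1}\int_{\Ga_0}|\eta|\,|G(\eta)|\,C^{|\eta|}d\la(\eta)$. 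The resulting estimate is $\lV L_2^V G\rV_C\leq z e^{C\beta}C^{-1}\lV G\rV_{2C}$.

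Together with linearity (evident from the formula) and the density of $\L_{2C}$ in $\L_C$ already invoked in the paper, these bounds give the assertion. I do not anticipate any genuine obstacle: the signed Lebesgue--Poisson exponent $e_\la(-\phi(x-\cdot),\,\cdot)$ is handled trivially by taking absolute values, which is precisely why this proof is shorter than the preceding one; the $\la$-a.e.\ finiteness of the $dx$-integral defining $(L_2^V G)(\eta)$ for $G\in\L_{2C}$ is, as usual, a by-product of the Minlos--Fubini estimate itself.
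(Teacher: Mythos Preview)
Your proof is correct and follows essentially the same approach as the paper: after taking absolute values, $|e_\la(-\phi(x-\cdot),\eta\setminus\xi)|=e_\la(\phi(x-\cdot),\eta\setminus\xi)$, and then the estimates \eqref{bound1} and \eqref{bound2} from the preceding proposition apply verbatim. The paper's proof is in fact shorter---it simply notes this inequality and invokes \eqref{bound1}, \eqref{bound2} without repeating the Minlos chain or justifying the pointwise limit in detail---so your version is a slightly more explicit rendering of the same argument.
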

\begin{proof}
Since, by the definition,
\[
\lV  L_2^VG\rV _{C} \leq z\int_{\Ga _{0}}\sum_{\xi \subset \eta
}\int_\X \lv  G\left( \xi \cup x\right) \rv  e_{\la }\left( \phi
\left( x-\cdot \right) ,\eta \setminus \xi \right) dxC^{\lv \eta \rv
}d\la  \left( \eta \right)
\]
the statement follows from \eqref{bound1} and \eqref{bound2}.
\end{proof}

Let us set (cf. \cite[(3.12)]{FKKZ2010}) for any $\delta \in \left( 0;1\right)$, $\eps>0$, $G\in
B_{bs}\left( \Ga _{0}\right)$, $\eta\in\Ga_0$
\begin{align}
\bigl( \hP_{\delta ,\eps }G\bigr) \left( \eta \right) :=&\sum_{\xi
\subset \eta }\left( 1-\delta \right)^{\lv  \xi \rv }\int_{\Ga
_{0}}\left( z\delta \right)^{\lv \omega
\rv  }G\left( \xi \cup \omega \right)  \label{contreps} \\
&\qquad \times e_{\la  }\left( e^{-\eps E^{\phi }\left( \cdot
,\omega \right) },\xi \right) e_{\la  }\left( \frac{e^{-\eps E^{\phi
}\left( \cdot ,\omega \right) }-1}{\eps },\eta \setminus \xi \right)
d\la \left( \omega \right) . \nonumber
\end{align}
and
\begin{align}
\bigl( \hQ_{\delta }G\bigr) \left( \eta \right) :=&\sum_{\xi \subset
\eta }\left( 1-\delta \right)^{\lv  \xi \rv }\int_{\Ga_{0}}\left(
z\delta \right)^{\lv  \omega \rv  }G\left( \xi
\cup \omega \right)  \label{contrV} \\
&\qquad \times e_{\la  }\left( -E^{\phi }\left( \cdot ,\omega
\right) ,\eta \setminus \xi \right) d\la  \left( \omega \right).
\nonumber
\end{align}
\begin{proposition}\label{contrmaps}
Let
\begin{equation}
ze^ { \beta C } \leq C.  \label{smallz}
\end{equation}%
Then $\hP_{\delta ,\eps }$ and $\hQ_{\delta }$ given by
\eqref{contreps} and \eqref{contrV} are well defined linear
contractions on $\L_C$.
\end{proposition}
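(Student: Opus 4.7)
The plan is to show $\|\hP_{\delta,\eps}G\|_C \leq \|G\|_C$ and $\|\hQ_\delta G\|_C \leq \|G\|_C$ for all $G\in B_{bs}(\Ga_0)$; contractivity on $\L_C$ then follows by density.

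First, I would bound $|\hP_{\delta,\eps}G(\eta)|$ pointwise by pushing the absolute value past the sum and the $\omega$-integral. Using $e^{-\eps E^\phi(\cdot,\omega)}\leq 1$ and the inequality \eqref{ineq_exp} in the form $(1-e^{-\eps E^\phi(x,\omega)})/\eps\leq E^\phi(x,\omega)$, I would replace the two Lebesgue--Poisson exponents in \eqref{contreps} by the single factor $e_\la(E^\phi(\cdot,\omega),\eta\setminus\xi)$. This is where the hypothesis \eqref{integrability} matters: it guarantees these quantities are finite. Note that for $\hQ_\delta$ the same bound is direct since $|e_\la(-E^\phi(\cdot,\omega),\eta\setminus\xi)|=e_\la(E^\phi(\cdot,\omega),\eta\setminus\xi)$, so both operators will be handled simultaneously.

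Next I apply Lemma~\ref{Minlos} to the integral over $\eta$ in $\|\cdot\|_C$, splitting $\eta=\xi\sqcup\eta'$. This separates $C^{|\eta|}=C^{|\xi|}C^{|\eta'|}$ and leaves an integral
\[
\int_{\Ga_0} e_\la(E^\phi(\cdot,\omega),\eta')\,C^{|\eta'|}d\la(\eta') = \exp\!\left\{C\int_\X E^\phi(x,\omega)dx\right\} = e^{C\beta|\omega|},
\]
by \eqref{LP-exp-mean} and Fubini, since $\int_\X E^\phi(x,\omega)dx=|\omega|\beta$. After this step the estimate has the form $\int_{\Ga_0}\int_{\Ga_0}((1-\delta)C)^{|\xi|}(z\delta e^{C\beta})^{|\omega|}|G(\xi\cup\omega)|\,d\la(\omega)d\la(\xi)$.

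Now I apply Lemma~\ref{Minlos} in the reverse direction, writing $\tilde\eta=\xi\cup\omega$ with $\xi\subset\tilde\eta$ and $\omega=\tilde\eta\setminus\xi$. The resulting sum over subsets factors by the binomial identity
\[
\sum_{\xi\subset\tilde\eta}\bigl((1-\delta)C\bigr)^{|\xi|}\bigl(z\delta e^{C\beta}\bigr)^{|\tilde\eta|-|\xi|} = \bigl((1-\delta)C + z\delta e^{C\beta}\bigr)^{|\tilde\eta|}.
\]
The hypothesis $ze^{\beta C}\leq C$ is precisely what ensures $(1-\delta)C+z\delta e^{C\beta}\leq(1-\delta)C+\delta C=C$, so the factor in front of $|G(\tilde\eta)|$ is bounded by $C^{|\tilde\eta|}$, giving $\|\hP_{\delta,\eps}G\|_C\leq\|G\|_C$, and identically for $\hQ_\delta$.

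There is no real obstacle here — it is a double application of Minlos' lemma sandwiching the computation of a Lebesgue--Poisson exponential integral. The only subtlety worth watching is to arrange the estimates \emph{before} the Minlos step so that the $\eps$-dependence disappears uniformly in $\eps$ (via \eqref{ineq_exp}), and to recognise that the binomial collapse reproduces exactly the contraction threshold \eqref{smallz}.
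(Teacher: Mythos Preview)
Your proof is correct and follows essentially the same route as the paper: pointwise bound via \eqref{ineq_exp} to replace both Lebesgue--Poisson exponents by $e_\la(E^\phi(\cdot,\omega),\eta\setminus\xi)$, Minlos' lemma to separate the $\eta'$-integral, evaluation of that integral via \eqref{LP-exp-mean} yielding $e^{C\beta|\omega|}$, reverse Minlos, and the binomial collapse $(1-\delta)C+z\delta e^{C\beta}\leq C$ under \eqref{smallz}. The only cosmetic difference is that the paper carries out the estimate directly for arbitrary $G\in\L_C$ (all integrands being nonnegative, Tonelli applies) rather than restricting to $B_{bs}(\Ga_0)$ and invoking density; this also lets the paper read off well-definedness ($|\hP_{\delta,\eps}G(\eta)|,\ |\hQ_\delta G(\eta)|<\infty$ for $\la$-a.a.\ $\eta$) as an immediate byproduct of the norm bound.
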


\begin{proof}
By \eqref{ineq_exp}, Lemma~\ref{Minlos}, and \eqref{LP-exp-mean}, we
get for any $G\in \L_C$
\begin{align*}
 &\max\Bigl\{ \bigl\Vert  \hP_{\delta ,\eps }G\bigr\Vert _{C} ; \
\bigl\Vert
\hQ_{\delta }G\bigr\Vert _{C} \Bigl\}\\
  \leq  &\int_{\Ga _{0}}\sum_{\xi \subset \eta }\left( 1-\delta
\right)^{\lv  \xi \rv  }\int_{\Ga _{0}}\left( z\delta \right)^{\lv
\omega \rv  }\lv  G\left( \xi \cup \omega \right) \rv e_{\la }\left(
E^{\phi }\left( \cdot ,\omega \right) ,\eta \setminus \xi \right)
d\la  \left( \omega \right) C^{\lv  \eta \rv  }d\la  \left( \eta
\right) \\  = &\int_{\Ga _{0}}\int_{\Ga _{0}}\left( 1-\delta
\right)^{\lv \xi \rv }\int_{\Ga _{0}}\left( z\delta \right)^{\lv
\omega \rv  }\lv G\left( \xi \cup \omega \right) \rv e_{\la }\left(
E^{\phi }\left( \cdot ,\omega \right) ,\eta \right) d\la \left(
\omega \right) C^{\lv \eta \rv  }d\la \left( \eta \right) C^{\lv \xi
\rv }d\la \left( \xi
\right) \\
 = &\int_{\Ga _{0}}\int_{\Ga _{0}}\left( 1-\delta
\right)^{\lv \xi \rv }\left( z\delta \right)^{\lv  \omega \rv  }\lv
G\left( \xi \cup \omega \right) \rv  \exp \left\{ C\beta \lv \omega
\rv \right\} d\la \left( \omega \right) C^{\lv  \xi \rv  }d\la
\left( \xi \right) \\  = &\int_{\Ga _{0}}\int_{\Ga _{0}}\left(
1-\delta \right)^{\lv \xi \rv }\lv  G\left( \xi \cup \omega \right)
\rv \left( z\delta \exp \left\{ C\beta \right\} C^{-1}\right)^{\lv
\omega \rv }C^{\lv \omega \rv  }C^{\lv  \xi \rv }d\la  \left( \xi
\right) d\la \left( \omega \right) \\
 = &\int_{\Ga _{0}}\lv  G\left( \xi \right) \rv  \left( 1-\delta
+z\delta \exp \left\{ C\beta \right\} C^{-1}\right)^{\lv \xi \rv
}C^{\lv  \xi \rv  }d\la \left( \xi \right) \leq \lV G\rV_{C},
\end{align*}
that proves the contraction property; then, in particular,
\[
\bigl( \hP_{\delta ,\eps }G\bigr) \left( \eta \right)<+\infty,
\qquad \bigl( \hQ_{\delta }G\bigr) \left( \eta \right)<+\infty
\]
for $\la$-a.a. $\eta\in\Ga_0$. \end{proof}

Now let us construct the approximations for the operators $\hL_V $
and $\hL_\ren$.

\begin{proposition}\label{apprgenGl}
Let for $\delta \in \left( 0;1\right) $%
\[
\hL_{\delta , V }:=\frac{1}{\delta }\bigl( \hQ_{\delta }-1\bigr) ;~~~%
\hL_{\delta ,\eps }:=\frac{1}{\delta }\bigl( \hP_{\delta ,\eps
}-1\bigr) , \ \eps>0.
\]%
Let \eqref{smallz} holds, then
\[
\lV  \bigl( \hL_{\delta , V }-\hL_V \bigr) G\rV_{C}<3\delta \lV
G\rV_{2C}
\]%
and for any $\eps >0$%
\[
\lV  \bigl( \hL_{\delta ,\eps }-\hL_{\eps ,%
\mathrm{ren}}\bigr) G\rV  _{C}\leq 3\delta \lV G\rV_{2C}.
\]
\end{proposition}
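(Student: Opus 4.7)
The plan is to expand $\hQ_\delta G(\eta)$ (respectively $\hP_{\delta,\eps}G(\eta)$) according to the size of the inner variable $\omega$ so as to isolate, at first order in $\delta$, exactly the pieces that match $\delta\hL_V G$ (respectively $\delta\hL_{\eps,\mathrm{ren}}G$). The $\omega=\emptyset$ contribution collapses to $(1-\delta)^{|\eta|}G(\eta)$ because $e_\la(0,\eta\setminus\xi)$ kills every $\xi\neq\eta$; the $|\omega|=1$ contribution, after setting $\omega=\{x\}$, equals $\delta L_2^V G(\eta)$ up to an extra factor $(1-\delta)^{|\xi|}$ inside the $\xi$-sum. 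Subtracting $G+\delta\hL_V G$ therefore decomposes $\hQ_\delta G-G-\delta\hL_V G$ as $A+B+C$ with
\begin{align*}
A(\eta)&=\bigl[(1-\delta)^{|\eta|}-1+\delta|\eta|\bigr]G(\eta),\\
B(\eta)&=z\delta\sum_{\xi\subset\eta}\bigl[(1-\delta)^{|\xi|}-1\bigr]\int_\X G(\xi\cup x)\,e_\la(-\phi(x-\cdot),\eta\setminus\xi)\,dx,\\
C(\eta)&=\sum_{\xi\subset\eta}(1-\delta)^{|\xi|}\!\int_{\{|\omega|\geq 2\}}\!(z\delta)^{|\omega|}G(\xi\cup\omega)\,e_\la(-E^\phi(\cdot,\omega),\eta\setminus\xi)\,d\la(\omega).
\end{align*}
For the $\eps$-dependent case the same decomposition holds with $-\phi$ and $-E^\phi$ replaced by the difference quotients $(e^{-\eps\phi}-1)/\eps$ and $(e^{-\eps E^\phi}-1)/\eps$, together with the extra factor $e_\la(e^{-\eps E^\phi(\cdot,\omega)},\xi)$ under the $\xi$-sum; each of these is pointwise dominated by its $V$-case counterpart thanks to $|e^{-\eps\phi}|\leq 1$ and $(1-e^{-\eps\phi})/\eps\leq\phi$ from \eqref{ineq_exp}.

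Next I would invoke the two polynomial inequalities $|(1-\delta)^n-1+n\delta|\leq 2^n\delta^2$ and $|(1-\delta)^n-1|\leq n\delta$, valid for $0<\delta\leq 1$ and $n\in\N$, which follow immediately from the binomial expansion (respectively from Bernoulli's inequality). The first bounds $|A(\eta)|$ by $\delta^2\,2^{|\eta|}|G(\eta)|$, so $\lV A\rV_C\leq\delta^2\lV G\rV_{2C}$. For $B$ the second inequality yields a factor $\delta|\xi|$, and the Minlos identity together with \eqref{LP-exp-mean}, used exactly as in the derivation of \eqref{bound2}, reduces the bound to $z\delta^2 e^{C\beta}C^{-1}\int|\eta|(|\eta|-1)|G(\eta)|C^{|\eta|}d\la$; the elementary estimate $n(n-1)\leq 2^n$ together with the smallness hypothesis \eqref{smallz} then gives $\lV B\rV_C\leq\delta^2\lV G\rV_{2C}$. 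For $C$, dropping $(1-\delta)^{|\xi|}\leq 1$ and extracting $\delta^{|\omega|}\leq\delta^2$ from the constraint $|\omega|\geq 2$, the same Minlos-plus-exponential-mean chain as in Proposition~\ref{contrmaps} followed by a further application of the Minlos identity that merges $\xi$ with $\omega$ into $\eta=\xi\cup\omega$ (producing a factor $2^{|\eta|}$) yields $\lV C\rV_C\leq\delta^2\lV G\rV_{2C}$ via \eqref{smallz} once more.

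Summing the three $\delta^2\lV G\rV_{2C}$ estimates and dividing by $\delta$ produces the claimed $3\delta\lV G\rV_{2C}$ bound in both cases. The main obstacle is the bookkeeping: one must verify that the $O(1)$ and $O(\delta)$ pieces really do cancel against $G+\delta\hL_V G$ (so that only the genuine $O(\delta^2)$ remainders $A$, $B$, $C$ survive before dividing by $\delta$), and check that the pointwise dominations in the $\eps$-case permit every estimate from the $V$-case to transfer without loss of the constant.
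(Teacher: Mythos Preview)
Your proof is correct and follows essentially the same approach as the paper: the paper also decomposes $\hQ_\delta$ (respectively $\hP_{\delta,\eps}$) according to $|\omega|=0$, $|\omega|=1$, $|\omega|\geq 2$, uses the same polynomial inequalities on $(1-\delta)^n$, and applies the Minlos identity plus \eqref{LP-exp-mean} in the same way to bound each of the three terms by $\delta\|G\|_{2C}$. The only cosmetic differences are that the paper keeps the factor $(1-\delta)^{|\xi|}$ in the $|\omega|\geq 2$ estimate (rather than dropping it) before merging $\xi$ and $\omega$, and that the paper repeats the $\eps$-case computations explicitly instead of reducing them to the $V$-case by pointwise domination as you do.
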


\begin{proof}
Let us denote%
\begin{align*}
\bigl( \hQ_{\delta }^{\left( 0\right) }G\bigr) \left( \eta \right)
:=&\sum_{\xi \subset \eta }\left( 1-\delta \right)^{\lv \xi \rv
}G\left( \xi \right) 0^{\lv  \eta \setminus \xi \rv }=\left(
1-\delta \right)^{\lv  \eta \rv  }G\left(
\eta \right) , \\
\bigl( \hQ_{\delta }^{\left( 1\right) }G\bigr) \left( \eta \right)
:=&z\delta \sum_{\xi \subset \eta }\left( 1-\delta \right)^{\lv \xi
\rv  }\int_\X G\left( \xi \cup x\right) e_{\la }\left( -\phi \left(
x-\cdot \right) ,\eta \setminus \xi \right) dx,
\end{align*}%
and
\[
\hQ_{\delta }^{\left( \geq 2\right) }:=\hQ_{\delta }-\bigl( \hQ
_{\delta }^{\left( 0\right) }+\hQ_{\delta }^{\left( 1\right) }\bigr)
.
\]%
Clearly, we have
\begin{align*}
\lV  \bigl( \hL_{\delta , V }-\hL_V \bigr) G\rV _{C} \leq &\lV
\frac{1}{\delta }\bigl( \hQ_{\delta }^{\left( 0\right)
}-1\bigr) G-L_{1}G\rV _{C} \\
&+\lV   \frac{1}{\delta }\hQ_{\delta }^{\left( 1\right)
}G-L_{2}^{V} G\rV _{C}+\lV  \frac{1}{\delta }\hQ %
_{\delta }^{\left( \geq 2\right) }G\rV _{C}.
\end{align*}%
It follows from the simple inequality
\begin{equation}\label{usefulineq}
0\leq n-\frac{1-\left( 1-\delta \right)^{n}}{\delta }<\delta \cdot
2^{n},\ n\in \N,\ \delta >0,
\end{equation}
that
\[
\lV  \frac{1}{\delta }\bigl( \hQ_{\delta }^{\left( 0\right)
}-1\bigr) G-L_{1}G\rV _{C}=\lV  \frac{1}{\delta }\left( \left(
1-\delta \right)^{\lv  \cdot \rv }-1\right) G+\lv \cdot \rv G\rV
_{C}<\delta \lV  G\rV _{2C}
\]%
and%
\begin{align*}
 &\lV  \frac{1}{\delta }\hQ_{\delta }^{\left( 1\right)
}G-L_{2}^{V}G\rV _{C} \\
 \leq & z\int_{\Ga _{0}}\lv  \sum_{\xi \subset \eta }\left[
\left( 1-\delta \right)^{\lv  \xi \rv  }-1\right] \int_\X G\left(
\xi \cup x\right) e_{\la  }\left( -\phi \left( x-\cdot \right) ,\eta
\setminus \xi \right) dx\rv  C^{\lv  \eta
\rv  }d\la  \left( \eta \right) \\
 \leq  & z\int_{\Ga _{0}}\int_{\Ga _{0}}\left[ 1-\left(
1-\delta \right)^{\lv  \xi \rv  }\right] \int_\X \lv  G\left( \xi
\cup x\right) \rv  e_{\la  }\left( \phi \left( x-\cdot \right) ,\eta
\right) dxC^{\lv  \eta \rv  }C^{\lv  \xi
\rv  }d\la  \left( \eta \right) d\la  \left( \xi \right) \\
 = &  z\exp \left\{ C\beta \right\} \int_{\Ga _{0}}\left[ 1-\left(
1-\delta \right)^{\lv  \xi \rv  }\right] \int_\X \lv G\left( \xi
\cup x\right) \rv  dxC^{\lv  \xi \rv  }d\la  \left( \xi \right)
\\\leq
 & \delta z\exp \left\{ C\beta \right\}
\int_{\Ga_{0}}\lv  \xi \rv  \int_\X \lv  G\left( \xi \cup x\right)
\rv  dxC^{\lv  \xi \rv  }d\la  \left( \xi \right) \\
 = &\delta z\exp \left\{ C\beta \right\}
C^{-1}\int_{\Ga_{0}}\lv \xi \rv  \left( \lv  \xi \rv  -1\right) \lv
G\left( \xi \right) \rv C^{\lv  \xi \rv  }d\la \left( \xi \right)
<\delta \lV G\rV_{2C},
\end{align*}%
since, $n\left( n-1\right) \leq 2^{n}$, $n\in \N$. And, if we denote
\[
\Ga _{0}^{\left( \geq 2\right) }:=\bigsqcup\limits_{n\geq 2}\Ga
_{0}^{\left( n\right) },
\]
we obtain
\begin{align*}
&\lV  \frac{1}{\delta }\hQ_{\delta }^{\left( \geq 2\right)
}G\rV _{C} \\
\leq &\frac{1}{\delta }\int_{\Ga _{0}}\sum_{\xi \subset \eta }\left(
1-\delta \right)^{\lv  \xi \rv  }\int_{\Ga_{0}^{\left( \geq 2\right)
}}\left( z\delta \right)^{\lv \omega \rv
}\lv  G\left( \xi \cup \omega \right) \rv  \\
&\qquad \times e_{\la  }\left( E^{\phi }\left( \cdot ,\omega \right)
,\eta \setminus \xi \right) d\la  \left( \omega \right) C^{\lv  \eta
\rv  }d\la  \left( \eta \right) \\
=&\frac{1}{\delta }\int_{\Ga _{0}}\int_{\Ga _{0}}\left( 1-\delta
\right)^{\lv  \xi \rv  }\int_{\Ga _{0}^{\left( \geq 2\right)
}}\left( z\delta \right)^{\lv  \omega \rv
}\lv  G\left( \xi \cup \omega \right) \rv  \\
&\qquad \times e_{\la  }\left( E^{\phi }\left( \cdot ,\omega \right)
,\eta \right) d\la  \left( \omega \right) C^{\lv  \eta \rv  }d\la
\left( \eta \right) C^{\lv  \xi \rv  }d\la \left(
\xi \right) \\
\leq &\delta \int_{\Ga _{0}}\left( 1-\delta \right)^{\lv \xi \rv
}\int_{\Ga _{0}}\left( z\exp \left\{ C\beta \right\} \right)^{\lv
\omega \rv  }\lv  G\left( \xi \cup \omega \right) \rv  d\la \left(
\omega \right) C^{\lv  \xi \rv
}d\la  \left( \xi \right) \\
=&\delta \int_{\Ga _{0}}\left( C-\delta C+z\exp \left\{ C\beta
\right\} \right)^{\lv  \xi \rv  }\lv  G\left( \xi \right)
\rv  d\la  \left( \xi \right) \\
\leq &\delta \int_{\Ga _{0}}\left( 2C-\delta C\right)^{\lv \xi \rv
}\lv  G\left( \xi \right) \rv  d\la \left( \xi \right) <\delta \lV
G\rV _{2C}.
\end{align*}

The same considerations may be done for $\hP_{\delta ,\eps }$.
Namely, let%
\begin{align*}
\bigl( \hP_{\delta ,\eps }^{\left( 0\right) }G\bigr) \left( \eta
\right) :=&\sum_{\xi \subset \eta }\left( 1-\delta \right)^{\lv \xi
\rv  }G\left( \xi \right) 1^{\lv \xi \rv }0^{\lv \eta \setminus \xi
\rv  }=\left( 1-\delta \right)
^{\lv  \eta \rv  }G\left( \eta \right) , \\
\bigl( \hP_{\delta ,\eps }^{\left( 1\right) }G\bigr) \left( \eta
\right) :=&z\delta \sum_{\xi \subset \eta }\left( 1-\delta \right)
^{\lv  \xi \rv  }\int_\X G\left( \xi \cup x\right)
\\
&\qquad \times e_{\la  }\left( e^{-\eps \phi \left( x-\cdot \right)
},\xi \right) e_{\la  }\left( \frac{e^{-\eps \phi \left( x-\cdot
\right) }-1}{\eps },\eta \setminus \xi \right) dx,
\end{align*}%
and
\[
\hP_{\delta ,\eps }^{\left( \geq 2\right) }:=\hP_{\delta ,\eps
}-\bigl( \hP_{\delta ,\eps }^{\left( 0\right) }+\hP_{\delta ,\eps
}^{\left( 1\right) }\bigr) .
\]%
Then%
\[
\lV  \frac{1}{\delta }\bigl( \hP_{\delta ,\eps }^{\left(
0\right) }-1\bigr) G-L_{1}G\rV _{C}=\lV  \frac{1}{\delta }%
\bigl( \hQ_{\delta }^{\left( 0\right) }-1\bigr) G-L_{1}G\rV
_{C}<\delta \lV  G\rV _{2C},
\]%
next, by \eqref{ineq_exp}, \eqref{smallz} and Lemma~\ref{Minlos},%
\begin{align}
 & \lV  \frac{1}{\delta }\hP_{\delta ,\eps }^{\left(
1\right) }G-L_{2,\eps }G\rV _{C} \nonumber\\
 \leq & z\int_{\Ga _{0}}\int_{\Ga _{0}}\left[ 1-\left(
1-\delta \right)^{\lv  \xi \rv  }\right] \int_\X \lv  G\left( \xi
\cup x\right) \rv  e_{\la  }\left( \phi \left( x-\cdot \right) ,\eta
\right) dxC^{\lv  \eta \rv  }C^{\lv  \xi
\rv  }d\la  \left( \eta \right) d\la  \left( \xi \right) \nonumber\\
 < & \lV  G\rV _{2C}\cdot \delta e^{C\beta} C^{-1} z \leq
\delta \lV G\rV _{2C},
\end{align}%
and, finally,%
\begin{align*}
\lV  \frac{1}{\delta }\hP_{\delta ,\eps }^{\left( \geq 2\right)
}G\rV _{C} \leq &\frac{1}{\delta }\int_{\Ga _{0}}\sum_{\xi \subset
\eta }\left( 1-\delta \right)^{\lv  \xi \rv  }\int_{\Ga_{0}^{\left(
\geq 2\right) }}\left( z\delta \right)^{\lv \omega \rv
}\lv  G\left( \xi \cup \omega \right) \rv  \\
&\qquad \times \lv  e_{\la  }\left( E^{\phi }\left( \cdot ,\omega
\right) ,\eta \setminus \xi \right) \rv  d\la  \left( \omega \right)
C^{\lv \eta \rv  }d\la  \left( \eta \right) <\delta \lV G\rV_{2C}.
\end{align*}
Combining all these inequalities we obtain the assertion.
\end{proof}

We will need the following results in the sequel.

\begin{lemma}[{\cite[Corollary 3.8]{EK1986}}] \label{EK_res}
Let $A$ be a linear operator on a Banach space $L$ with $D\left(
A\right) $ dense in $L$, and let $|\!|\!| \cdot |\!|\!|$ be a norm
on $D\left( A\right) $ with respect to which $D\left( A\right) $ is
a Banach space. For $n\in \mathbb{N}$ let $T_{n}$ be a linear
$\left\Vert \cdot \right\Vert $-contraction on $L$ such that
$T_{n}:D\left( A\right) \rightarrow D\left( A\right) $, and define
$A_{n}=n\left( T_{n}-1\right) $. Suppose there exist $\omega \geq 0$
and a sequence $\left\{ \varepsilon _{n}\right\} \subset \left(
0;+\infty \right) $ tending to zero such that
for $n\in \mathbb{N}$%
\begin{equation}\label{approperEK}
\left\Vert \left( A_{n}-A\right) f\right\Vert \leq \varepsilon
_{n}|\!|\!| f |\!|\!|,~f\in D\left( A\right)
\end{equation}
and
\begin{equation}\label{psevdocontr}
|\!|\!| T_{n}\upharpoonright _{D(A)} |\!|\!| \leq 1+\frac{\omega
}{n}.
\end{equation}
Then $A$ is closable and the closure of $A$ generates a strongly
continuous contraction semigroup on $L$.
\end{lemma}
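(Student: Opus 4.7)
The plan is to treat Lemma~\ref{EK_res} as a Trotter--Kato--type approximation: show that the uniformly continuous contraction semigroups $e^{tA_n}$ generated by the bounded operators $A_n=n(T_n-I)$ converge $\|\cdot\|$-strongly on $D(A)$, pass to a $C_0$-contraction semigroup $T(t)$ on $L$, and identify its generator with $\overline{A}$. Since each $T_n$ is a $\|\cdot\|$-contraction, $A_n$ is bounded and $e^{tA_n}=e^{-nt}\sum_{k=0}^{\infty}(nt)^k T_n^k/k!$ is a contraction semigroup on $L$.

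The key preparatory step is a uniform $|\!|\!|\cdot|\!|\!|$-bound on $D(A)$. Iterating \eqref{psevdocontr} gives $|\!|\!| T_n^k f|\!|\!|\leq(1+\omega/n)^k|\!|\!| f|\!|\!|$ for $f\in D(A)$, and since $D(A)$ is $|\!|\!|\cdot|\!|\!|$-Banach and $T_n$-invariant, the exponential series for $e^{tA_n}f$ converges in $|\!|\!|\cdot|\!|\!|$ to an element of $D(A)$ with $|\!|\!| e^{tA_n}f|\!|\!|\leq e^{\omega t}|\!|\!| f|\!|\!|$. The standard identity $e^{tA_n}f-e^{tA_m}f=\int_0^t e^{(t-s)A_m}(A_n-A_m)e^{sA_n}f\,ds$, combined with the $\|\cdot\|$-contractivity of $e^{(t-s)A_m}$ and hypothesis \eqref{approperEK}, then yields
\[
\|e^{tA_n}f-e^{tA_m}f\|\leq(\varepsilon_n+\varepsilon_m)\int_0^t e^{\omega s}\,ds\,|\!|\!| f|\!|\!|\goto 0,
\]
so $\{e^{tA_n}f\}$ is $\|\cdot\|$-Cauchy with limit $T(t)f$. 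Uniform contractivity and density of $D(A)$ extend $T(t)$ to a contraction on $L$, the convergence $e^{tA_n}\goto T(t)$ is strong on $L$, and the semigroup and $C_0$ properties pass to the limit.

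Let $B$ denote the generator of $T(t)$. For $f\in D(A)$, $\|A_nf-Af\|\leq\varepsilon_n|\!|\!| f|\!|\!|\goto 0$, so passing to the limit in $e^{tA_n}f-f=\int_0^t e^{sA_n}A_nf\,ds$ gives $T(t)f-f=\int_0^t T(s)Af\,ds$ and hence $A\subset B$; therefore $A$ is closable with $\overline{A}\subset B$. For the reverse inclusion I would fix $\lambda>\omega$ and, for $h\in D(A)$, set $g_n:=(\lambda I-A_n)^{-1}h=\int_0^\infty e^{-\lambda t}e^{tA_n}h\,dt$; the $|\!|\!|\cdot|\!|\!|$-bound shows $g_n\in D(A)$ with $|\!|\!| g_n|\!|\!|\leq(\lambda-\omega)^{-1}|\!|\!| h|\!|\!|$, and a direct computation gives $g_n\goto(\lambda I-B)^{-1}h=:f$ together with $Ag_n\goto\lambda f-h$. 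Hence $f\in D(\overline{A})$ and $(\lambda I-\overline{A})f=h$, so $(\lambda I-\overline{A})D(\overline{A})$ is dense in $L$; combined with the dissipativity of $A$ (inherited via $\|(I-rA_n)f\|\geq\|f\|$ and \eqref{approperEK}), Lumer--Phillips forces $\overline{A}=B$.

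The principal obstacle is confirming that $e^{sA_n}f\in D(A)$ so that $(A_n-A_m)e^{sA_n}f$ is actually defined in the Cauchy estimate: this is exactly where the two hypotheses interact, since the invariance $T_n(D(A))\subset D(A)$ keeps each partial sum in $D(A)$ while the $|\!|\!|\cdot|\!|\!|$-bound from \eqref{psevdocontr} guarantees that the series limit remains in the $|\!|\!|\cdot|\!|\!|$-closed subspace $D(A)$. The upgrade from $\overline{A}\subset B$ to $\overline{A}=B$ is a secondary technicality, resolved by the resolvent computation above.
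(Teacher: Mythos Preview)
The paper does not prove this lemma at all: it is quoted verbatim from Ethier--Kurtz \cite[Corollary~3.8]{EK1986} and used as a black box, so there is no in-paper argument to compare against. Your proposal is essentially the standard Trotter--Kato proof that Ethier--Kurtz themselves give, and the steps are correct: the $|\!|\!|\cdot|\!|\!|$-completeness of $D(A)$ together with \eqref{psevdocontr} ensures $e^{sA_n}f\in D(A)$ with $|\!|\!| e^{sA_n}f|\!|\!|\leq e^{\omega s}|\!|\!| f|\!|\!|$, the Duhamel identity then gives the Cauchy estimate, dissipativity of $A$ is inherited from that of $A_n$ via \eqref{approperEK}, and the resolvent computation showing that $(\lambda I-\overline{A})D(\overline{A})$ is dense (hence, by closed dissipativity, all of $L$) is exactly what Lumer--Phillips needs. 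One cosmetic remark: once Lumer--Phillips applies you already have the conclusion of the lemma, so the identification $\overline{A}=B$ is not strictly required---but it does confirm that the semigroup generated by $\overline{A}$ is the limit $T(t)$ you constructed, which is the content used later in the paper.
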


\begin{lemma}[cf. {\cite[Theorem 6.5]{EK1986}}] \label{EK_res-conv}
Let $L, L_n$, $n\in\N$ be Banach spaces, and $p_n: L\rightarrow L_n$
be bounded linear transformation, such that $\sup_n \|p_n\|<\infty
$. For any $n\in\N$, let $T_n$ be a linear contraction on $L_n$, let
$\eps_n>0$ be such that $\lim_{n\rightarrow \infty} \eps_n =0$, and
put $A_n=\eps_n^{-1}(T_n - \1)$. Let $T_t$ be a strongly continuous
contraction semigroup on $L$ with generator $A$ and let $D$ be a
core for $A$. Then the following are equivalent:
\begin{enumerate}
\item For each $f\in L$, $T_n^{[t/\eps_n]} p_n f\rightarrow p_n
T_t f$ in $L_n$ for all $t\geq0$ uniformly on bounded intervals.
Here and below $[\,\cdot\,\,]$ mean the entire part of a real
number.

\item For each $f\in D$, there exists $f_n\in L_n$ for each
$n\in\N$ such that $f_n \rightarrow p_n f$ and $A_n f_n \rightarrow
p_n Af$ in $L_n$.
\end{enumerate}
\end{lemma}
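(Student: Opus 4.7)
The plan is to reduce the equivalence to the classical multi-space Trotter--Kato approximation theorem for continuous semigroups (Theorem~6.1 of \cite{EK1986}) by bridging between the discrete iterates $T_n^{[t/\eps_n]}$ and a natural continuous-time interpolant via Chernoff's inequality. The first step is to introduce
\[
S_n(t) := e^{tA_n} = e^{-t/\eps_n}\sum_{k=0}^{\infty} \frac{(t/\eps_n)^k}{k!} T_n^k,
\]
which is a uniformly continuous contraction semigroup on $L_n$ generated by the bounded operator $A_n$ (contractivity being immediate since $S_n(t)$ is a convex combination of the contractions $T_n^k$).

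The key quantitative ingredient is Chernoff's estimate
\[
\lV T_n^k f_n - S_n(k\eps_n) f_n \rV_{L_n} \leq \sqrt{k}\,\lV (T_n - \1) f_n \rV_{L_n} = \eps_n\sqrt{k}\,\lV A_n f_n \rV_{L_n}.
\]
Setting $k = [t/\eps_n]$ and using $|t - k\eps_n| \leq \eps_n$ together with $\lV S_n(t) f_n - S_n(k\eps_n)f_n\rV \leq \eps_n \lV A_n f_n\rV$ from the strong continuity of $S_n$, this yields a bound of order $(\sqrt{t\eps_n} + \eps_n)\lV A_n f_n\rV$ for $\lV T_n^{[t/\eps_n]} f_n - S_n(t) f_n \rV$, which vanishes uniformly on bounded $t$-intervals whenever $\lV A_n f_n \rV$ remains bounded. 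With this bridge in place, properties of the iterates transfer to the $S_n(t)$'s and vice versa.

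For (2) $\Rightarrow$ (1) I would apply the continuous multi-space Trotter--Kato theorem to the $S_n(t)$'s to obtain $S_n(t) p_n f \to p_n T_t f$ uniformly on compact $t$-intervals for every $f \in L$; combining this with the Chernoff bound applied to the approximants $f_n$ furnished by (2), then replacing $f_n$ by $p_n f$ using contractivity of $T_n^k$ and $\lV f_n - p_n f\rV \to 0$, and finally extending from $f \in D$ to $f \in L$ by density (since $D$ is a core for $A$) together with $\sup_n \lV p_n\rV < \infty$, delivers (1). For the reverse direction the natural route is through a discrete resolvent representation such as $R_n(\la) := \eps_n \sum_{k\geq 0}(1+\la\eps_n)^{-k-1} T_n^k$, which after a trivial rewrite is exactly $(\la - A_n)^{-1}$ (up to the harmless shift $(1+\la\eps_n)^{-1}$) and which can be read off from the iterates $T_n^{[s/\eps_n]}$ via a Riemann-sum approximation of the Laplace transform $\int_0^\infty e^{-\la t} T_n^{[t/\eps_n]} p_n f\,dt$; hypothesis (1) then forces $R_n(\la) p_n f \to p_n(\la - A)^{-1} f$ in $L_n$ for every $\la > 0$, and setting $f_n := R_n(\la)(\la p_n f - p_n A f)$ for $f \in D$ produces approximants satisfying both convergences required in (2).

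The main obstacle is the implication (1) $\Rightarrow$ (2), since extracting generator-level convergence from iterated-operator convergence requires careful, $n$-uniform control of the Riemann-sum approximation of the Laplace transform and exploitation of the core property of $D$ to guarantee density of $(\la - A)D$ in $L$. The other direction, by contrast, is essentially a routine combination of the Chernoff bridge to $S_n(t)$ with the continuous Trotter--Kato theorem.
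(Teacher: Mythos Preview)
The paper does not supply its own proof of this lemma: it is stated with the attribution ``cf.\ \cite[Theorem 6.5]{EK1986}'' and is used as an imported result, so there is no in-paper argument to compare against. Your proposal is precisely the standard Ethier--Kurtz strategy behind that cited theorem --- interpolate the discrete iterates by the Yosida-type continuous semigroups $S_n(t)=e^{tA_n}$, control the discrepancy via Chernoff's square-root estimate, and then invoke the continuous multi-space Trotter--Kato theorem (Theorem~6.1 of \cite{EK1986}) together with a resolvent/Laplace-transform argument for the converse. In that sense your route is not different from ``the paper's'' route; it is the route the paper is implicitly delegating to.

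Your sketch for $(2)\Rightarrow(1)$ is solid: the Chernoff bound gives the bridge on $D$, contractivity of $T_n^k$ lets you pass from $f_n$ to $p_nf$, and density of $D$ in $L$ plus the uniform bound $\sup_n\lV p_n\rV<\infty$ extends to all of $L$. For $(1)\Rightarrow(2)$ the idea is right but the phrase ``after a trivial rewrite'' hides a small computation: the exact Laplace integral $\int_0^\infty e^{-\la t}T_n^{[t/\eps_n]}p_n g\,dt$ equals $\la^{-1}(1-e^{-\la\eps_n})\sum_{k\ge0}e^{-\la k\eps_n}T_n^k p_n g$, which is not literally $(\la-A_n)^{-1}p_n g=\eps_n\sum_{k\ge0}(1+\la\eps_n)^{-k-1}T_n^k p_n g$; you must still argue that the two geometric series differ by something vanishing with $\eps_n$ (or, equivalently, replace $\la$ by the shifted parameter $\eps_n^{-1}\log(1+\la\eps_n)$). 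Once resolvent convergence $(\la-A_n)^{-1}p_n g\to p_n(\la-A)^{-1}g$ is established for all $g\in L$, your choice $f_n:=(\la-A_n)^{-1}p_n(\la-A)f$ for $f\in D$ does give $f_n\to p_nf$ and $A_nf_n=\la f_n-p_n(\la-A)f\to p_nAf$, exactly as needed. The ``core'' hypothesis on $D$ is in fact not used in this direction; it matters only to make $(2)$ a sufficient condition.
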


\begin{lemma}\label{powersofcontractions}
Let $X$ be a Banach space with a norm $\|\cdot\|_X$; $A$ and $B$ be
linear contraction mappings on $X$. Let $Y$ with a norm
$\|\cdot\|_Y$ be a Banach subspace of $X$ such that  $Y$ is
invariant w.r.t. $B$. Suppose that the restriction of $B$ on $Y$ is
also a  contraction w.r.t. $\|\cdot\|_Y$. Suppose also that there
exists $c>0$ such that for any $f\in Y$
\begin{equation}\label{estY}
\| Af-Bf\|_X \leq c\|f\|_Y.
\end{equation}
Then for any $m\in\N$ and for any $f\in Y$
\begin{equation}\label{estYm}
\| A^m f-B^mf\|_X \leq cm\|f\|_Y.
\end{equation}
\end{lemma}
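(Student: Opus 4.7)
The plan is to use the standard telescoping identity for the difference of powers of two operators, combined with the three hypotheses (contractivity of $A$ on $X$, contractivity of $B$ on $X$ \emph{and} on $Y$, and the one-step estimate \eqref{estY}).

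First, I would write the telescoping decomposition
\[
A^m f - B^m f = \sum_{k=0}^{m-1} A^{m-1-k}(A-B)B^k f,
\]
which is easily verified by induction on $m$ (or by noting that consecutive summands cancel in pairs). Since we are comparing actions on a fixed vector $f \in Y$, there is no issue about defining these compositions: each $B^k f$ belongs to $Y \subset X$ (by $B$-invariance of $Y$), then $(A-B)B^k f$ lies in $X$, and the contraction $A^{m-1-k}$ acts on $X$.

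Next, I would take $\|\cdot\|_X$-norms and estimate termwise. The contraction hypothesis on $A$ in $X$ gives $\|A^{m-1-k} g\|_X \leq \|g\|_X$ for all $g \in X$, so
\[
\| A^m f - B^m f\|_X \leq \sum_{k=0}^{m-1} \|(A-B)B^k f\|_X.
\]
For each summand, the hypothesis \eqref{estY} applied to $B^k f \in Y$ yields
\[
\|(A-B)B^k f\|_X \leq c \|B^k f\|_Y,
\]
and the contractivity of $B$ on $(Y, \|\cdot\|_Y)$ gives $\|B^k f\|_Y \leq \|f\|_Y$. Summing the $m$ identical upper bounds $c\|f\|_Y$ produces \eqref{estYm}.

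There is no real obstacle here; the only point worth flagging is that one must use two different contraction properties of $B$: the $Y$-contractivity is what controls $\|B^k f\|_Y$ in the telescoping bound, while the $X$-contractivity of $A$ (not of $B$) handles the factors $A^{m-1-k}$. The assumption that $B$ is also a contraction on $X$ is not actually needed for this particular estimate, but it is harmless and consistent with the intended application where $A = \hP_{\delta,\eps}$ or $\hQ_\delta$ and $B$ plays a symmetric role.
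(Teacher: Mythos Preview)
Your proof is correct and follows essentially the same approach as the paper: the paper establishes the recursion $\|A^m f - B^m f\|_X \leq \|A^{m-1}f - B^{m-1}f\|_X + c\|f\|_Y$ by splitting off one factor and then invokes induction, which is exactly your telescoping identity unrolled. The ingredients used (contractivity of $A$ on $X$, the one-step estimate \eqref{estY}, and contractivity of $B$ on $Y$) are identical, and your remark that the $X$-contractivity of $B$ is not actually needed is accurate.
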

\begin{proof}
For any $f\in Y$, $m\geq2$ we have
\begin{align*}
&\|A^m f -B^m f\|_X\\\leq&\|A^m f -AB^{m-1}f\|_X + \|AB^{m-1}f-B^m
f\|_X\\\leq&\|A\|\cdot\|A^{m-1} f -B^{m-1}f\|_X+\|(A-B)B^{m-1}f\|_X
\\ \intertext{(where $\|A\|$ means the norm of the operator $A$ on $X$); since
$\|A\|\leq1$ and $B^{m-1}f\in Y$, condition \eqref{estY} yields}
\leq&\|A^{m-1} f -B^{m-1}f\|_X + c\|B^{m-1}f\|_Y,
\\\intertext{
but, $B$ is a contraction on $Y$, therefore, one get}
\leq&\|A^{m-1} f -B^{m-1}f\|_X
+ c\|f\|_Y,
\end{align*}
that gives \eqref{estYm} by induction principle. \end{proof}

And now one can construct the corresponding semigroups rigorously.

\begin{proposition}\label{descsemigroupexist}
 Let
\begin{equation}
z\leq \min \left\{ Ce^{-C\beta },2Ce^{-2C\beta }\right\} .
\label{verysmallz}
\end{equation}%
Then, $\bigl( \hL_V ,\L_{2C}\bigr) $ and $\bigl( \hL %
_{\eps ,\mathrm{ren}},\L_{2C}\bigr) $ are closable linear operators
in $\L_C$ and their closures $\bigl( \hL_V , D(\hL_V) \bigr) $ and $\bigl( \hL
_{\eps ,\mathrm{ren}}, D(\hL
_{\eps ,\mathrm{ren}}) \bigr) $ generate strongly continuous
contraction semigroups $\hT_{V}(t)$ and $\hT_{\ren }(t)$ on $\L_C$,
respectively. Moreover, for any $G\in \L_C$, $\eps >0$
\begin{equation}
\hQ_{\frac{1}{n}}^{[nt] }G\goto \hT_{V}(t)G,\qquad \hP
_{\frac{1}{n},\eps }^{[nt] }G\goto \hT_{\ren }(t)G,\qquad n\goto
\infty ~  \label{approx}
\end{equation}%
for any $t\geq 0$ uniformly on bounded intervals.
\end{proposition}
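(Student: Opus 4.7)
The plan is to deduce both statements in the proposition by feeding the estimates from Propositions \ref{contrmaps} and \ref{apprgenGl} into the Ethier--Kurtz approximation lemmas recalled as Lemmas \ref{EK_res} and \ref{EK_res-conv}. Concretely, for the existence claim I would apply Lemma \ref{EK_res} to the triple $L=\L_C$, auxiliary norm $|\!|\!|\cdot|\!|\!|=\|\cdot\|_{2C}$ on the dense subspace $D(A)=\L_{2C}$, and the contractions $T_n:=\hQ_{1/n}$ (resp.\ $T_n:=\hP_{1/n,\eps}$) with $A_n:=n(T_n-\1)=\hL_{1/n,V}$ (resp.\ $\hL_{1/n,\eps}$). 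Hypothesis \eqref{verysmallz} implies both $ze^{C\beta}\leq C$ and $ze^{2C\beta}\leq 2C$, so Proposition \ref{contrmaps} and its proof (verbatim with $C$ replaced by $2C$) show that $T_n$ is a contraction on $\L_C$ and on $\L_{2C}$; in particular $T_n:\L_{2C}\rightarrow\L_{2C}$ and $|\!|\!|T_n\upharpoonright_{\L_{2C}}|\!|\!|\leq 1$, which is \eqref{psevdocontr} with $\omega=0$. Proposition \ref{apprgenGl} gives $\|(A_n-A)G\|_C\leq(3/n)\|G\|_{2C}$ for every $G\in\L_{2C}$, which is \eqref{approperEK} with $\eps_n=3/n\goto 0$. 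Lemma \ref{EK_res} then yields that $A=\hL_V$ (resp.\ $\hL_{\eps,\mathrm{ren}}$) is closable on $\L_{2C}$ and that its closure generates a strongly continuous contraction semigroup on $\L_C$, which we denote $\hT_V(t)$ and $\hT_\ren(t)$.

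For the approximation claim \eqref{approx} I would invoke Lemma \ref{EK_res-conv} with $L_n=L=\L_C$, $p_n=\1$, $\eps_n=1/n$, the same $T_n$ and $A_n$ as above, generator $A$ equal to the closure obtained in the previous step, and core $D:=\L_{2C}$; the latter is a core by the very construction of $A$ as the closure of its restriction to $\L_{2C}$. Verifying condition (2) of Lemma \ref{EK_res-conv} is immediate: for $f\in\L_{2C}$ set $f_n:=f$, so that trivially $f_n\goto p_nf$ in $\L_C$, while Proposition \ref{apprgenGl} gives $\|A_nf-Af\|_C\leq(3/n)\|f\|_{2C}\goto 0$. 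Condition (1) then produces the convergence $T_n^{[nt]}G\goto\hT_V(t)G$ (resp.\ $\hT_\ren(t)G$) for every $G\in\L_C$, uniformly on bounded time intervals, which is exactly \eqref{approx}.

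The only non-routine point, and the place where the stronger of the two bounds in \eqref{verysmallz} is consumed, is establishing that $\hQ_{1/n}$ and $\hP_{1/n,\eps}$ preserve the smaller space $\L_{2C}$ and are contractive there: this is what allows one to take $\omega=0$ in Lemma \ref{EK_res}. All other ingredients are already assembled in Propositions \ref{contrmaps} and \ref{apprgenGl}, so no further analytic work is required beyond inserting them into the abstract Ethier--Kurtz statements; Lemma \ref{powersofcontractions} is not needed here but will be useful later to pass the convergence to the dual side.
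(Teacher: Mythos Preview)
Your proposal is correct and follows essentially the same approach as the paper's own proof: feed Propositions~\ref{contrmaps} and~\ref{apprgenGl} into Lemmas~\ref{EK_res} and~\ref{EK_res-conv}, using that \eqref{verysmallz} makes $\hQ_{1/n}$ and $\hP_{1/n,\eps}$ contractions on both $\L_C$ and $\L_{2C}$ so that \eqref{psevdocontr} holds with $\omega=0$, and then noting that $\L_{2C}$ is a core for the closures to verify condition~(2) of Lemma~\ref{EK_res-conv}. The paper's proof is just a terse two-sentence version of exactly this argument.
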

\begin{proof}
Note that \eqref{verysmallz} provides that $%
\hQ_{\delta }$ and $\hP_{\delta ,\eps }$ are also contractions on $%
\L_{2C}$. Then the  first part of the statement follows from
Lemma~\ref{EK_res}. Therefore, $\L_{2C}$ will be a core for the
generators and, by Lemma~\ref{EK_res-conv}, we obtain the
convergence~\eqref{approx}. \end{proof}

%%%%%%%%%%%%%%%%%%%%%%%%%%%%%%%%%%%%%%%%%%%%%%%%%%%%%%%%%%%

The definition \eqref{explV} of $\hL_V$ together with
Proposition~\ref{descsemigroupexist} allow us to expect that the
semigroup $\hT_{\ren }(t)$ converges to $\hT_{V}(t)$ in a proper
sense. The next theorem improve this statement. However, this result
is not crucial in the context of the our paper. Moreover, its proof
is quite technical and, on the other hand, is very similar to the
proof of the main Theorem~\ref{maintheorem} concerning the dual
semigroups. Hence, we give the sketch of the proof only.

\begin{theorem}\label{descsemigroupconv}
 Let \eqref{verysmallz} holds and suppose that $\bar{\phi}:=\sup_\X \phi \left( x\right) <+\infty $.
 Then for any $G\in \L_{2C}$
\[
\lV  \hT_{\ren }(t)G-\hT_{V}(t)G\rV _{C}\leq \eps
t\,\bar{\phi}\left( 1+\beta \right) \lV  G\rV _{2C}
\]%
for any $t\geq 0$, $\eps >0$. In particular, it means that $\hT
_{\ren }(t) G\goto \hT_{V}(t)G$ in $\L_C$ as $\eps \goto 0$ for any
$t\geq 0$ uniformly on bounded intervals.
\end{theorem}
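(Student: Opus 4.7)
The plan is to compare the Chernoff-type approximants of Proposition~\ref{descsemigroupexist} directly: since $\hP_{1/n,\eps}^{[nt]}G\goto\hT_\ren(t)G$ and $\hQ_{1/n}^{[nt]}G\goto\hT_V(t)G$ in $\L_C$, it suffices to prove the claimed bound with the semigroups replaced by these iterated contractions and then pass $n\to\infty$. To set up Lemma~\ref{powersofcontractions} with $X=\L_C$, $Y=\L_{2C}$, I first observe that replaying the calculation in the proof of Proposition~\ref{contrmaps} with $C$ replaced by $2C$ shows that $\hP_{\delta,\eps}$ and $\hQ_\delta$ are contractions on $\L_{2C}$ as well, since the needed bound $ze^{2C\beta}\leq 2C$ is exactly the second inequality in~\eqref{verysmallz}.

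The heart of the proof is the one-step estimate $\lV \hP_{\delta,\eps}G-\hQ_\delta G\rV_C\leq\delta\bigl(\eps\bar\phi(1+\beta)+O(\delta)\bigr)\lV G\rV_{2C}$ for $G\in\L_{2C}$. Comparing \eqref{contreps} and \eqref{contrV}, I would write the pointwise difference of their integrands as the telescoping sum
\[
\bigl[e_\la(e^{-\eps E^\phi(\cdot,\omega)},\xi)-1\bigr]\,e_\la\!\bigl(\tfrac{e^{-\eps E^\phi(\cdot,\omega)}-1}{\eps},\eta\setminus\xi\bigr)+\bigl[e_\la\!\bigl(\tfrac{e^{-\eps E^\phi(\cdot,\omega)}-1}{\eps},\eta\setminus\xi\bigr)-e_\la(-E^\phi(\cdot,\omega),\eta\setminus\xi)\bigr].
\]
Each bracket gains a factor of $\eps$ from $|1-e^{-t}|\leq t$ and $|e^{-t}-1+t|\leq t^2/2$ together with the product-difference identity $|\prod a_i-\prod b_i|\leq\sum_i|a_i-b_i|\prod_{j\neq i}\max(|a_j|,|b_j|)$; the resulting quadratic in $E^\phi$ is reduced by the assumption $\phi\leq\bar\phi$ via $E^\phi(x,\omega)^2\leq\bar\phi|\omega|E^\phi(x,\omega)$. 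Integrating against the $\L_C$-norm and applying Lemma~\ref{Minlos} twice (exactly as in the proof of Proposition~\ref{contrmaps}) converts the $e_\la(E^\phi(\cdot,\omega),\cdot)$-factors into $e^{C\beta|\omega|}$; the residual polynomial factors in $|\xi|$, $|\omega|$, $|\eta\setminus\xi|$ combine with $z\delta e^{C\beta}\leq\delta C$ (from~\eqref{verysmallz}) via the binomial identity $\sum_k k(n-k)\binom{n}{k}a^k b^{n-k}=n(n-1)ab(a+b)^{n-2}$ to extract the additional $\delta$, while $n(n-1)\leq 2^n$ and $n\leq 2^n$ absorb the degree into $\lV G\rV_{2C}=\int|G|(2C)^{|\cdot|}d\la$.

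Lemma~\ref{powersofcontractions} then upgrades the one-step bound to $\lV\hP_{\delta,\eps}^m G-\hQ_\delta^m G\rV_C\leq m\delta\bigl(\eps\bar\phi(1+\beta)+O(\delta)\bigr)\lV G\rV_{2C}$. Setting $\delta=1/n$, $m=[nt]$, the $O(\delta)$ contribution is $O(t/n)$ and vanishes, while the main term converges to $\eps t\bar\phi(1+\beta)\lV G\rV_{2C}$; combined with the Chernoff convergence~\eqref{approx}, this yields the theorem, uniformly on bounded time intervals. The main technical obstacle is producing the $\eps$- and $\delta$-factors \emph{jointly} in a single estimate: without the explicit $\delta$, the $m\sim 1/\delta$-fold telescoping in Lemma~\ref{powersofcontractions} would blow up, and it is the combination of the Taylor remainders (supplying $\eps$) with the tightness of~\eqref{verysmallz} (supplying $\delta$ through $z\delta e^{C\beta}\leq\delta C$) that lets both small parameters emerge at once.
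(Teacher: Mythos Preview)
Your proposal is correct and follows essentially the same route as the paper: triangle inequality against the Chernoff iterates~\eqref{approx}, a one-step estimate of the form~\eqref{mainestq}, and Lemma~\ref{powersofcontractions} (with $X=\L_C$, $Y=\L_{2C}$, using that~\eqref{verysmallz} makes $\hQ_\delta$ an $\L_{2C}$-contraction). The only cosmetic difference is that the paper states the one-step bound without your extra $O(\delta)$ term---it claims exactly $\frac{1}{n}\eps\bar\phi(1+\beta)\lV G\rV_{2C}$---but since the proof there is only sketched (it defers the computation to the dual estimate~\eqref{dopex}), and since your $O(\delta)$ disappears after multiplying by $[nt]$ anyway, this does not amount to a different argument.
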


\begin{proof}
By the triangle inequality,%
\begin{align}
\lV  \hT_{\ren }(t)G-\hT_{V}(t)G\rV _{C} \leq
&\lV  \hT_{\ren }(t)G-\hP_{\frac{1}{n},\eps }^{%
[nt] }G\rV _{C}  \label{triangleineq} \\
&+\lV  \hP_{\frac{1}{n},\eps }^{[nt] }G-\hQ %
_{\frac{1}{n}}^{[nt] }G\rV _{C }+\lV  \hQ %
_{\frac{1}{n}}^{[nt] }G-\hT_{V}(t)G\rV _{C}. \nonumber
\end{align}%
By \eqref{approx}, the first and third norms in the r.h.s. of
\eqref{triangleineq} are tend to $0$ as $n\rightarrow\infty$. Next,
in a similar way as for the proof of \eqref{dopex} one can show that
for any $G\in \L_{2C}$
\begin{equation}
\lV  \hP_{\frac{1}{n},\eps }G-\hQ_{\frac{1}{n}%
}G\rV _{C }\leq \frac{1}{n} \,
\eps \, \bar{\phi}\left( 1+\beta \right) \lV  G\rV _{2C}. \label{mainestq}
\end{equation}%
By Proposition~\ref{contrmaps} and condition \eqref{verysmallz}, the
subspace $\L_{2C}$ is $\hQ_{\frac{1}{n}}$-invariant, hence, by
Lemma~\ref{powersofcontractions}, we obtain
\begin{align*}
\lV  \hP_{\frac{1}{n},\eps }^{[nt] }G-\hQ_{%
\frac{1}{n}}^{[nt] }G\rV _{C }&\leq [nt]
\frac{\bar{\phi}\left( 1+\beta \right)}{n}\eps \lV  G\rV _{2C}\\ &< \bar{\phi}\left( 1+\beta \right)\left( t+\frac{1}{n}%
\right) \eps \lV  G\rV _{2C},
\end{align*}%
that fulfilled the first assertion. And, clearly, $\L_{2C}$ is a
dense subspace of $\L_C$. \end{proof}

\subsection{Convergence of the evolutions in $\K_C$}

Let $\eps>0$ be given. Let $\bigl( \hL '_\ren, D(\hL '_\ren)\bigr)$
and $\bigl( \hL '_V, D(\hL '_V)\bigr)$ be dual operators to the
closed operators $\bigl( \hL _\ren, D(\hL _\ren)\bigr)$ and $\bigl(
\hL _V, D(\hL _V)\bigr)$ in the Banach space $(\L_C)'$. Let the
operators $\bigl( \hLrenadj, D(\hLrenadj)\bigr)$ and $\bigl( \hL
^\ast_V, D(\hL ^\ast_V)\bigr)$ be their images in the space $\K_C$
under the isometry \eqref{isometry}. Our aim is to transfer the
previous results onto $\ast$-objects. However, similarly to the case
of the operator $\hL^\ast$ (see Subsection~\ref{dualconstraction}),
the space $\K_C$ is too big. The reason is that the dual semigroup
in a non-reflexive case (namely, $L^1$ case) will not be a strongly
continuous semigroup on the whole dual space. Hence, we consider
some Banach subspace of $\K_C$ which will be useful for the strong
continuity property.

\begin{proposition}\label{adjgen}
For any $\a\in(0;1)$, $\eps>0$, and $k\in\K_\aC$ we have that
\begin{equation}\label{incl}
\bigl\{\hL_{\ren}^\ast k, \,
\hL_{V}^\ast k\bigr\}\subset\K_C.
\end{equation}
Moreover, for any $k\in \K_\aC$
\begin{align}
( \hL_{\ren}^\ast k) \left( \eta \right)  =&-\lv  \eta \rv k\left(
\eta \right) \label{genrenadj}\\& +z\sum_{x\in\eta}\int_{\Ga_0}
e_{\la  }\left( e^{-\eps \phi \left( x-\cdot \right) },\eta\setminus
x \right) \nonumber\\&\qquad\qquad \times e_{\la }\left(
\frac{e^{-\eps \phi \left( x-\cdot \right) }-1}{\eps },\xi   \right)
k\left( \xi \cup \eta\setminus x\right) d\la(\xi)\nonumber
\end{align}
and
\begin{align}
(\hL_{V}^\ast k) \left( \eta \right)  =&-\lv  \eta \rv k\left( \eta
\right) \label{genVladj} \\ &+z\sum_{x\in\eta}\int_{\Ga_0} e_{\la
}\left(- \phi \left( x-\cdot \right),\xi   \right) k\left( \xi \cup
\eta\setminus x\right) d\la(\xi).\nonumber
\end{align}
\end{proposition}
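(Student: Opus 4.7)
The formulas \eqref{genrenadj} and \eqref{genVladj} will be obtained by computing the adjoint through the duality \eqref{duality}, that is, by transforming $\llu \hL_\ren G, k \rru$ (resp.\ $\llu \hL_V G, k \rru$) into a pairing in which $G\in\L_{2C}$ sits alone under the $d\la$-integral and reading off the kernel. The inclusion \eqref{incl} will then follow from a direct pointwise estimate of those kernels.

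The multiplication piece $-|\eta|G(\eta)$ in \eqref{expleps} and \eqref{explV} is self-dual under \eqref{duality} and immediately produces the $-|\eta|k(\eta)$ term in both target formulas. For the remaining $L_{2,\eps}$ (resp.\ $L_2^V$) contribution, the plan is to apply Lemma~\ref{Minlos} with
\[
H(\xi, \eta\setminus\xi, \eta)=\int_\X G(\xi\cup x)\, e_\la\bigl(e^{-\eps\phi(x-\cdot)},\xi\bigr)\, e_\la\Bigl(\tfrac{e^{-\eps\phi(x-\cdot)}-1}{\eps},\eta\setminus\xi\Bigr) dx\cdot k(\eta)
\]
(and its Vlasov analog, in which only the single factor $e_\la(-\phi(x-\cdot),\eta\setminus\xi)$ appears). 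This converts the sum $\sum_{\xi\subset\eta}$ into a product of two $d\la$-integrals and replaces $k(\eta)$ by $k(\eta\cup\xi)$. The next step is to use the standard identity
\[
\int_{\Ga_0}\int_\X F(\xi\cup x)\,\Psi(\xi,x)\,dx\,d\la(\xi)=\int_{\Ga_0} F(\zeta)\sum_{x\in\zeta}\Psi(\zeta\setminus x,x)\,d\la(\zeta),
\]
which is itself the $|\xi|=1$ specialization of Lemma~\ref{Minlos}, to absorb the $dx$-integral into a single $d\la(\zeta)$-integral against $G(\zeta)$ with the sum $\sum_{x\in\zeta}$ attached. Swapping the remaining two $d\la$-integrals and relabeling then yields exactly the kernels in \eqref{genrenadj} and \eqref{genVladj}.

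For the inclusion \eqref{incl}, I will substitute the bound $|k(\xi\cup(\eta\setminus x))|\leq \|k\|_{\K_\aC}(\aC)^{|\xi|+|\eta|-1}$ into the just-derived expressions, together with $|e_\la(-\phi(x-\cdot),\xi)|=e_\la(\phi(x-\cdot),\xi)$ in the Vlasov case and the estimates $e^{-\eps\phi}\leq 1$, $|(e^{-\eps\phi}-1)/\eps|\leq\phi$ from \eqref{ineq_exp} in the renormalized case. The resulting $\xi$-integral is then computed via \eqref{LP-exp-mean} and produces a factor $e^{\a C\beta}$. Collecting everything, both $|(\hLrenadj k)(\eta)|$ and $|(\hL_V^\ast k)(\eta)|$ are bounded by $\|k\|_{\K_\aC}\,|\eta|\,\a^{|\eta|-1}\bigl(\a+z\,e^{\a C\beta}/C\bigr)C^{|\eta|}$, and since $|\eta|\a^{|\eta|-1}$ is uniformly bounded in $\eta$ for $\a\in(0;1)$, this is $O(C^{|\eta|})$, proving $\hLrenadj k,\,\hL_V^\ast k\in\K_C$. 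The main (and essentially only) technical hurdle is bookkeeping: making sure that each interchange of $\sum_{\xi\subset\eta}$, $\int d\la$, and $\int dx$ is Fubini-legitimate, which is guaranteed by the very same $O(C^{|\eta|})$ pointwise control applied to $|G|\cdot|k|$.
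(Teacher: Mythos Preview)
Your proposal is correct and follows essentially the same route as the paper: the duality computation via two applications of Lemma~\ref{Minlos} (first to split $\sum_{\xi\subset\eta}$, then to absorb the $dx$-integral into a sum over $x\in\zeta$) is exactly what the paper does, and the pointwise estimate for \eqref{incl} via $|k|\le\|k\|_{\K_{\aC}}(\aC)^{|\cdot|}$, $e^{-\eps\phi}\le1$, $|(e^{-\eps\phi}-1)/\eps|\le\phi$, \eqref{LP-exp-mean}, and boundedness of $n\a^{n}$ matches the paper's argument as well. The only cosmetic difference is that the paper bounds $|\eta|\a^{|\eta|}$ explicitly by $-1/(e\ln\a)$, whereas you simply note uniform boundedness.
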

\begin{proof}
By Lemma~\ref{Minlos}, for any $G\in\Bbs(\Ga_0)$ we have
\begin{align*}
&\int_{\Ga_0}\sum_{\xi \subset
\eta }\int_\X e_{\la  }\left( e^{-\eps \phi \left( x-\cdot \right)
},\xi \right) e_{\la }\left( \frac{e^{-\eps
\phi \left( x-\cdot \right) }-1}{\eps },\eta \setminus \xi \right)
\\&\qquad\qquad\times G\left( \xi \cup x\right) dx k(\eta)d\la(\eta)\\&=
\int_{\Ga_0}\int_{\Ga_0}\int_\X e_{\la  }\left( e^{-\eps \phi \left( x-\cdot \right)
},\xi \right) e_{\la }\left( \frac{e^{-\eps
\phi \left( x-\cdot \right) }-1}{\eps },\eta \right)
\\ &\qquad\qquad\times G\left( \xi \cup x\right) dx k(\eta\cup \xi)d\la(\xi)d\la(\eta)\\
&=\int_{\Ga_0}\int_{\Ga_0}\sum_{x\in\xi} e_{\la  }\left( e^{-\eps
\phi \left( x-\cdot \right) },\xi \setminus x\right) e_{\la }\left(
\frac{e^{-\eps \phi \left( x-\cdot \right) }-1}{\eps },\eta \right)
\\ &\qquad\qquad\times G\left( \xi \right) dx k(\eta\cup \xi\setminus x)d\la(\xi)d\la(\eta),
\end{align*}
that implies \eqref{genrenadj}. The equality \eqref{genVladj} may be
obtained in the same way or just as a point-wise limit of
\eqref{genrenadj} as $\eps\goto 0$.

The inclusion \eqref{incl} follows from the estimate ($k\in\K_\aC$)
\begin{align*}
&\frac{1}{C^\n}\sum_{x\in\eta}\int_{\Ga_0} e_{\la  }\left( e^{-\eps
\phi \left( x-\cdot \right) },\eta\setminus x \right) e_{\la }\left(
\left\vert\frac{e^{-\eps \phi \left( x-\cdot \right) }-1}{\eps
}\right\vert,\xi   \right) \bigl\vert k\left( \xi \cup \eta\setminus
x\right) \bigr\vert d\la(\xi)\\ \leq &
\frac{\|k\|_{\K_\aC}}{C^\n}\sum_{x\in\eta}\int_{\Ga_0} e_{\la
}\left( \phi \left( x-\cdot \right),\xi   \right) (\a C)^{|\xi \cup
\eta\setminus x|} d\la(\xi)\\= &\frac{\|k\|_{\K_\aC}\cdot \exp\{\a
C\beta\}}{\a C} |\eta| \a^{\n}\leq \frac{\|k\|_{\K_\aC}\cdot
\exp\{\a C\beta\}}{\a C} \cdot\frac{-1}{e\ln\a},
\end{align*}
where we used that $x\a^x \leq -\dfrac{1}{e\ln \a }$ for any $\a
\in(0;1)$ and $x\geq0$; and the similar estimates for
\begin{equation}
\frac{1}{C^\n} \n \left\vert k(\eta) \right\vert, \quad
\frac{1}{C^\n}\sum_{x\in\eta}\int_{\Ga_0} e_{\la }\left( \phi \left(
x-\cdot \right) ,\xi   \right) \bigl\vert k\left( \xi \cup
\eta\setminus x\right) \bigr\vert d\la(\xi). \qedhere
\end{equation}
\end{proof}

Let now \eqref{verysmallz} holds. By
Proposition~\ref{descsemigroupexist}, there exist strongly
continuous contraction semigroups $\hT _\ren(t)$ and $\hT_V(t)$ on
$\L_C$. Then the corresponding dual semigroups
 $\hT '_\ren(t)$ and $\hT '_V(t)$ act in the space $(\L_C)'$.
Let us denote by $\hT ^\ast_\ren(t)$ and $\hT^\ast_V(t)$ their
corresponding images in $\K_C$ under the isometry \eqref{isometry}.

Proposition~\ref{adjgen} yields that for any $\a\in(0;1)$ the following inclusion holds
\begin{equation}\label{inclusion}
\overline{\K_\aC} \subset \Bigl(\bigcap_{\eps>0} \overline{D(\hL
_\ren^\ast)}\,\Bigr)\bigcap \overline{D(\hL_V^\ast)}
\end{equation}
(all closures are in $\K_C$; in particular, $\overline{\K_\aC}$
is a Banach space with norm \mbox{$\|\cdot\|_C$}).

Moreover, by, e.g., \cite[Sections~1.2,~1.3]{vNee1992} or \cite[Subsection~II.2.5]{EN2000}, for any $\eps>0$ the restrictions
$\hT^\odot_\ren(t)$ and $\hT^\odot_V(t)$ of $\hT^\ast_\ren(t)$ and
$\hT^\ast_V(t)$ onto $\overline{D(\hLrenadj)}$ and
$\overline{D(\hL_V^\ast)}$, correspondingly, are strongly continuous
semigroups; their generators $\hL_\ren^\odot$ and $\hL_V^\odot$ are
the parts of $\hLrenadj$ and $\hL_V^\ast$, correspondingly. Namely,
\begin{align*}
D(\hL^\odot_\ren)&=\bigl\{k\in D(\hL ^\ast_\ren) \bigm| \hL
_\ren^\ast k\in \overline{D(\hLrenadj)} \bigr\}, \\
D(\hL^\odot_V)&=\bigl\{k\in D(\hL^\ast_V) \bigm| \hL_V^\ast k\in
\overline{D(\hL_V^\ast)} \bigr\},
\end{align*}
and
\begin{align*}
\hLrenadj k &=\hL^\odot_\ren
k, \qquad k\in D(\hL_\ren^\odot),\\
\hL^\ast_V k &=\hL^\odot_V k,  \qquad k\in D(\hL_V^\odot).
\end{align*}

\begin{proposition}\label{sun-inv}
Assume that, as before,
\begin{equation}
z\leq \min \left\{ Ce^{-C\beta },2Ce^{-2C\beta }\right\} .
\label{verysmallz-2}
\end{equation}%
If $C \beta=\ln2$ we suppose additionally that $z<\frac{C}{2}$.
Then, there exists $\a_1=\a_1(z,\beta,C)\in(0;1)$ such that for any
$\a\in(\a_1;1)$ the space $\KK$ will be $\hT^\odot_V(t)$- and $\hT
^\odot_\ren(t)$-invariant, $\eps>0$.
\end{proposition}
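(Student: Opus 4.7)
The plan is to reduce matters to a contraction estimate for the adjoints of the approximations $\hQ_\delta$ and $\hP_{\delta,\eps}$ on $\K_\aC$, and then lift the resulting invariance to the semigroups via the convergence \eqref{approx} together with a weak-$\ast$ compactness argument. First I would compute the adjoints explicitly: using the duality \eqref{duality} and Lemma~\ref{Minlos} one gets
\[
((\hQ_\delta)^\ast k)(\sigma)=\sum_{\omega\subset\sigma}(1-\delta)^{|\sigma\setminus\omega|}(z\delta)^{|\omega|}\int_{\Ga_0}e_\la(-E^\phi(\cdot,\omega),\eta)\,k(\eta\cup\sigma\setminus\omega)\,d\la(\eta),
\]
together with an analogous formula for $(\hP_{\delta,\eps})^\ast$ containing the factors $e_\la(e^{-\eps E^\phi(\cdot,\omega)},\sigma\setminus\omega)$ and $e_\la((e^{-\eps E^\phi(\cdot,\omega)}-1)/\eps,\eta)$ in the integrand. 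Repeating the chain of inequalities from the proof of Proposition~\ref{contrmaps}, but now inserting $|k(\cdot)|\le \|k\|_{\K_\aC}(\aC)^{|\cdot|}$ and using \eqref{ineq_exp} to dominate the $\eps$-dependent factors by $E^\phi$, I expect to arrive at
\[
\max\bigl\{|((\hQ_\delta)^\ast k)(\sigma)|,\ |((\hP_{\delta,\eps})^\ast k)(\sigma)|\bigr\}\le\|k\|_{\K_\aC}\bigl((1-\delta)\aC+z\delta\,e^{\aC\beta}\bigr)^{|\sigma|}.
\]
Thus both operators are contractions on $\K_\aC$ precisely when $z\,e^{\aC\beta}\le\aC$.

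Next I would produce $\a_1$ by analysing $h(\a):=\aC-ze^{\aC\beta}$ on $(0;1]$. Under \eqref{verysmallz-2} one always has $h(1)\ge 0$; whenever $h(1)>0$, continuity gives $h>0$ on a left neighbourhood of $1$ and any such neighbourhood supplies $\a_1$. The equality $h(1)=0$ under \eqref{verysmallz-2} forces $z=Ce^{-C\beta}$ with $C\beta\le\ln2$; for $C\beta>\ln2$ the other bound $z\le 2Ce^{-2C\beta}<Ce^{-C\beta}$ is automatically strict, while at $C\beta=\ln2$ the supplementary assumption $z<C/2$ excludes exactly that boundary point. Hence $h(1)>0$ in every case covered by the proposition, and the contraction condition holds on some interval $(\a_1;1)$.

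Finally I would pass from the approximations to the semigroups. Fix $\a\in(\a_1;1)$ and $k\in\K_\aC$. From \eqref{approx} and \eqref{duality},
\[
\int_{\Ga_0}\bigl(((\hQ_{1/n})^\ast)^{[nt]}k\bigr)G\,d\la=\int_{\Ga_0}k\cdot\hQ_{1/n}^{[nt]}G\,d\la\longrightarrow\int_{\Ga_0}\hT^\ast_V(t)k\cdot G\,d\la,\quad G\in\L_C,
\]
so the iterates converge to $\hT^\ast_V(t)k$ weakly-$\ast$ in $\K_C$. By the contraction property above, the same iterates lie in the closed ball of radius $\|k\|_{\K_\aC}$ of $\K_\aC$, which is weakly-$\ast$ compact by Banach--Alaoglu (viewing $\K_\aC$ as the dual of $L^1(\Ga_0,(\aC)^{|\cdot|}d\la)$). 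Since $\Bbs(\Ga_0)$ is a common dense family of test functions, the weak-$\ast$ limit in $\K_\aC$ must coincide with $\hT^\ast_V(t)k$, which therefore belongs to $\K_\aC$ with norm at most $\|k\|_{\K_\aC}$. Density of $\K_\aC$ in $\KK$ together with boundedness of $\hT^\ast_V(t)$ on $\K_C$ then yields $\hT^\ast_V(t)\KK\subset\KK$. The same reasoning with $\hP_{1/n,\eps}$ in place of $\hQ_{1/n}$ handles $\hT^\ast_\ren(t)$, and the $\odot$-semigroups inherit the invariance by \eqref{inclusion}.

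The hard part will be the sharp analysis of $h$: at the boundary of \eqref{verysmallz-2} the required inequality $ze^{\aC\beta}\le\aC$ degenerates to an equality at $\a=1$, and strict positivity of $h(1)$ must be extracted from the interplay of the two bounds in \eqref{verysmallz-2} reinforced by the supplementary strict assumption at $C\beta=\ln2$; the remaining weak-$\ast$ compactness and approximation machinery is then essentially routine.
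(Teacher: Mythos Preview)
Your approach is genuinely different from the paper's and, apart from one boundary issue, sound. The paper does not argue via weak-$*$ compactness: instead it observes that condition \eqref{verysmallz-2} can be upgraded to the analogous pair of inequalities with $\alpha C$ in place of $C$ for $\alpha$ close to $1$, re-runs Proposition~\ref{descsemigroupexist} to produce auxiliary contraction semigroups $\hT_{\alpha,V}(t)$, $\hT_{\alpha,\ren}(t)$ on the larger space $\L_{\alpha C}\supset\L_C$, shows via the approximation \eqref{approx} that these agree with $\hT_V(t)$, $\hT_\ren(t)$ on $\L_C$, and then dualises directly to get $\hT_V^\ast(t)k=\hT_{\alpha,V}^\ast(t)k\in\K_{\alpha C}$ for $k\in\K_{\alpha C}$. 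Your route is more economical in that it needs only the single contraction inequality $ze^{\alpha C\beta}\le\alpha C$ (the paper needs this \emph{and} $ze^{2\alpha C\beta}\le 2\alpha C$ to invoke Proposition~\ref{descsemigroupexist} at level $\alpha C$), and it avoids constructing the auxiliary $\L_{\alpha C}$-semigroups altogether; the price is the Banach--Alaoglu step, which is harmless here since $\L_{\alpha C}$ is a separable $L^1$-space and the weak-$*$ topology on bounded sets of $\K_{\alpha C}$ is therefore metrisable. The explicit adjoint formulas you write down are precisely those the paper later derives as \eqref{dualaprrVl} and \eqref{dualaprrren} in the proof of Theorem~\ref{maintheorem}, so you are in effect anticipating that computation.

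One point in your analysis of $h(\alpha)=\alpha C-ze^{\alpha C\beta}$ does not go through: the conclusion ``$h(1)>0$ in every case covered by the proposition'' is false when $C\beta<\ln 2$ and $z=Ce^{-C\beta}$ (equality in \eqref{verysmallz-2} is permitted here, and the proposition imposes no supplementary hypothesis in this range). In that situation $h(1)=0$ while $h'(1)=C(1-C\beta)>0$, so $h<0$ immediately to the left of $1$; since $h''<0$, concavity then forces $h(\alpha)\le h'(1)(\alpha-1)<0$ on all of $(0,1)$, and no $\alpha_1$ with your contraction property exists. The paper's own argument has the same difficulty at this boundary (its choice $\alpha_1=\max\{1/2,\,x_1/(C\beta),\,1/C\}$ collapses to $1$ because $x_1=C\beta$), so this is an artefact of the non-strict inequality in the stated hypotheses rather than a defect of your strategy; but your sentence asserting strict positivity of $h(1)$ should be amended accordingly.
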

\begin{proof}
The proof is fully analogous to that of
\cite[Proposition~3.3]{FKK2010}. For readers convince we explain it
in details.

By \eqref{verysmallz-2}, $z\beta\leq\min\{C\beta e^{-C\beta},
2C\beta e^{-2C\beta}\}$. Note that the function $f(x)=x e^{-x}$,
$x\geq 0$ is increasing on $(0; 1)$ from $0$ to $e^{-1}$ and it is
asymptotically decreasing on $(1;+\infty)$ from $e^{-1}$ to $0$.
Therefore, if $C\beta e^{-C\beta}\neq 2C\beta e^{-2C\beta}$ then
\eqref{verysmallz-2} with necessity implies $z \beta < e^{-1}$.
Otherwise, if $C\beta=\ln 2$ then the condition $2z<C$ implies
$z\beta<\frac{C\beta}{2}=C\beta e^{-C\beta}=2C\beta e^{-2C\beta}$,
and, again, $z\beta<e^{-1}$. As a result, the equation $f(x)=z
\beta$ has exactly two roots, say, $0<x_1<1<x_2<+\infty$. Therefore,
$x_1< C \beta < 2 C \beta < x_2$.

If $C\beta>1$ then we  set $\a_1:=\max\left\{\frac{1}{2};\frac{1}{C
\beta};\frac{1}{C}\right\}<1$. This yields $2\a C \beta > C
\beta$ and $\a C\beta >1>x_{1}$. If $x_{1}<C\beta \leq 1$ then we
set $\a_1:=\max\left\{\frac{1}{2};\frac{x_{1}}{C
\beta};\frac{1}{C}\right\}<1$ that gives $2\a C \beta > C \beta$
and $\a C\beta >x_{1}$. As a result,
\begin{equation}\label{ineq-alpha}
x_1<\a C\beta  < C\beta <2 \a C \beta < 2 C \beta < x_2
\end{equation}
and $1<\a C<C<2\a C<2C$. The last inequality shows that $\L_{2C
}\subset\L_{2\a C}\subset \L_C\subset \L_{\a C}$.

By \eqref{ineq-alpha}, $z\beta<\min\{f(\aC\beta),f(2\aC\beta)\}$,
hence, $z<\min\{\aC e^{-\aC\beta}, 2\aC
e^{-2\aC\beta}\}$. Then, analogously to Proposition~\ref{descsemigroupexist}, we obtain that the operators
$\bigl( \hL_V ,\L_{2\aC}\bigr) $ and $\bigl( \hL %
_{\eps ,\mathrm{ren}},\L_{2\aC}\bigr) $
are closable in $\L_{\a C}$ and their closures are generators of
contraction semigroups, say, $\hT_{\a,V} (t)$ and $\hT_{\a,\ren} (t)$ on $\L_{\a C}$, correspondingly.

It is easy to see, that $\hT_{\a,V} (t) G= \hT_V (t) G$ and $\hT_{\a,\ren} (t) G= \hT_\ren (t) G$ for any
$G\in\L_C$. Indeed, since the contraction mappings $\hQ_\delta$ and $\hP_{\delta,\eps}$, $\delta,\eps>0$
 do not depend on $\a$, we obtain, by Proposition~\ref{descsemigroupexist}, that for any
$G\in\L_{C }\subset\L_{\a C}$ we have that $\hT_V (t)G\in\L_{C
}\subset\L_{\a C}$ and $\hT_{\a,V} (t) G\in\L_{\a C}$ and
\begin{align*}
&\| \hT_V (t)G-\hT_{\a,V} (t) G\|_{\a C}\\
\leq &\Bigl\| \hT_V (t)G-\hQ
_\delta^{\left[\frac{t}{\delta}\right]}G\Bigr\|_{\a C} + \Bigl\| \hT_{\a,V}
(t) G-\hQ_\delta^{\left[\frac{t}{\delta}\right]}G\Bigr\|_{\a C}\\ \leq&
\Bigl\| \hT_V (t)G-\hQ_\delta^{\left[\frac{t}{\delta}\right]}G\Bigr\|_{ C} +
\Bigl\| \hT_{\a,V} (t)
G-\hQ_\delta^{\left[\frac{t}{\delta}\right]}G\Bigr\|_{\a C}\goto0,
\end{align*}
as $\delta\goto 0$. Therefore, $\hT_V (t)G=\hT_{\a,V} (t) G$ in $\L_{\a
C}$ (recall that $G\in\L_C$) that yields $ \hT_V (t)G(\eta)=\hT_{\a,V} (t)
G(\eta)$ for $\la$-a.a. $\eta\in\Ga_0$ and, therefore, $\hT_V
(t)G=\hT_{\a,V} (t) G$ in $\L_{C}$.

Note that for any $G\in\L_C\subset\L_{\a C}$ and for any $k\in
\K_{\a C}\subset \K_C$ we have $\hT_{\a,V} (t) G\in\L_{\a C}$ and
\[
\lluu   \hT_{\a,V} (t) G, k\rruu  =\lluu   G, \hT^\ast_{\a,V} (t) k\rruu ,
\]
where, by construction, $\hT^\ast_{\a,V} (t) k\in\K_{\a C}$. But
$G\in\L_C$, $k\in\K_C$ implies
\[
\lluu   \hT_{\a,V} (t) G, k\rruu  =\lluu   \hT_V (t) G, k\rruu  =\lluu G,
\hT^\ast(t) k\rruu .
\]
Hence, $\hT^\ast_V(t) k = \hT^\ast_{\a,V} (t) k\in\K_{\a C}$ that is what we need.

Since $\hT^\odot_V(t)$ and $\hT^\odot_\ren(t)$  are restrictions of
$\hT^\ast_V(t)$ and  $\hT^\ast_\ren(t)$ onto
$\overline{D(\hL_V^\ast)}$ and $\overline{D(\hLrenadj)}$,
correspondingly, one has, by \eqref{inclusion}, that the
corresponding semigroups coincide on $\K_\aC$. Therefore, $\K_\aC$
is $\hT ^\odot_V(t)$- and $\hT^\odot_\ren(t)$-invariant, $\eps>0$;
and the result follows from the continuity of operators which formed
semigroups.\end{proof}

Let now  $\hT^\adot_V(t)$ and $\hT^\adot_\ren(t)$ be restrictions of
the strongly continuous semigroups $\hT^\odot_V(t)$ and
$\hT^\odot_\ren(t)$  (which acting on the Banach spaces
$\overline{D(\hL_V^\ast)}$ and $\overline{D(\hLrenadj)}$,
correspondingly) onto the closed linear subspace $\KK$ of all these
Banach spaces which are invariant w.r.t. all these
$\odot$-semigroups. By the general result (see, e.g.,
\cite[Subsection~II.2.3]{EN2000}), $\hT ^\adot_V(t)$ and
$\hT^\adot_\ren(t)$ are strongly continuous semigroups on $\KK$ with
generators $\hL^\adot_V$ and $\hL ^\adot_\ren$ which are
restrictions of the corresponding operators $\hL_V^\odot$ and
$\hL_\ren^\odot$. Namely,
\begin{align}
D(\hL^\adot_V)&=\bigl\{k\in
\KK \bigm| \hL_V^\ast k\in\KK  \bigr\},\label{domVltimes}\\
D(\hL^\adot_\ren)&=\bigl\{k\in \KK \bigm| \hLrenadj k\in \KK
\bigr\},\quad \eps>0,\label{domrentimes}
\end{align}
and
\begin{align}
\hL^\adot_V k &= \hL^\ast_V k, \qquad  k\in D(\hL^\adot_V),\\
\hL^\adot_\ren k &= \hLrenadj k, \qquad k\in D(\hL
^\adot_\ren).\label{restrren}
\end{align}
%\begin{remark}\label{subdomainagain}
%Let \eqref{integrability} holds. Then, for any $\a\in(0;1)$, $a'\in(0;\a)$ and for any $k\in\K_{\a' C}$
%\[
%\hLrenadj k\in\K_\aC;\qquad \hL^\ast_V k\in\K_\aC.
%\]
%The proof is the same as for Proposition~\ref{domain-adj} and
%\ref{adjgen}, and, in fact, based on the inequality
%\[
%|\eta| \Bigl(\frac{\a'}{a}\Bigr)^{\n}\leq
%\frac{-1}{e\ln\frac{\a'}{\a}},\quad \eta\in\Ga_0, \ \a'<\a. \qedhere
%\]
%Therefore, by Proposition~\ref{adjgen} and equalities \eqref{domVltimes}
%and \eqref{domrentimes}, if $\a_1\in(0;1)$ be as in the Proposition~\ref{sun-inv},
%then,
%for any $\a\in(\a_1;1)$ and for any $\a'\in(\a_1;\a)$
%\begin{equation}
%\overline{\K_{\a' C}}\subset\Bigl(\bigcap_{\eps>0} D(\hL
%_\ren^\adot)\, \Bigr)\bigcap D(\hL_V^\adot).
%\end{equation}
%\end{remark}

By Proposition~\ref{descsemigroupexist}, $\hT_V(t)$ and $\hT
_\ren(t)$ are contraction semigroups on $\L_C$, then, $\hT '_V(t)$
and $\hT '_\ren(t)$ are also contraction semigroups on $(\L_C)'$;
but isomorphism \eqref{isometry} is isometrical, therefore, $\hT
_V^\ast(t)$ and $\hT_\ren^\ast(t)$ are contraction semigroups on
$\K_C$. As a result, their restrictions $\hT^\adot_V(t)$ and $\hT
^\adot_\ren(t)$ are contraction semigroups on $\KK$.

To summarize, we have the Banach space $\KK$ and the family of the
strongly continuous contraction semigroups $\hT^\adot_V(t)$ and $\hT
^\adot_\ren(t)$, $\eps>0$ on this space. The generators of these
semigroups are satisfied \eqref{domVltimes}--\eqref{restrren}.
Moreover, by construction, $\hT^\adot_V(t) k = \hT^\ast_V(t) k$ and
$\hT^\adot_\ren (t) k = \hT^\ast_\ren(t) k$ for any $k\in\KK$.

\begin{theorem}\label{maintheorem}
Let $C, z, \beta, \alpha_1$ be as in Proposition~\ref{sun-inv}.
Suppose additionally that $\bar{\phi}:=\sup_\X \phi \left( x\right)
<+\infty $. Then, for any $\a\in(\a_1;1)$ and for any $k\in\K_\aC$
\begin{equation}
\bigl\Vert \hT^\adot_\ren(t)k - \hT
^\adot_V(t)k\bigr\Vert_{\K_C}\leq \eps t A \Vert k \Vert_{\K_\aC},
\quad \eps>0,
\end{equation}
where $A$ is depend on $\a$, $C$, $\bar{\phi}$ only.
\end{theorem}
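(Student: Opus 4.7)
The plan is to reduce the $\K_C$-level estimate to a mixed-norm primal estimate via duality, and then to establish the primal estimate by refining the approximation argument of Theorem~\ref{descsemigroupconv}. First, I dualize. For any $G\in\L_C$ and $k\in\K_\aC\subset\KK$, the construction of $\hT^\adot_\ren(t)=\hT^\ast_\ren(t)|_\KK$ (and likewise for $V$) together with the duality pairing \eqref{duality} gives
\begin{equation*}
\llu G,(\hT^\adot_\ren(t)-\hT^\adot_V(t))k\rru=\llu(\hT_\ren(t)-\hT_V(t))G,k\rru.
\end{equation*}
Since $\hT_\ren(t)G,\hT_V(t)G\in\L_C\subset\L_\aC$, the right-hand side can be estimated via the $\L_\aC$--$\K_\aC$ duality by $\|(\hT_\ren(t)-\hT_V(t))G\|_\aC\,\|k\|_{\K_\aC}$. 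Taking the supremum over $G\in\L_C$ with $\|G\|_C\le1$ reduces the theorem to the primal mixed-norm bound $\|(\hT_\ren(t)-\hT_V(t))G\|_\aC\le\eps tA\|G\|_C$ for all $G\in\L_C$.

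Second, I establish this primal bound by the scheme of Theorem~\ref{descsemigroupconv}, but with $\a C$ on the left and $C$ on the right. The choice $\a\in(\a_1,1)$ from Proposition~\ref{sun-inv} ensures $z\leq\aC e^{-\aC\beta}$, so Proposition~\ref{contrmaps} applied with $C$ replaced by $\aC$ makes $\hP_{\frac{1}{n},\eps}$ and $\hQ_{\frac{1}{n}}$ contractions on $\L_\aC$ as well as on $\L_C$. The triangle inequality
\begin{equation*}
\|(\hT_\ren(t)-\hT_V(t))G\|_\aC\le\|\hT_\ren(t)G-\hP_{\frac{1}{n},\eps}^{[nt]}G\|_\aC+\|\hP_{\frac{1}{n},\eps}^{[nt]}G-\hQ_{\frac{1}{n}}^{[nt]}G\|_\aC+\|\hQ_{\frac{1}{n}}^{[nt]}G-\hT_V(t)G\|_\aC
\end{equation*}
reduces the problem to the middle term (the outer two tend to $0$ by \eqref{approx}, since $\|\cdot\|_\aC\le\|\cdot\|_C$). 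For this I apply Lemma~\ref{powersofcontractions} with $X=\L_\aC$, $Y=\L_C$ (noting that $\L_C\subset\L_\aC$ is $\hQ_{\frac{1}{n}}$-invariant with $\hQ_{\frac{1}{n}}$ a contraction there by Proposition~\ref{contrmaps}), obtaining $\|\hP_{\frac{1}{n},\eps}^{[nt]}G-\hQ_{\frac{1}{n}}^{[nt]}G\|_\aC\le\frac{[nt]}{n}\eps A\|G\|_C\le\eps tA\|G\|_C+\frac{\eps A}{n}\|G\|_C$, and then letting $n\to\infty$.

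Third, the crucial input is the single-step estimate
\begin{equation*}
\|\hP_{\delta,\eps}G-\hQ_\delta G\|_\aC\le\delta\eps A\|G\|_C,\qquad G\in\L_C.
\end{equation*}
The integrand difference from \eqref{contreps} and \eqref{contrV} splits, with $E=E^\phi(\cdot,\omega)$, as
\begin{equation*}
[e_\la(e^{-\eps E},\xi)-1]\,e_\la(-E,\eta\setminus\xi)+e_\la(e^{-\eps E},\xi)\,\Bigl[e_\la\bigl(\tfrac{e^{-\eps E}-1}{\eps},\eta\setminus\xi\bigr)-e_\la(-E,\eta\setminus\xi)\Bigr].
\end{equation*}
Telescoping the Lebesgue--Poisson exponents and using $1-e^{-\eps E(x)}\le\eps E(x)$ together with $|e^{-\eps E(x)}-1+\eps E(x)|\le\tfrac12\eps^2E(x)^2\le\tfrac12\eps^2\bar\phi|\omega|E(x)$ extracts a factor $\eps$ while producing an extra factor $|\omega|$ (and $|\eta\setminus\xi|$) inside the integrand. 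Two applications of Lemma~\ref{Minlos} and \eqref{LP-exp-mean} (just as in the chain leading to \eqref{bound2}, now in $\L_\aC$) together with $ze^{\aC\beta}\le\aC$ collapse the $\omega$-sum via the combinatorial identities $\sum_{\omega\subset\tau}|\omega|^2a^{|\tau\setminus\omega|}b^{|\omega|}\le2\delta|\tau|^2(\aC)^{|\tau|}$ and $\sum_{\omega\subset\tau}|\tau\setminus\omega||\omega|a^{|\tau\setminus\omega|}b^{|\omega|}\le\delta|\tau|^2(\aC)^{|\tau|}$ (with $a=\aC(1-\delta)$, $b=z\delta e^{\aC\beta}\le\aC\delta$), producing the factor $\delta$. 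Finally, the estimate $|\tau|^2\a^{|\tau|}\le4/(e\ln\a)^2$ (analogue of the bound $x\a^x\le-1/(e\ln\a)$ used in Proposition~\ref{adjgen}) converts the $\L_\aC$ integral with the extra $|\tau|^2$ into an $\L_C$ integral, giving the required constant $A=A(\a,C,\bar\phi,\beta)$.

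The main obstacle is precisely this refined single-step estimate. The improvement over the corresponding step \eqref{mainestq} in Theorem~\ref{descsemigroupconv} is twofold: the second-order Taylor control $|e^{-\eps E}-1+\eps E|\le\tfrac12\eps^2E^2$ is necessary to see the factor $\eps$ in the ``close'' term (rather than only in the ``far'' term as in \eqref{mainestq}), and the sharp $\delta$-accounting in the combinatorial sums hinges on $b/a=O(\delta)$, which itself relies on the contraction condition $ze^{\aC\beta}\le\aC$ guaranteed by $\a>\a_1$; these two ingredients, combined with the $\a$-absorption of the polynomial factor in $|\tau|$, make the argument go through.
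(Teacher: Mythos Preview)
Your argument is correct and takes a genuinely different route from the paper. The paper works \emph{directly on the dual side}: it computes explicit formulas \eqref{dualaprrVl}, \eqref{dualaprrren} for $\hQ_\delta^\ast$, $\hP_{\delta,\eps}^\ast$ on $\K_C$, shows they are contractions on both $\K_C$ and $\K_\aC$, proves the approximation \eqref{apprrenest}, \eqref{apprrenest1} of the $\adot$-generators by these discrete duals (via Lemma~\ref{EK_res-conv}), and then establishes the key single-step bound \eqref{dopex}, namely $\|\hP^\ast_{\delta,\eps}k-\hQ^\ast_\delta k\|_{\K_C}\le\eps\delta A\|k\|_{\K_\aC}$, before invoking Lemma~\ref{powersofcontractions} with $X=\K_C$, $Y=\K_\aC$. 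You instead dualize first, reducing the problem to the \emph{primal} mixed-norm estimate $\|(\hT_\ren(t)-\hT_V(t))G\|_{\aC}\le\eps tA\|G\|_C$, and then run the approximation-plus-powers argument with $X=\L_{\aC}$, $Y=\L_C$. The technical core---second-order Taylor control to extract $\eps$, the combinatorial identities \eqref{est1}, \eqref{est2} to extract $\delta$, and the polynomial-times-$\a^{|\cdot|}$ absorption to pass between the two norm scales---is shared by both approaches. Your route has the advantage of avoiding the explicit dual-operator formulas and the separate verification of \eqref{apprrenest}, \eqref{apprrenest1}, since everything is inherited from the primal approximation \eqref{approx} already established in Proposition~\ref{descsemigroupexist}; the paper's route, by contrast, makes the structure of the dual evolutions fully explicit, which is useful for the further analysis in the paper (e.g., Theorem~\ref{Vlasovscheme}). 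One small remark: your constant $A$ is stated as depending on $\beta$, but in fact $\beta$ is absorbed via $ze^{\aC\beta}\le\aC$ exactly as in the paper, so $A=A(\a,C,\bar\phi)$ as claimed in the theorem.
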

\begin{proof}
Let $\hQ_\delta^\ast$, $\hP_{\delta,\eps}^\ast$, $\delta\in(0;1)$,
$\eps>0$ be the images of the dual operators $\hQ_\delta'$, $\hP
_{\delta,\eps}'$ under the isometrical isomorphism \eqref{isometry}.
Since the norms of dual operators are equal we have that $\hQ
_\delta^\ast$ and $\hP_{\delta,\eps}^\ast$ are linear contractions
on $\K_C$. Moreover, for any $k\in\K_\aC$ we have
\begin{align*}
&\int_{\Gamma_{0}}(\hQ_{\delta }G)\left( \eta \right) k\left( \eta
\right) d\lambda \left( \eta \right)  \\
=&\int_{\Gamma_{0}}\sum_{\xi \subset \eta }\left( 1-\delta \right)
^{\left\vert \xi \right\vert }\int_{\Gamma_{0}}\left( z\delta
\right) ^{\left\vert \omega \right\vert }G\left( \xi \cup \omega
\right) \\&\qquad \times e_{\lambda }\left( -E^{\phi }\left( \cdot
,\omega \right) ,\eta \setminus \xi \right) d\lambda \left( \omega
\right) k\left( \eta \right) d\lambda \left( \eta
\right)  \\
=&\int_{\Gamma_{0}}\int_{\Gamma_{0}}\int_{\Gamma_{0}}\left( 1-\delta
\right)^{\left\vert \xi \right\vert }\left( z\delta \right)
^{\left\vert \omega \right\vert }G\left( \xi \cup \omega \right) \\
&\qquad \times e_{\lambda }\left( -E^{\phi }\left( \cdot ,\omega
\right) ,\eta \right) d\lambda \left( \omega \right) k\left( \eta
\cup \xi \right) d\lambda \left( \eta \right) d\lambda
\left( \xi \right)  \\
=&\int_{\Gamma_{0}}\int_{\Gamma_{0}}\sum_{\omega \subset \xi }\left(
1-\delta \right)^{\left\vert \xi \setminus \omega \right\vert
}\left( z\delta \right)^{\left\vert \omega \right\vert }G\left( \xi
\right)
\\&\qquad \times e_{\lambda }\left( -E^{\phi }\left( \cdot ,\omega \right) ,\eta \right)
k\left( \eta \cup \xi \setminus \omega \right) d\lambda \left( \eta
\right) d\lambda \left( \xi \right)
\end{align*}%
and, therefore,
\begin{align}
(\hQ_{\delta }^{\ast }k)\left( \eta \right) =&\sum_{\omega \subset
\eta }\left( 1-\delta \right)^{\left\vert \eta \setminus \omega
\right\vert }\left( z\delta \right)^{\left\vert \omega \right\vert
}\nonumber\\ &\qquad \times \int_{\Gamma_{0}}e_{\lambda }\left(
-E^{\phi }\left( \cdot ,\omega \right) ,\xi \right) k\left( \xi \cup
\eta \setminus \omega \right) d\lambda \left( \xi \right)
.\label{dualaprrVl}
\end{align}
Then, by \eqref{verysmallz-2},
\begin{align*}
&\left( \alpha C\right)^{-\left\vert \eta \right\vert }\left\vert (\hQ %
_{\delta }^{\ast }k)\left( \eta \right) \right\vert  \\
\leq &\left\Vert k\right\Vert_{\mathcal{K}_{\alpha C}}\left( \alpha
C\right)^{-\left\vert \eta \right\vert }\sum_{\omega \subset \eta
}\left( 1-\delta \right)^{\left\vert \eta \setminus \omega
\right\vert }\left( z\delta \right)^{\left\vert \omega \right\vert
}\\&\qquad \times \int_{\Gamma_{0}}e_{\lambda }\left( E^{\phi
}\left( \cdot ,\omega \right) ,\xi \right) \left( \alpha C\right)
^{\left\vert \xi \cup \eta \setminus \omega
\right\vert }d\lambda \left( \xi \right)  \\
=&\left\Vert k\right\Vert_{\mathcal{K}_{\alpha C}}\sum_{\omega
\subset \eta }\left( 1-\delta \right)^{\left\vert \eta \setminus
\omega \right\vert }\left( \frac{z\delta }{\alpha C}\right)
^{\left\vert \omega \right\vert }\exp \left\{ \alpha
C\int_{\mathbb{R}^{d}}E^{\phi }\left( x,\omega \right)
dx\right\}  \\
=&\left\Vert k\right\Vert_{\mathcal{K}_{\alpha C}}\sum_{\omega
\subset \eta }\left( 1-\delta \right)^{\left\vert \eta \setminus
\omega \right\vert }\left( \frac{z\delta }{\alpha C}\right)
^{\left\vert \omega \right\vert
}\exp \left\{ \alpha C\left\vert \omega \right\vert \beta \right\}  \\
\leq &\left\Vert k\right\Vert_{\mathcal{K}_{\alpha C}}\sum_{\omega
\subset \eta }\left( 1-\delta \right)^{\left\vert \eta \setminus
\omega \right\vert
}\delta^{\left\vert \omega \right\vert }=\left\Vert k\right\Vert_{\mathcal{%
K}_{\aC}}.
\end{align*}
Therefore, $\K_\aC$ is $\hQ_V^\ast$-invariant, hence, $\KK$ is also
$\hQ _V^\ast$-invariant due to continuity of $\hQ_V^\ast$; moreover,
$\hQ_V^\ast$ is a contraction in $\L_\aC$. Absolutely in the same
way we may obtain that for any $k\in\K_\aC$
\begin{align}
(\hP_{\delta ,\varepsilon }^{\ast }k)\left( \eta \right)
=&\sum_{\omega \subset \eta }\left( 1-\delta \right)^{\left\vert
\eta \setminus \omega \right\vert }\left( z\delta \right)
^{\left\vert \omega \right\vert
 }e_\la\left(
e^{-\varepsilon E^{\phi }\left( \cdot ,\omega \right) },\eta
\setminus \omega \right) \nonumber\\&\quad
\times\int_{\Gamma_{0}}e_{\lambda }\left( \frac{e^{-\varepsilon
E^{\phi }\left( \cdot ,\omega \right) }-1}{\varepsilon },\xi \right)
k\left( \xi \cup \eta \setminus \omega \right) d\lambda \left( \xi
\right)\label{dualaprrren}
\end{align}
and that the set $\K_\aC$, and, therefore, the set $\KK$ are $\hP
_{\delta ,\varepsilon }^{\ast }$-invariant; moreover, $\hP
_{\delta ,\varepsilon }^{\ast }$ is a contraction in $\L_\aC$. We preserve the same
notations for the restrictions of this contractions onto $\KK$.

Now, for any fixed $\eps>0$ we consider a set $D_\eps:=\bigl\{k\in
\K_\aC \bigm| \hLrenadj k\in \KK  \bigr\}$. By \eqref{domrentimes},
$D_\eps$ is a core for the operator $\hL ^\adot_\ren$. Next, let us
show that for any $k\in D_\eps$
\begin{equation}\label{apprrenest}
\lim_{\delta\goto 0}\Bigl\Vert \frac{1}{\delta}( \hP
^\ast_{\delta,\eps}-\1 )k - \hL^\adot_\ren k\Bigr\Vert_{\K_C} =0.
\end{equation}
Indeed, let
\begin{align*}
(\hP_{\delta ,\varepsilon }^{\ast, (0) }k)\left( \eta \right) =&(1-\delta)^\n k(\eta);\\
(\hP_{\delta ,\varepsilon }^{\ast, (1) }k)\left( \eta \right)
=&\sum_{x\in \eta }\left( 1-\delta \right)^{\left\vert \eta
\right\vert -1} z\delta e_\la\left( e^{-\varepsilon E^{\phi }\left(
\cdot ,x \right) },\eta \setminus x \right) \nonumber\\&\qquad
\times\int_{\Gamma_{0}}e_{\lambda }\left( \frac{e^{-\varepsilon
E^{\phi }\left( \cdot ,x \right) }-1}{\varepsilon },\xi \right)
k\left( \xi \cup \eta \setminus x \right) d\lambda \left( \xi
\right);
\end{align*}
and $\hP_{\delta ,\varepsilon }^{\ast, (\geq 2) } = \hP_{\delta
,\varepsilon }^{\ast } - \hP_{\delta ,\varepsilon }^{\ast, (0) } -
\hP_{\delta ,\varepsilon }^{\ast, (1) }$. One may improve inequality
\eqref{usefulineq}, namely, for any $n\in\N\cup\{0\}$,
$\delta\in(0;1)$
\begin{align*}
0 \leq n-   \frac{1-(1-\delta)^n}{\delta}\leq\delta \frac{n(n-1)}{2}.
\end{align*}
Then, for any $k\in\K_\aC$, $\eta\neq\emptyset$
\begin{align}
&C^{-\n}\biggl\vert\frac{1}{\delta}( \hP^{\ast,(0)}_{\delta,\eps}-\1 )k(\eta) + |\eta|k(\eta)\biggr\vert\\
\leq& \Vert k\Vert_{\K_\aC} \a^\n \Bigl\vert \n-
\frac{1-(1-\delta)^\n}{\delta}\Bigr\vert\leq \frac{\delta}{2} \Vert
k\Vert_{\K_\aC} \a^\n \n (\n-1)\nonumber
\end{align}
and the function $\a^x x(x-1)$ is bounded for $x\geq 1$, $\a\in(0;1)$.
Next, for any $k\in\K_\aC$, $\eta\neq\emptyset$
\begin{align}
&C^{-\n}\Biggl\vert\frac{1}{\delta} \hP
^{\ast,(1)}_{\delta,\eps}k(\eta) -z\sum_{x\in\eta}\int_{\Ga_0}
e_{\la  }\left( e^{-\eps \phi \left( x-\cdot \right) },\eta\setminus
x \right) \label{genrenadj1}\\&\qquad\times e_{\la }\left(
\frac{e^{-\eps \phi \left( x-\cdot \right) }-1}{\eps },\xi   \right)
k\left( \xi \cup \eta\setminus x\right) d\la(\xi)\Biggr\vert\nonumber\\
\leq & \frac{z}{\aC}\a^\n\sum_{x\in\eta}\bigl(\left( 1-\delta
\right)^{\left\vert \eta \right\vert -1} -1\bigr)e_\la \left(
e^{-\varepsilon E^{\phi }\left( \cdot ,x \right) },\eta \setminus x
\right) \nonumber\\&\qquad\times\int_{\Gamma_{0}}e_{\lambda }\left(
\a C \frac{|e^{-\varepsilon
E^{\phi }\left( \cdot ,x \right) }-1|}{\varepsilon },\xi \right)  d\lambda \left( \xi \right)\nonumber\\
 \leq &\frac{z}{\aC} \a^\n\sum_{x\in\eta}\bigl|\left( 1-\delta \right)^{\left\vert \eta
\right\vert -1} -1\bigr| \exp{\{\a C \beta\}}\nonumber\\
\leq& \frac{z}{\aC} \a^\n \delta \n (\n-1)  \exp{\{\a C
\beta\}}\nonumber
\end{align}
that is smaller then $\delta$ uniformly in $\n$. And, finally,
\begin{align}
&\frac{1}{\delta C^{\left\vert \eta \right\vert }} \sum_{\substack{
\omega \subset \eta  \crcr \left\vert \omega \right\vert \geq 2}}
\left( 1-\delta \right)^{\left\vert \eta \setminus \omega
\right\vert }\left( z\delta \right)^{\left\vert \omega \right\vert
}e_\la\left( e^{-\varepsilon E^{\phi }\left( \cdot ,\omega \right)
},\eta \setminus \omega \right)\nonumber\\&\qquad
\times\int_{\Gamma_{0}} e_{\lambda }\left(
\Biggl\vert\frac{e^{-\varepsilon E^{\phi }\left( \cdot ,\omega
\right) }-1}{\varepsilon }\Biggr\vert,\xi \right) |k( \xi
\cup \eta \setminus \omega ) | d\lambda \left( \xi \right) \nonumber \\
\leq &\frac{1}{\delta C^{\left\vert \eta \right\vert }}\sum
_{\substack{ \omega \subset \eta  \crcr \left\vert \omega \right\vert \geq 2}}%
\left( 1-\delta \right)^{\left\vert \eta \setminus \omega
\right\vert }\left( z\delta \right)^{\left\vert \omega \right\vert
}\nonumber\\&\qquad \times \int_{\Gamma_{0}}e_{\lambda }\left(
E^{\phi }\left( \cdot ,\omega \right) ,\xi \right) \left( \alpha
C\right)^{\left\vert \xi \right\vert }\left( \alpha
C\right)^{\left\vert \eta \right\vert -\left\vert \omega \right\vert
}d\lambda
\left( \xi \right) \nonumber\\
=&\alpha^{\left\vert \eta \right\vert }\frac{1}{\delta }
\sum_{\substack{ \omega \subset \eta  \crcr \left\vert \omega
\right\vert \geq 2}}\left( 1-\delta \right)^{\left\vert \eta
\setminus \omega \right\vert }\left( \frac{z\delta }{\alpha C}\exp
\left\{ \alpha C\beta \right\} \right) ^{\left\vert \omega
\right\vert } \nonumber
\end{align}
but recall that $\a>\a_1$, therefore, $z\exp\{\aC\beta\}\leq \aC$,
and one may continue
\begin{align}
\leq &\alpha ^{\left\vert \eta
\right\vert }\frac{1}{\delta }\sum
_{\substack{ \omega \subset \eta  \crcr \left\vert \omega \right\vert \geq 2}}%
\left( 1-\delta \right)^{\left\vert \eta \setminus \omega
\right\vert }\delta^{\left\vert \omega \right\vert }=\delta \alpha
^{\left\vert \eta \right\vert }\sum_{k=2}^{\left\vert \eta
\right\vert }\frac{\left\vert \eta \right\vert !}{k!\left(
\left\vert \eta \right\vert -k\right) !}\left(
1-\delta \right)^{\left\vert \eta \right\vert -k}\delta^{k-2} \nonumber\\
=&\delta \alpha^{\left\vert \eta \right\vert }\sum_{k=0}^{\left\vert
\eta \right\vert -2}\frac{\left\vert \eta \right\vert !}{\left(
k+2\right) !\left( \left\vert \eta \right\vert -k-2\right) !}\left(
1-\delta \right)
^{\left\vert \eta \right\vert -k-2}\delta^{k} \nonumber\\
=&\delta \alpha^{\left\vert \eta \right\vert }\left\vert \eta
\right\vert \left( \left\vert \eta \right\vert -1\right)
\sum_{k=0}^{\left\vert \eta \right\vert -2}\frac{\left( \left\vert
\eta \right\vert -2\right) !}{\left( k+2\right) !\left( \left\vert
\eta \right\vert -k-2\right) !}\left( 1-\delta
\right)^{\left\vert \eta \right\vert -2-k}\delta^{k} \nonumber\\
\leq &\delta \alpha^{\left\vert \eta \right\vert }\left\vert \eta
\right\vert \left( \left\vert \eta \right\vert -1\right)
\sum_{k=0}^{\left\vert \eta \right\vert -2}\frac{\left( \left\vert
\eta
\right\vert -2\right) !}{k!\left( \left\vert \eta \right\vert -k-2\right) !}%
\left( 1-\delta \right)^{\left\vert \eta \right\vert -2-k}\delta
^{k}\nonumber\\=& \delta \alpha^{\left\vert \eta \right\vert
}\left\vert \eta \right\vert \left( \left\vert \eta \right\vert
-1\right).\nonumber
\end{align}
Combining these inequalities, we obtain \eqref{apprrenest}.

Analogously, one may obtain that for any $k\in D_V:=\bigl\{k\in
\K_\aC \bigm| \hL_V^\ast k\in \KK  \bigr\}$ (that is core for $\hL
^\adot_V$)
\begin{equation}\label{apprrenest1}
\lim_{\delta\goto 0}\Bigl\Vert \frac{1}{\delta}( \hQ
^\ast_{\delta}-\1 )k - \hL^\adot_V k\Bigr\Vert_{\K_C} =0.
\end{equation}

By Lemma~\ref{EK_res-conv}, we obtain that for any $k\in\KK$
\[
(\hP ^\ast_{\delta,\eps})^{\bigl[\frac{t}{\delta}\bigr]}k\goto\hT
^\adot_\ren(t)k; \qquad (\hQ
^\ast_{\delta})^{\bigl[\frac{t}{\delta}\bigr]}k\goto\hT ^\adot_V(t)k
\]
(convergence in $\KK$, recall that norm in this space is $\|\cdot\|_{\K_C}$).

Therefore, to use the same arguments as in the proof of
Theorem~\ref{descsemigroupconv} and to apply
Lemma~\ref{powersofcontractions}, we need only to show that for any
$k\in\K_\aC$
\begin{equation}\label{dopex}
\bigl\Vert \hP^\ast_{\delta,\eps}k- \hQ^\ast_{\delta}k
\bigr\Vert_{\K_C}\leq \eps\delta A\|k\|_{\K_\aC}.
\end{equation}

We have the following elementary inequalities. For any $\left\{
a_{k}\right\} _{k=1}^{n}\subset \lbrack 0;1]$, $n\in \N$
\begin{equation}\label{specbern}
1-\prod\limits_{k=1}^{n}a_{k}\leq \sum_{k=1}^{n}\left(
1-a_{k}\right),
\end{equation}
which can be easily checked by the induction principle. Next, since
\[
x+e^{-x}-1\leq x^{2},\quad x\geq 0,
\]%
we obtain%
\begin{equation}\label{longBern}
E^{\phi }\left( x,\omega \right) \left( 1-\frac{1-e^{-\eps E^{\phi
}\left( x,\omega \right) }}{\eps E^{\phi }\left( x,\omega \right) }%
\right) \leq\eps \left( E^{\phi }\left( x,\omega \right) \right)
^{2}.
\end{equation}%

Hence,
\begin{align*}
& \frac{1}{C^{\left\vert \eta \right\vert }}\sum_{\omega \subset
\eta }\left( 1-\delta \right)^{\left\vert \eta \setminus \omega
\right\vert
}\left( z\delta \right)^{\left\vert \omega \right\vert } \\
 & \qquad \times \int_{\Gamma_{0}}\left\vert \left(
e^{-\varepsilon E^{\phi }\left( \cdot ,\omega \right) },\eta
\setminus \omega \right) e_{\lambda }\left( \frac{e^{-\varepsilon
E^{\phi }\left( \cdot ,\omega \right) }-1}{\varepsilon },\xi \right)
-e_{\lambda }\left(
-E^{\phi }\left( \cdot ,\omega \right) ,\xi \right) \right\vert \\
 &\qquad \qquad \times k\left( \xi \cup \eta \setminus \omega
\right)
d\lambda \left( \xi \right)  \\
  \leq &\frac{\left\Vert k\right\Vert_{\mathcal{K}_{\alpha C}}}{%
C^{\left\vert \eta \right\vert }}\sum_{\omega \subset \eta }\left(
1-\delta \right)^{\left\vert \eta \setminus \omega \right\vert
}\left( z\delta
\right)^{\left\vert \omega \right\vert } \\
 & \qquad \times \int_{\Gamma_{0}}e_{\lambda }\left( E^{\phi
}\left( \cdot ,\omega \right) ,\xi \right) \left\vert \left(
e^{-\varepsilon E^{\phi }\left( \cdot
,\omega \right) },\eta \setminus \omega \right) e_{\lambda }\left( \frac{%
1-e^{-\varepsilon E^{\phi }\left( \cdot ,\omega \right)
}}{\varepsilon E^{\phi }\left( \cdot ,\omega \right) },\xi \right)
-1\right\vert \\
&  \qquad \times\left( \alpha C\right)^{\left\vert \xi \cup \eta
\setminus \omega \right\vert }d\lambda \left( \xi \right)\\
\intertext{and, by \eqref{specbern}, one may continue }
  \leq & \alpha^{\left\vert \eta \right\vert }\left\Vert k\right\Vert_{%
\mathcal{K}_{\alpha C}}\sum_{\omega \subset \eta }\left( 1-\delta
\right) ^{\left\vert \eta \setminus \omega \right\vert }\left(
z\delta \right) ^{\left\vert \omega \right\vert }\int_{\Gamma
_{0}}e_{\lambda }\left( E^{\phi }\left( \cdot ,\omega \right) ,\xi
\right) \\ &\qquad \times\sum_{x\in \eta \setminus \omega }\left(
1-e^{-\varepsilon E^{\phi }\left( x,\omega \right) }\right) \left(
\alpha C\right)^{\left\vert \xi \setminus \omega \right\vert
}d\lambda \left( \xi \right)  \\ &  +\,\alpha^{\left\vert \eta \right\vert }\left\Vert k\right\Vert_{\mathcal{K%
}_{\alpha C}}\sum_{\omega \subset \eta }\left( 1-\delta
\right)^{\left\vert \eta \setminus \omega \right\vert }\left(
z\delta \right)^{\left\vert \omega \right\vert }\int_{\Gamma
_{0}}e_{\lambda }\left( E^{\phi }\left(
\cdot ,\omega \right) ,\xi \right) \\ &\qquad \times \sum_{x\in \xi }\left( 1-\frac{%
1-e^{-\varepsilon E^{\phi }\left( x,\omega \right) }}{\varepsilon
E^{\phi }\left( x,\omega \right) }\right) \left( \alpha
C\right)^{\left\vert \xi \setminus \omega \right\vert }d\lambda
\left( \xi \right)\\
\intertext{and, by \eqref{longBern},}
 \leq & \alpha^{\left\vert \eta \right\vert }\left\Vert k\right\Vert_{%
\mathcal{K}_{\alpha C}}\sum_{\omega \subset \eta }\left( 1-\delta
\right)
^{\left\vert \eta \setminus \omega \right\vert }\left( \frac{z\delta }{%
\alpha C}\exp \left\{ \alpha C\beta \right\} \right)^{\left\vert
\omega \right\vert }\sum_{x\in \eta \setminus \omega }\varepsilon
E^{\phi }\left( x,\omega \right)  \\
 & +\,\alpha^{\left\vert \eta \right\vert }\left\Vert k\right\Vert_{\mathcal{K%
}_{\alpha C}}\sum_{\omega \subset \eta }\left( 1-\delta
\right)^{\left\vert \eta \setminus \omega \right\vert }\left(
\frac{z\delta }{\alpha C}\right)
^{\left\vert \omega \right\vert }\\ & \qquad \times \int_{\Gamma_{0}}\int_{\mathbb{R}%
^{d}}\varepsilon \left( E^{\phi }\left( x,\omega \right) \right)
^{2}e_{\lambda }\left( E^{\phi }\left( \cdot ,\omega \right) ,\xi
\right) \left( \alpha C\right)^{\left\vert \xi \right\vert }\alpha
Cdxd\lambda \left( \xi \right)\\
\intertext{again, $z\exp\{\aC\beta\}\leq\aC$ and we continue}
 \leq  & \eps \, \bar{\phi}\,\alpha^{\left\vert \eta \right\vert }\left\Vert k\right\Vert_{%
\mathcal{K}_{\alpha C}}\sum_{\omega \subset \eta }\left( 1-\delta
\right) ^{\left\vert \eta \setminus \omega \right\vert
}\delta^{\left\vert \omega
\right\vert }|\eta \setminus \omega | \cdot |\omega| \\
 & + \,  \eps\alpha C \, \bar{\phi}\alpha^{\left\vert \eta \right\vert }\left\Vert k\right\Vert_{\mathcal{K%
}_{\alpha C}}\sum_{\omega \subset \eta }\left( 1-\delta
\right)^{\left\vert \eta \setminus \omega \right\vert }\delta
^{\left\vert \omega \right\vert } \left\vert \omega
\right\vert^2=:J.
\end{align*}
To complete the proof we need to use the following simple estimates:
for any $\lv  \xi \rv  =n\geq 2$ one has
\begin{align}
&\sum_{\omega \subset \xi }\lv  \omega \rv  \lv  \xi \setminus
\omega \rv  \left( 1-\delta \right)^{\lv \xi
\setminus \omega \rv  }\delta^{\lv  \omega \rv  } \label{est1}\\
=&\sum_{k=1}^{n-1}\frac{n!}{k!\left( n-k\right) !}k\left( n-k\right)
\left(
1-\delta \right)^{n-k}\delta^{k} \nonumber\\
=&\sum_{k=1}^{n-1}\frac{n!}{\left( k-1\right) !\left( n-k-1\right)
!}\left(
1-\delta \right)^{n-k}\delta^{k} \nonumber\\
=&\sum_{k=0}^{n-2}\frac{n!}{k!\left( n-\left( k+1\right) -1\right)
!}\left(
1-\delta \right)^{n-\left( k+1\right) }\delta^{k+1} \nonumber\\
=&\delta \left( 1-\delta \right) n\left( n-2\right) \sum_{k=0}^{n-2}\frac{%
\left( n-2\right) !}{k!\left( n-2-k\right) !}\left( 1-\delta \right)
^{n-2-k}\delta^{k} \nonumber\\
=&\delta \left( 1-\delta \right) n\left( n-2\right) \left( 1-\delta
+\delta \right)^{n-2}\leq \delta \cdot 2^{n}=\delta \cdot 2^{\lv \xi
\rv  }\nonumber
\end{align}%
(and this estimate is trivial for $\lv  \xi \rv  \leq 1$); and,
for any $n=\lv  \xi \rv  \geq 1$%
\begin{align}
&\sum_{\omega \subset \xi }\left( 1-\delta \right)^{\lv  \xi
\setminus \omega \rv  }\delta^{\lv  \omega \rv
}\lv  \omega \rv ^{2} \label{est2}\\
=&\sum_{k=1}^{n}\frac{n!}{k!\left( n-k\right) !}k^{2}\left( 1-\delta
\right)^{n-k}\delta^{k} \nonumber\\
=&\delta \sum_{k=1}^{n}\frac{n!}{\left( k-1\right) !\left(
n-1-\left( k-1\right) \right) !}k\left( 1-\delta \right)^{\left(
n-1\right) -\left(
k-1\right) }\delta^{k-1} \nonumber\\
=&\delta \sum_{k=0}^{n-1}\frac{n!}{k!\left( n-1-k\right) !}k\left(
1-\delta
\right)^{\left( n-1\right) -k}\delta^{k} \nonumber\\
\leq &\delta n\left( n-1\right) \left( 1-\delta +\delta \right)
^{n-1}<\delta \cdot 2^{n}\nonumber
\end{align}%
(and, again, it is trivial  for $\xi =\emptyset $).

Then, by \eqref{est1}, \eqref{est2}, we obtain for any $|\eta|\geq
2$
\[
J\leq \eps \, \bar{\phi}\,\alpha^{\left\vert \eta \right\vert
}\left\Vert k\right\Vert_{
\mathcal{K}_{\alpha C}}\delta \n(\n-1)+  \eps\alpha C \, \bar{\phi}\alpha^{\left\vert \eta \right\vert }\left\Vert k\right\Vert_{\mathcal{K%
}_{\alpha C}}\delta \n(\n-1)\leq\eps \delta A,
\]
where $A$ is independent on $\eta$. \end{proof}

\begin{corollary}\label{maincor}
Let the conditions of Theorem~\ref{maintheorem} hold. Then for any
$\{k^\e, k\}\subset\K_\aC$, $\eps>0$
\begin{equation}\label{estforeps}
\bigl\Vert \hT^\adot_\ren(t)k^\e - \hT
^\adot_V(t)k\bigr\Vert_{\K_C}\leq \bigl\Vert k^\e -
k\bigr\Vert_{\K_C} + \eps t A \Vert k \Vert_{\K_\aC}.
\end{equation}
\end{corollary}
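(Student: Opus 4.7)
The plan is to reduce the corollary to Theorem~\ref{maintheorem} by a single triangle-inequality splitting, peeling off the dependence on the initial data from the dependence on the semigroup. Both $k^\e$ and $k$ lie in $\K_\aC$, and by the discussion preceding Theorem~\ref{maintheorem} we know $\K_\aC\subset\KK$ and that $\hT^\adot_\ren(t)$ is a strongly continuous contraction semigroup on $\KK$ equipped with $\|\cdot\|_{\K_C}$. This is all we will need.

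First I would write the decomposition
\[
\hT^\adot_\ren(t)k^\e - \hT^\adot_V(t)k = \hT^\adot_\ren(t)(k^\e - k) + \bigl(\hT^\adot_\ren(t)k - \hT^\adot_V(t)k\bigr),
\]
which is legitimate because $k^\e-k\in\K_\aC\subset\KK$ and $k\in\K_\aC\subset\KK$, so every term makes sense in $\KK$. Applying $\|\cdot\|_{\K_C}$ together with the triangle inequality gives
\[
\bigl\Vert \hT^\adot_\ren(t)k^\e - \hT^\adot_V(t)k\bigr\Vert_{\K_C}\leq \bigl\Vert \hT^\adot_\ren(t)(k^\e-k)\bigr\Vert_{\K_C}+ \bigl\Vert \hT^\adot_\ren(t)k - \hT^\adot_V(t)k\bigr\Vert_{\K_C}.
\]

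For the first summand, contractivity of $\hT^\adot_\ren(t)$ on $\KK$ yields $\|\hT^\adot_\ren(t)(k^\e-k)\|_{\K_C}\leq \|k^\e-k\|_{\K_C}$. For the second summand I would invoke Theorem~\ref{maintheorem} applied to the single element $k\in\K_\aC$, which provides the bound $\eps t A\|k\|_{\K_\aC}$ with the same constant $A=A(\a,C,\bar\phi)$ as in that theorem. Adding the two estimates produces \eqref{estforeps}. There is no real obstacle here; the only point one must verify is that the contractivity used in step one is with respect to $\|\cdot\|_{\K_C}$ (not $\|\cdot\|_{\K_\aC}$), which is exactly the norm on $\KK$ under which $\hT^\adot_\ren(t)$ was declared a contraction in the paragraph immediately before Theorem~\ref{maintheorem}.
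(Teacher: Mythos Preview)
Your proof is correct and matches the paper's own argument exactly: the paper's proof is the one-line remark that the estimate ``follows directly from the triangle inequality and the contractive property of the semigroup $\hT^\adot_\ren$,'' which is precisely the splitting and two bounds you wrote out.
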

\begin{proof}
The proof follows directly from the triangle inequality and the
contractive property of the semigroup $\hT^\adot_\ren$.
\end{proof}

And now we will show that our Vlasov limiting dynamics has the
properties described in the Subsection~\ref{scalingdescr}.

\begin{theorem}\label{Vlasovscheme}
Let $C, z, \beta, \alpha_1$ be as in Proposition~\ref{sun-inv}, and
$\a_2:=\max\bigl\{\a_1,\frac{z}{C}\bigr\}\in(0;1)$. Let $\rho_0$ be
a measurable function on $\X$ such that there exists $\a\in(\a_2;1)$
such that $0\leq\rho_0(x)\leq \aC$ for a.a. $x\in\X$. Then the
Cauchy problem
\begin{equation}\label{CauchyVlasov}
\begin{cases}
\dfrac{\partial}{\partial t} k_t = \hL^\ast_V k_t\\
k_0=e_\la (\rho_0)
\end{cases}
\end{equation}
is well-defined on $\KK$ and has a solution
$k_t=e_\la(\rho_t)\in\K_\aC$, where $\rho_t$ is a  solution of the
Cauchy problem
\begin{equation}\label{CauchyVlasoveqn}
\begin{cases}
\dfrac{\partial}{\partial t} \rho_t(x) = -\rho_t(x) + z \exp\biggl\{\displaystyle-\int_\X \rho_t(y) \phi(x-y)dy\biggr\}, \\
\rho_t \bigr|_{t=0}(x)=\rho_0(x),
\end{cases}
\end{equation}
for a.a. $x\in\X$ such that $0\leq\rho_t(x)\leq \aC$ for a.a. $x\in\X$.
\end{theorem}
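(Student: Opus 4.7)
The strategy is to solve the Vlasov-type ODE \eqref{CauchyVlasoveqn} directly for $\rho_t$ in $L^\infty(\X)$, set $k_t:=e_\la(\rho_t)$, verify the chaos-preservation identity $\partial_t k_t=\hL^\ast_V k_t$ by a direct pointwise computation, and then use the framework of Propositions~\ref{descsemigroupexist} and~\ref{sun-inv} to interpret $k_t$ as a strong solution of \eqref{CauchyVlasov} in $\KK$.

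For the ODE, I rewrite \eqref{CauchyVlasoveqn} in mild form and apply Banach's fixed-point theorem on the closed convex subset $\bigl\{\rho\in C([0,T];L^\infty(\X)):0\le\rho_s(x)\le\aC\bigr\}$. Lipschitz continuity of the integrand follows from $|e^{-a}-e^{-b}|\le|a-b|$ combined with $\|\phi\ast\rho-\phi\ast\tilde\rho\|_\infty\le\beta\|\rho-\tilde\rho\|_\infty$, so a short-time contraction, iterated by concatenation, produces a unique $\rho_t$ on $[0,\infty)$. Forward invariance of $[0,\aC]$ is automatic from the mild form
\begin{equation*}
\rho_t(x)=e^{-t}\rho_0(x)+z\int_0^t e^{-(t-s)}\exp\Bigl\{-\int_\X\phi(x-y)\rho_s(y)\,dy\Bigr\}ds:
\end{equation*}
the right-hand side is nonnegative and bounded above by $e^{-t}\aC+z(1-e^{-t})\le\aC$, the last inequality being exactly $z\le\aC$, which is ensured by $\a>\a_2\ge z/C$.

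To verify the operator equation, I substitute $k=e_\la(\rho_t)$ into \eqref{genVladj}; using the multiplicativity $e_\la(f,\xi)e_\la(g,\xi)=e_\la(fg,\xi)$ together with \eqref{LP-exp-mean}, the inner integral evaluates to $\exp\bigl\{-\int_\X\phi(x-y)\rho_t(y)\,dy\bigr\}$, so that
\begin{equation*}
(\hL^\ast_V e_\la(\rho_t))(\eta)=\sum_{x\in\eta}e_\la(\rho_t,\eta\setminus x)\Bigl[-\rho_t(x)+z\exp\bigl\{-(\phi\ast\rho_t)(x)\bigr\}\Bigr].
\end{equation*}
On the other hand, the product rule gives $\partial_t e_\la(\rho_t,\eta)=\sum_{x\in\eta}\dot\rho_t(x)\,e_\la(\rho_t,\eta\setminus x)$, and substituting \eqref{CauchyVlasoveqn} shows the two sides agree pointwise for $\la$-a.a.\ $\eta\in\Ga_0$. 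The bound $|k_t(\eta)|\le(\aC)^\n$ gives $\|k_t\|_{\K_\aC}\le 1$, hence $k_t\in\K_\aC\subset\KK$ uniformly in $t$.

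The main technical step is lifting this pointwise identity to strong differentiability in $\KK$. Since $\rho_t\in C^1([0,\infty);L^\infty(\X))$ with $\|\dot\rho_t\|_\infty\le\aC+z$ and $\dot\rho_t$ uniformly continuous in $t$, expanding $\prod_{x\in\eta}\rho_{t+h}(x)$ in the increments $\rho_{t+h}(x)-\rho_t(x)$ gives a first-order term agreeing with $\sum_{x\in\eta}\dot\rho_t(x)e_\la(\rho_t,\eta\setminus x)$ up to an error that is small uniformly in $\eta$ after weighting by $C^{-\n}$, plus higher-order remainders bounded by $\sum_{k\ge 2}\binom{\n}{k}h^k(\aC+z)^k(\aC)^{\n-k}$, which upon division by $h\,C^\n$ tend to $0$ uniformly in $\eta$ because $\n^k\a^{\n-k}$ is uniformly bounded for $\a\in(0;1)$. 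Similarly, $\hL^\ast_V k_t$ satisfies $|(\hL^\ast_V k_t)(\eta)|/C^\n\le 2\n\a^{\n-1}$, so it lies in $\K_C$ and, by truncation in $\n$, in $\KK$. Consequently $k_t\in D(\hL^\adot_V)$ and satisfies $\partial_t k_t=\hL^\adot_V k_t=\hL^\ast_V k_t$ in $\KK$; uniqueness of solutions to the semigroup-generated Cauchy problem finally identifies $k_t$ with $\hT^\adot_V(t)e_\la(\rho_0)$.
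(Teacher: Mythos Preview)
Your proof is correct and follows essentially the same route as the paper: Banach fixed-point in $C([0,T];L^\infty(\X))$ for the Vlasov ODE, then substitution of $k_t=e_\la(\rho_t)$ into \eqref{genVladj} to verify the operator identity. The only notable difference is that you carry out the strong differentiability of $t\mapsto e_\la(\rho_t)$ in $\KK$ explicitly (via the product expansion and the bound $\n^k\a^\n$), whereas the paper simply defines $k_t:=\hT_V^\adot(t)k_0$ via the semigroup (which is automatically a classical solution once $k_0\in D(\hL_V^\adot)$) and checks the pointwise identity, leaving the identification with $e_\la(\rho_t)$ implicit; your argument thus fills in a step the paper treats informally.
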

\begin{proof}
First of all, we note that \eqref{verysmallz-2} implies $z<C$,
therefore, the condition $\frac{z}{C}<1$ holds. Next, if
\eqref{CauchyVlasoveqn} has a solution $\rho_t(x)\geq0$ then
$\frac{\partial}{\partial t} \rho_t(x) \leq -\rho_t(x) + z $ and,
therefore, $\rho_t(x)\leq r_t(x)$ where $r_t(x)$ is a solution of
the Cauchy problem
\begin{equation*}\label{CauchyEst}
\begin{cases}
\dfrac{\partial}{\partial t} r_t(x) = -r_t(x) + z , \\
r_t \bigr|_{t=0}(x)=\rho_0(x),
\end{cases}
\end{equation*}
for a.a. $x\in\X$, hence,
\begin{equation*}
r_t(x)=e^{-t}\rho_0(x)+z(1-e^{-t})=z+e^{-t}(\rho_0(x)-z)
\leq \max\{z,\rho_0(x)\}\leq \aC,
\end{equation*}
that yields $0\leq\rho_t(x)\leq \aC$.

To prove the existence of the solution of
\eqref{CauchyVlasoveqn}
let us fix some $T>0$
and define the Banach space $X_T=C([0;T],L^\infty(\X))$
of all continuous functions on $[0;T]$ with
values in $L^\infty(\X)$; the norm on $X_T$ is given by
\mbox{$\|u\|_T:=\max\limits_{t\in[0;T]}\|u_t\|_{L^\infty(\X)}$}.
We denote by $X_T^+$ the cone of the all
nonnegative
functions
from $X_T$.

Let $\Phi$ be a  mapping which
assign to any $v\in X_T$ the solution
$u_t$
of the linear Cauchy problem
\begin{equation}\label{CauchyLin}
\begin{cases}
\dfrac{\partial}{\partial t} u_t(x) = -u_t(x) + z \exp\{\displaystyle-(v_t*\phi)(x)\}, \\
u_t \bigr|_{t=0}(x)=\rho_0(x),
\end{cases}
\end{equation}
for a.a. $x\in\X$, where we use the usual
notation for convolution on $\X$:\linebreak $(f*g)(x):=\int_\X
f(y)g(x-y)dy$. Therefore,
\begin{equation}\label{defPhi}
(\Phi v)_t(x)=e^{-t}\rho_0(x)+z\int_0^te^{-(t-s)}\exp\{-(v_t*\phi)(x)\}ds\geq0.
\end{equation}
Similarly as before we obtain that $v\in X_T^+$ implies the estimate
$|(\Phi v)_t(x)|\leq\max\{z,\rho_0(x)\}$; in~particular, $\Phi v\in
X_T^+$. Next, using elementary inequality $|e^{-a}-e^{-b}|\leq|a-b|$
for any $a,b\geq0$, we obtain that for any $v, w\in X_T^+$
\begin{align*}
\bigl| (\Phi v)_t(x)-(\Phi w)_t(x) \bigr|&\leq z\int_0^te^{-(t-s)}\Bigl|\exp\{-(v_t*\phi)(x)-\exp\{-(w_t*\phi)(x)\}\Bigr|ds\\
&\leq z\int_0^te^{-(t-s)}\bigl|(v_t*\phi)(x)-(w_t*\phi)(x)\bigr|ds\\
&\leq z\int_0^te^{-(t-s)}(|v_t-w_t|*\phi)(x) ds\\
&\leq z\beta\|v-w\|_T(1-e^{-t}),
\end{align*}
where we used the inequality $|(f*g)(x)|\leq
\|f\|_{L^\infty(\X)}\|g\|_{L^1(\X)}$ and condition
\eqref{integrability}. Therefore, $\|\Phi v-\Phi w\|_T\leq
z\beta\|v-w\|_T$. Since \eqref{verysmallz-2} implies $z\beta\leq
e^{-1}$ (see~the proof of Proposition~\ref{sun-inv}), hence, $\Phi$
is a contraction mapping on the cone $X_T^+$. Taking, as usual,
$v^{(n)}=\Phi^nv^{(0)}$, $n\geq1$ for $v^{(0)}\in X_T^+$ we obtain
that $\{v^{(n)}\}\subset X_T^+$ is a fundamental sequence in $X_T$
which has, therefore, a unique limit point $v\in X_T$. Since $X_T^+$
is a closed cone we have that $v\in X_T^+$. Then, identically to the
classical Banach fixed point theorem, $v$ will be a fixed point of
$\Phi$ on $X_T$ and a unique fixed point on $X_T^+$. Then, this $v$
is the nonnegative solution of \eqref{CauchyVlasoveqn} on the
interval $[0;T]$. By the note above, $v_t(x)\leq \aC$. Changing
initial value in \eqref{CauchyVlasoveqn} onto $\rho_t
\bigr|_{t=T}(x)=v_T(x)$ we may extend all our considerations on the
time-interval $[T;2T]$ with the same estimate $v_t(x)\leq \aC$; and
so on. As a a result, \eqref{CauchyVlasoveqn} has a global bounded
solution $\rho_t(x)$ on $\R_+$.

Clearly, $k_0=e_\la(\rho_0)\in\K_\aC\subset\KK$. Then
$k_t=\hT_V^\adot(t)k_0$ will be a strongly differentiable function
(in the sense of norm $\|\cdot\|_{\K_C}$ in $\KK$); moreover,
$k_t\in\K_\aC$. Next, if we substitute $k_t=e_\la(\rho_t)$ into
\eqref{CauchyVlasov}, then, by \eqref{genrenadj}, we obtain
\begin{align*}
&\sum_{x\in\eta} \frac{\partial}{\partial t} \rho_t(x)
e_\la(\rho_t,\eta\setminus x)\\=&-\lv\eta\rv e_\la(\rho_t,\eta)
\\& +z\sum_{x\in\eta}e_\la(\rho_t,\eta\setminus x)\int_{\Ga_0} e_{\la
}\left(- \phi \left( x-\cdot \right),\xi \right) e_\la(\rho_t,\xi)
d\la(\xi)\\=&-\sum_{x\in\eta} \rho_t(x) e_\la(\rho_t,\eta\setminus
x)\\&+z\sum_{x\in\eta}e_\la(\rho_t,\eta\setminus
x)\exp\biggl\{-\int_\X \phi(x-y)\rho_t(y)dy\biggr\},
\end{align*}
that holds since $\rho_t$ is satisfied \eqref{CauchyVlasoveqn}.
\end{proof}
% \begin{remark}
% Let us consider \eqref{CauchyVlasoveqn} in the space-homogeneous
% case also, that corresponds to the translation-invariant Glauber
% dynamics. Then we obtain
% \begin{equation}
% \frac{d\rho_t}{dt}=-\rho_t+z\exp\{-\beta\rho_t\}, \qquad
% |\rho_0|\leq \aC.
% \end{equation}
% and we know that $|\rho_t|\leq\aC$. Recall that our conditions
% implies $z\beta<e^{-1}$.
% \end{remark}

\begin{remark}\label{RemarkKirkwoodMonroe}
Note that the stationary equation for \eqref{CauchyVlasoveqn} has
the following form
\begin{equation}\label{KirkwoodMonroe}
\rho(x) = z \exp\biggl\{-\int_\X \rho(y) \phi(x-y)dy\biggr\}
\end{equation}
and coincides with the famous Kirkwood--Monroe equation
(\cite{KM1941}, see also, e.g., \cite{GK1976} and references
therein, and the recent work \cite{CP2010}).
\end{remark}

\subsection{Further considerations}
We have realized the scheme proposed at the end of
Subsection~\ref{scalingdescr}. But let us explain also the rigorous
meaning of the equivalence \eqref{ordersing} which was background to
all our consideration.

Let $C, z, \beta, \alpha_2$ be as in Theorem~\ref{Vlasovscheme}.
Then, for any fixed $\eps>0$ we have $1-\exp\{-\eps\phi\}\in
L^1(\X)$ and, by \cite[Proposition~3.2]{FKKZ2010}, $\hL_\eps$, given
by \eqref{Lhateps}, is a linear operator in $\L_{\eps^{-1}C}$ with
dense domain $\L_{2\eps^{-1}C}$. Consider the image $\bigl(
\hL^\ast_\eps, D(\hL_\eps^\ast)\bigr)$ in
$\K_{\eps^{-1}C}=R_{\eps^{-1}}\K_C$ under the isometrical
isomorphism $R_{\eps^{-1}C}$ of the dual operator $\bigl(\hL'_\eps,
D(\hL'_\eps)\bigr)$ in $(\L_{\eps^{-1}C})'$.

We are not able to show that $\hL_\eps$ is a generator of a strongly
continuous semigroup in $\L_{\eps^{-1}C}$ since a condition like
\eqref{verysmallparam} (with $\eps^{-1}C$ instead of $C$) cannot be
fulfilled uniformly in $\eps>0$. But one can do in the following
manner.

Let $\a\in(\a_2;1)$  and let us consider the space
$\mathbb{K}^\a_\eps=\overline{\K_{\eps^{-1}\aC}}^{\K_{\eps^{-1}C}}$.
Note that for any $r^\e\in\mathbb{K}^\a_\eps$ there exist
$\{r^\e_n\}\subset\K_{\eps^{-1}\aC}$ such that
\[
0=\lim_{n\goto\infty}\|r^\e_n-r^\e\|_{\K_{\eps^{-1}C}}
=\lim_{n\goto\infty}\|R_\eps r^\e_n-R_\eps r^\e\|_{\K_{C}}
\]
and the inclusion $R_\eps r^\e_n \in\K_\aC$, $n\in\N$ yields $R_\eps
r^\e\in \KK$. Vise versa, for any $k^\e\in\KK$ we see that
$R_{\eps^{-1}}k^\e\in\mathbb{K}^\a_\eps$. As a result, $R_\eps$
provides an isometrical isomorphism between the Banach spaces
$\mathbb{K}^\a_\eps$ and $\KK$. Then,
$U_\eps^\a(t):=R_{\eps^{-1}}\hT_\ren^\adot (t)R_\eps $
 will be a strongly continuous contraction semigroup on $\mathbb{K}^\a_\eps$
 with the generator $A_\eps^\a =R_{\eps^{-1}}\hL_\ren^\adot R_\eps$ and the domain
  $D(A_\eps^\a)=R_{\eps^{-1}}D(\hL_\ren^\adot)$.
 Moreover, since $\K_\aC \cap D(\hL_\ren^\adot)$ is a core for
 $\hL_\ren^\adot$, the set $K_{\eps^{-1}\aC}\cap D(A_\eps^\a)$ is a core for $A_\eps^\a$ and on this core the operator $A_\eps^\a$ coincides
with\ $\hL_\eps^\ast$. Note that, the semigroup $U_\eps^\a(t)$ is
the rigorous analog of $\hT_\eps^\ast$ in \eqref{ordersing}.

Let now $\{k_0,k_0^\e|\eps>0\}\subset K_\aC$. Then, by
\eqref{estforeps},
\begin{align}
&\bigl\Vert U_\eps^\a(t) R_{\eps^{-1}} k_{0}^\e
-R_{\eps^{-1}}\hT_V^\adot(t)
k_0\bigr\Vert_{\K_{\eps^{-1}C}}\label{dop1}\\=&\bigl\Vert R_\eps
\bigl(U_\eps^\a(t) R_{\eps^{-1}} k_0^\e -R_{\eps^{-1}}\hT_V^\adot(t)
k_0\bigr)\bigr\Vert_{\K_{C}}\nonumber\\ = &\bigl\Vert\hT_\ren^\adot
(t)k_0^\e -\hT_V^\adot (t)k_0\bigr\Vert_{\K_{C}} \leq A \eps
t\|k_0\|_{\K_\aC}+\|k_0^\e-k_0\|_{\K_C}.\nonumber
\end{align}
On the other hand,
\begin{align}
&\bigl\Vert U_\eps^a(t) R_{\eps^{-1}} k_0^\e
-R_{\eps^{-1}}\hT_V^\adot
(t)k_0\bigr\Vert_{\K_{\eps^{-1}C}}\label{dop2}\\=&\esssup_{\eta\in\Ga_0}\left\{
(\eps^{-1}C)^{-\n} \bigl\vert R_{\eps^{-1}}\hT_V^\adot
(t)k_0(\eta)\bigr\vert \left\vert\frac{U_\eps^a(t) R_{\eps^{-1}}
k_0^\e (\eta)}{R_{\eps^{-1}}\hT_V^\adot (t)k_0(\eta)}-1\right\vert
\right\}\nonumber\\= &\esssup_{\eta\in\Ga_0}\left\{ C^{-\n}
\bigl\vert\hT_V^\adot (t)k_0(\eta)\bigr\vert
\left\vert\frac{U_\eps^a(t) R_{\eps^{-1}} k_0^\e
(\eta)}{R_{\eps^{-1}}\hT_V^\adot (t)k_0(\eta)}-1\right\vert
\right\}.\nonumber
\end{align}

In particular, if
\begin{equation}\label{k-conv}
\lim_{\eps\goto0}\| k_0^\e-k_0\|_{\K_C}=0
\end{equation}
then \eqref{dop1}, \eqref{dop2} imply
\begin{equation}\label{equivaa}
\lim_{\eps\goto0}\frac{U_\eps^a(t) R_{\eps^{-1}} k_0^\e
(\eta)}{R_{\eps^{-1}}\hT_V^\adot (t)k_0(\eta)}=1 \quad \mathrm{for}
\ \la\mathrm{-a.a.} \ \eta\in\Ga_0.
\end{equation}
The equality \eqref{equivaa} is a rigorous realization of the
equivalence \eqref{ordersing} (with changes $k_0^{(\eps)}$ onto
$R_{\eps^{-1}} k_0^\e $).

Moreover, let $T>0$ and suppose that there exists a function
$c:\Ga_0\rightarrow(0;+\infty)$ such that
\begin{equation}\label{condveryspec}
q(\a,T):=\sup_{t\in [0;T]}\esssup_{\eta\in\Ga_0}\frac{c(\eta)}{\hT_V^\adot (t)k_0(\eta)}<+\infty.
\end{equation}
Then, using the equality
\begin{align*} &c(\eta){C^{-\n}}
\left\vert\frac{U_\eps^a(t) R_{\eps^{-1}} k_0^\e
(\eta)}{R_{\eps^{-1}}\hT_V^\adot(t) k(\eta)}-1\right\vert \\ = &
C^{-\n} \bigl\vert\hT_V^\adot(t) k_0(\eta)\bigr\vert
\left\vert\frac{U_\eps^a(t) R_{\eps^{-1}} k_0^\e
(\eta)}{R_{\eps^{-1}}\hT_V^\adot(t) k_0(\eta)}-1\right\vert
\frac{c(\eta)}{\bigl\vert\hT_V^\adot (t)k_0(\eta)\bigr\vert},
\end{align*}
we obtain that for such $k_0$ and for any $t\in[0;T]$
\begin{equation}
\left\Vert\frac{U_\eps^a(t) R_{\eps^{-1}} k_0^\e (\eta)}{R_{\eps^{-1}}\hT_V^\adot
(t)k_0(\eta)}-1\right\Vert_{C,c} \leq q(\a,T) A \eps  t\|k_0\|_{\K_\aC}+\|k_0^\e-k_0\|_{\K_C},
\end{equation}
where
\[
\|k\|_{C,c}=\esssup_{\eta\in\Ga_0} \frac{|k(\eta)|}{C^\n
c^{-1}(\eta)}.
\]
This gives that the equivalence \eqref{ordersing} may be shown in a
proper Banach space which is independent on $\eps$.

\begin{remark}
The condition \eqref{condveryspec} on $k_0$ is reasonable: for
example, for $k_0=e_\la(\rho_0)$, since, by the
Theorem~\ref{Vlasovscheme}, we have $\hT_V^\adot
(t)k_0(\eta)=e_\la(\rho_t,\eta)$, where $\rho_t$ satisfies
\eqref{CauchyVlasoveqn}; therefore, \eqref{condveryspec} holds for
any $|\rho_0(x)|\leq \aC$ such that
\[
\sup_{t\in[0;T]}\inf_{x\in\X}|\rho_t(x)|\geq
\rho_{\min} >0
\]
if we set $c(\eta)=e_\la(\rho_{\min},\eta)=\rho_{\min}^\n$.
Moreover, we obtain that $|\rho_t(x)|\leq \aC$.
The following example shows which function $k_0^\e$ one can choose in this case.
\end{remark}

\begin{example}
Let $k_0(\eta)=\rho_0^\n$, $\rho_0 \in(0;\aC)$. Let us consider the
scaled Lebesgue--Poisson exponent
$k_0^\e(\eta)=e_\la\bigl(\rho_0(1+\eps u(\cdot)),\eta\bigr)$, where
$\sup_{x\in\X}|u(x)|=\bar{u}<\infty$, $\eps>0$. Then for any
$\eps<\frac{\aC-\rho_0}{\rho_0 \bar{u}}$ we have
$|k_0^\e(\eta)|<(\aC)^\n$. Moreover,
\begin{align*}
& C^{-\n}\bigl\vert k_0^\e(\eta)-k_0(\eta) \bigr\vert=
\Bigl(\frac{\rho_0}{C}\Bigr)^\n\bigl\vert e_\la\bigl(1+\eps
u(\cdot),\eta\bigr)-1\bigr\vert\\\leq&
\Bigl(\frac{\rho_0}{C}\Bigr)^\n \eps \sup_{s\in(0;\eps)}\biggl\vert
\frac{d}{ds} e_\la\bigl(1+s u(\cdot),\eta\bigr)\biggr\vert\\ =&
\Bigl(\frac{\rho_0}{C}\Bigr)^\n \eps \sup_{s\in(0;\eps)}\biggl\vert
\sum_{x\in\eta}u(x) e_\la\bigl(1+s u(\cdot),\eta\setminus
x\bigr)\biggr\vert\\ \leq& \Bigl(\frac{\rho_0}{C}\Bigr)^\n \eps
\sum_{x\in\eta} \bar{u} e_\la\bigl(1+\eps \bar{u},\eta\setminus
x\bigr)\\ \leq& \Bigl(\frac{\rho_0}{C}\Bigr)^\n \eps \n \bar{u}
\biggl(1+\frac{\aC-\rho_0}{\rho_0 \bar{u}} \bar{u}\biggr)^{\n-1}\\
=&\eps \frac{\rho_0}{\aC}\n\a^\n\leq \eps \frac{\rho_0}{\aC}
\frac{-1}{e\ln\a}.
\end{align*}
As a result, $\|k_0^\e-k_0\|_{\K_C}\goto0$ as $\eps\goto0$.
\end{example}

\section*{Acknowledgement}
The financial support of DFG through the SFB 701 (Bielefeld
University), German-Ukrainian Projects 436 UKR 113/94, 436 UKR
113/97 and FCT through POCI and PTDC/MAT/67965/2006 is gratefully
acknowledged.


\begin{thebibliography}{10}

\bibitem{BH1977}
Braun, W., Hepp, K.: The {V}lasov dynamics and its fluctuations in
the {$1/N$}
  limit of interacting classical particles.
\newblock Comm. Math. Phys. \textbf{56}(2), 101--113 (1977)

\bibitem{CP2010}
Chayes, L., Panferov, V.: The {M}c{K}ean--{V}lasov equation in
finite volume.
\newblock J. Stat. Phys. \textbf{138}, 351--380 (2010)

\bibitem{Dob1979}
Dobrushin, R.L.: Vlasov equations.
\newblock Functional Anal. Appl. \textbf{13}(2), 115--123 (1979)

\bibitem{DSS1989}
Dobrushin, R.L., Sinai, Y.G., Sukhov, Y.M.: Dynamical systems of
statistical
  mechanics.
\newblock In: Y.G. Sinai (ed.) Ergodic Theory with Applications to Dynamical
  Systems and Statistical Mechanics, \emph{Encyclopaedia Math.~Sci.}, vol.~II.
  Springer, Berlin, Heidelberg (1989)

\bibitem{EN2000}
Engel, K.J., Nagel, R.: One-parameter semigroups for linear
evolution
  equations, \emph{Graduate Texts in Mathematics}, vol. 194.
\newblock Springer-Verlag, New York (2000)

\bibitem{EK1986}
Ethier, S.N., Kurtz, T.G.: Markov processes: Characterization and
convergence.
\newblock Wiley Series in Probability and Mathematical Statistics: Probability
  and Mathematical Statistics. John Wiley \& Sons Inc., New York (1986)

\bibitem{FKK2010}
Finkelshtein, D., Kondratiev, Y., Kutoviy, O.: Correlation~functions
evolution
  for the {G}lauber dynamics in continuum.
\newblock {\em In preparation}

\bibitem{FKK2009}
Finkelshtein, D., Kondratiev, Y., Kutoviy, O.: Individual based
model with
  competition in spatial ecology.
\newblock SIAM J. Math. Anal. \textbf{41}(1), 297--317 (2009)

\bibitem{FKK2010a}
Finkelshtein, D., Kondratiev, Y., Kutoviy, O.: Vlasov scaling for
stochastic
  dynamics of continuous systems.
\newblock J. Statistical Phys. \textbf{141}(1), 158--178 (2010)

\bibitem{FKKZ2010}
Finkelshtein, D., Kondratiev, Y., Kutoviy, O., Zhizhina, E.: An
approximative
  approach for construction of the {G}lauber dynamics in~continuum.
\newblock http://arxiv.org/abs/0910.4241. {\em To appear in:} Math. Nachr.

\bibitem{FKL2007}
Finkelshtein, D., Kondratiev, Y., Lytvynov, E.: Equilibrium
{G}lauber dynamics
  of continuous particle systems as a scaling limit of {K}awasaki dynamics.
\newblock Random Oper. Stoch. Equ. \textbf{15}(2), 105--126 (2007)

\bibitem{FKO2009}
Finkelshtein, D., Kondratiev, Y., Oliveira, M.J.: Markov evolutions
and
  hierarchical equations in the continuum. {I}. {O}ne-component systems.
\newblock J. Evol. Equ. \textbf{9}(2), 197--233 (2009)

\bibitem{GK2006}
Garcia, N.L., Kurtz, T.G.: Spatial birth and death processes as
solutions of
  stochastic equations.
\newblock ALEA Lat. Am. J. Probab. Math. Stat. \textbf{1}, 281--303
  (electronic) (2006)

\bibitem{GK2008}
Garcia, N.L., Kurtz, T.G.: Spatial point processes and the
projection method.
\newblock Progress in Probability \textbf{60}, 271--298 (2008)

\bibitem{GK1976}
Grewe, N., Klein, W.: Rigorous derivation of the {K}irkwood-{M}onroe
equation
  for small activity.
\newblock J. Mathematical Phys. \textbf{17}(5), 699--703 (1976)

\bibitem{HS1978}
Holley, R.A., Stroock, D.W.: Nearest neighbor birth and death
processes on the
  real line.
\newblock Acta Math. \textbf{140}(1-2), 103--154 (1978)

\bibitem{KM1941}
Kirkwood, J.G., Monroe, E.J.: Statistical mechanics of fusion.
\newblock J. Chem. Phys. \textbf{9}, 514--526 (1941)

\bibitem{KK2002}
Kondratiev, Y., Kuna, T.: Harmonic analysis on configuration space.
{I}.
  {G}eneral theory.
\newblock Infin. Dimens. Anal. Quantum Probab. Relat. Top. \textbf{5}(2),
  201--233 (2002)

\bibitem{KK2006}
Kondratiev, Y., Kutoviy, O.: On the metrical properties of the
configuration
  space.
\newblock Math. Nachr. \textbf{279}(7), 774--783 (2006)

\bibitem{KKM2008}
Kondratiev, Y., Kutoviy, O., Minlos, R.: On non-equilibrium
stochastic dynamics
  for interacting particle systems in continuum.
\newblock J. Funct. Anal. \textbf{255}(1), 200--227 (2008)

\bibitem{KKZ2006}
Kondratiev, Y., Kutoviy, O., Zhizhina, E.: Nonequilibrium
{G}lauber-type
  dynamics in continuum.
\newblock J. Math. Phys. \textbf{47}(11), 113,501, 17 (2006)

\bibitem{KL2005}
Kondratiev, Y., Lytvynov, E.: Glauber dynamics of continuous
particle systems.
\newblock Ann. Inst. H. Poincar\'e Probab. Statist. \textbf{41}(4), 685--702
  (2005)

\bibitem{KLR2007}
Kondratiev, Y., Lytvynov, E., R{\"o}ckner, M.: Equilibrium
{K}awasaki dynamics
  of continuous particle systems.
\newblock Infin. Dimens. Anal. Quantum Probab. Relat. Top. \textbf{10}(2),
  185--209 (2007)

\bibitem{KMZ2004}
Kondratiev, Y., Minlos, R., Zhizhina, E.: One-particle subspace of
the
  {G}lauber dynamics generator for continuous particle systems.
\newblock Rev. Math. Phys. \textbf{16}(9), 1073--1114 (2004)

\bibitem{Koz2008}
Kozlov, V.V.: The generalized {V}lasov kinetic equation.
\newblock Russian Math. Surveys \textbf{63}(4), 691--726 (2008)

\bibitem{Len1975}
Lenard, A.: States of classical statistical mechanical systems of
infinitely
  many particles. {I}.
\newblock Arch. Rational Mech. Anal. \textbf{59}(3), 219--239 (1975)

\bibitem{Len1975a}
Lenard, A.: States of classical statistical mechanical systems of
infinitely
  many particles. {II}. {C}haracterization of correlation measures.
\newblock Arch. Rational Mech. Anal. \textbf{59}(3), 241--256 (1975)

\bibitem{vNee1992}
van Neerven, J.: The adjoint of a semigroup of linear operators,
\emph{Lecture
  Notes in Mathematics}, vol. 1529.
\newblock Springer-Verlag, Berlin (1992)

\bibitem{Pen2008}
Penrose, M.D.: Existence and spatial limit theorems for lattice and
continuum
  particle systems.
\newblock Prob. Surveys \textbf{5}, 1--36 (2008)

\bibitem{Pre1975}
Preston, C.: Spatial birth-and-death processes.
\newblock Bull. Inst. Internat. Statist. \textbf{46}(2), 371--391, 405--408
  (1975)

\bibitem{Qi2008}
Qi, X.: A functional central limit theorem for spatial birth and
death
  processes.
\newblock Adv. in Appl. Probab. \textbf{40}(3), 759--797 (2008)

\bibitem{Spo1980}
Spohn, H.: Kinetic equations from {H}amiltonian dynamics:
{M}arkovian limits.
\newblock Rev. Modern Phys. \textbf{52}(3), 569--615 (1980)

\bibitem{Spo1991}
Spohn, H.: Large Scale Dynamics of Interacting Particles.
\newblock Texts and Monographs in Physics. Springer (1991)

\end{thebibliography}
\end{document}